\documentclass[11pt]{article}

\usepackage{csquotes}
\usepackage[english]{babel}
\usepackage[utf8]{inputenc}
\usepackage[T1]{fontenc}
\usepackage{lmodern}
\usepackage{fullpage}
\usepackage{graphicx}
\usepackage{todonotes}
\usepackage{color}
\definecolor{darkgreen}{rgb}{0,0.7,0}
\usepackage{hyperref}
\hypersetup{
    unicode=false,          % non-Latin characters in Acrobat’s bookmarks
    colorlinks=true,        % false: boxed links; true: colored links
    linkcolor=red,          % color of internal links (change box color with linkbordercolor)
    citecolor=darkgreen,        % color of links to bibliography
    filecolor=magenta,      % color of file links
    urlcolor=cyan           % color of external links
}

\usepackage{amsthm}
\usepackage{amsmath}
\usepackage{amssymb}
\usepackage{amsfonts}
\usepackage{mathrsfs}
\usepackage{mathtools}
\usepackage{verbatim}
\usepackage{footnote}
\usepackage[linesnumbered,vlined,ruled]{algorithm2e}
  \DontPrintSemicolon
  \SetKwInOut{Input}{Input}
  \SetKwInOut{Output}{Output}
  
  \SetCommentSty{mycommfont}

\usepackage{lineno}
\usepackage{caption}
\usepackage{framed}
\usepackage[framemethod=tikz]{mdframed}
\global\mdfdefinestyle{myframe}{leftmargin=.75in,rightmargin=.75in,linecolor=black,linewidth=1.5pt,innertopmargin=10pt,innerbottommargin=10pt} 
\usepackage{enumerate}

\usepackage{tikzsymbols}
\usepackage{thmtools,thm-restate}
\usepackage{nicefrac}

\usepackage[capitalize, nameinlink]{cleveref}
\usepackage[section]{placeins}
\crefname{theorem}{Theorem}{Theorems}
\Crefname{lemma}{Lemma}{Lemmas}
\Crefname{invariant}{Invariant}{Invariants}
\Crefname{claim}{Claim}{Claims}
\Crefname{observation}{Observation}{Observations}
\Crefname{algorithm}{Algorithm}{Algorithms}

\DeclareMathOperator{\poly}{poly}

\newcommand{\eqdef}{\stackrel{\text{\tiny\rm def}}{=}}

\newtheorem{theorem}{Theorem}
\newtheorem{lemma}[theorem]{Lemma}

\newtheorem{definition}[theorem]{Definition}

\newtheorem{invariant}[theorem]{Invariant}

\newcommand{\rb}[1]{\left( #1 \right)}
\newcommand{\ee}[1]{\mathbb{E} \left[ #1 \right]}
\newcommand{\prob}[1]{\Pr \left[ #1 \right]}

\newcommand{\cE}{\mathcal{E}}
\newcommand{\cO}{\mathcal{O}}
\newcommand{\cS}{\mathcal{S}}
\newcommand{\cC}{\mathcal{C}}

\newcommand{\hcE}{\hat{\mathcal{E}}}
\newcommand{\hcS}{\hat{\mathcal{S}}}
\newcommand{\hd}{\hat{d}}

\newcommand{\Oracle}{\cO^{\mathsf{set}}}
\newcommand{\OracleEl}{\cO^{\mathsf{el}}}
\newcommand{\opt}{\mathrm{OPT}}
\newcommand{\OPT}{\opt}
\newcommand{\Bin}{\mathrm{Bin}}
\newcommand{\sN}{\mathcal{N}}

\newcommand{\RecSplit}{\textsc{RecursiveSplitting}}

\newcommand{\sml}{\mathsf{small}}
\newcommand{\bg}{\mathsf{big}}
\newcommand{\cvr}{\mathsf{cover}}

\newcommand{\nrml}{\mathsf{normal}}
\newcommand{\logfive}{\log^{10}s \cdot \log^{10} t}
\newcommand{\logten}{\log^{20} s \cdot \log^{20} t}
\newcommand{\negfive}{O\rb{\frac{1}{(st)^{\log^5 s \cdot \log^5 t}}}}
\newcommand{\negfour}{O\rb{\frac{1}{(st)^{\log^4 s \cdot \log^4 t}}}}
\newcommand{\negthree}{O\rb{\frac{1}{(st)^{\log^3 s \cdot \log^3 t}}}}

\newcommand{\negonenoO}{\frac{1}{(st)^{\log s \cdot \log t}}}
\newcommand{\setsuperi}{\cS_{\bg,i}}
\newcommand{\setsnormal}{\cS_{\nrml,i}}
\newcommand{\setssmall}{\cS_{\sml}}
\newcommand{\xei}{X_e^{(i)}}
\newcommand{\degree}{\hat{d}_{B_i}}
\newcommand{\stlogslogt}{(st)^{O(\log{s} \cdot \log{t})}}

\thispagestyle{empty}
\setcounter{page}{0}

\title{Improved Local Computation Algorithm for Set Cover via Sparsification}
\author{Christoph Grunau%
			\thanks{ETH Zurich, \protect\url{cgrunau@student.ethz.ch}}
			\and
			Slobodan Mitrovi\'{c}%
			\thanks{CSAIL, MIT; \protect\url{slobo@mit.edu}}
			\and
			Ronitt Rubinfeld%
			\thanks{CSAIL, MIT and TAU; \protect\url{ronitt@csail.mit.edu}}
			\and
			Ali Vakilian%
			\thanks{University of Wisconsin - Madison; \protect\url{vakilian@wisc.edu}. 
			This work was done when the author was a graduate student at MIT.}}
\date{}

\begin{document}

\maketitle

\begin{abstract}
	%We study the set cover problem in a model of local computation.
	We design a Local Computation Algorithm (LCA) for the set cover problem. Given a set system where each set has size at most $s$ and each element is contained in at most $t$ sets, the algorithm reports whether a given set is in some fixed set cover whose expected size is $O(\log{s})$ times the minimum fractional set cover value. Our algorithm requires $s^{O(\log{s})} t^{O(\log{s} \cdot (\log \log{s} + \log \log{t}))}$ queries. This result improves upon the application of the reduction of [Parnas and Ron, TCS'07] on the result of [Kuhn et al., SODA'06], which leads to a query complexity of $(st)^{O(\log{s} \cdot \log{t})}$.

	To obtain this result, we design a parallel set cover algorithm that admits an efficient simulation in the LCA model by using a {\em sparsification} technique introduced in [Ghaffari and Uitto, SODA'19] for the maximal independent set problem. The parallel algorithm adds a random subset of the sets to the solution in a style similar to the PRAM algorithm of [Berger et al., FOCS'89]. However, our algorithm differs in the way that it never \emph{revokes} its decisions, which results in a fewer number of adaptive rounds. This requires a novel approximation analysis which might be of independent interest. 
\end{abstract}

\pagenumbering{gobble}

\newpage
\setcounter{page}{1}
\pagenumbering{arabic}

\section{Introduction}
The set cover problem is one of the classical problems in optimization and computer science. In this problem, we are given a universe of $n$ elements $\cE$ and a family of $m$ sets $\cS \subseteq 2^{\cE}$, and our goal is to find a minimum size {\em set cover} of $\cE$; i.e., a collection of sets in $\cS$ whose union is equal to $\cE$. 
The set cover problem is a well-studied problem with applications in many areas such as machine learning, data mining and operation research~\cite{gw-ceaas-97,sg-mcsma-09, kv-iclt-1994, bansal2009new}.  

A simple greedy algorithm for this problem, that repeatedly adds a set containing the largest number of yet uncovered elements to the cover, guarantees a $O(\log n)$-approximation~\cite{johnson1974approximation,lovasz1975ratio}. Unless $P = NP$, this approximation guarantee is within a constant factor compared  to the approximation guarantee of any polynomial-time algorithm~\cite{rs-sbepl-97, feige1998threshold, ams-acskr-06, m-pgcnsc-12, ds-aapr-14}. Unfortunately, this standard greedy algorithm does not scale very well for massive data sets (e.g., see Cormode~et al.~\cite{ckw-scavl-10} for an experimental evaluation of the greedy algorithm on large data sets). This difficulty has led to considerable interest in designing set cover algorithms for computational models tailored to process massive amounts of data such as parallel computation~\cite{berger1994efficient, blelloch2011linear, blelloch2012parallel}, streaming~\cite{sg-mcsma-09, kmvv-fgams-13, er-sssc-14, dimv-sccsc-14, cw-igpcs-16, himv-ttbsscp-16, aky-tbspscscp-16, a-tsatmpsscp-17, bem-aosacp-17, imruvy-fscsm-17}, sublinear time/query algorithms~\cite{grigoriadis1995sublinear, Koufogiannakis2014, IndykMRVY18} and local computation algorithms~\cite{nguyen2008constant, yoshida2012improved}. 

%. ADD MORE REFERENCES This greedy algorithm is inherently sequential, and different algorithm were developed for other models of computation, such as local, streaming or parallel computation.
%\stodo{Say something about applications, e.g., like in ``Fractional Set Cover in the Streaming Model''?}

In many scenarios we are interested in designing {\em extremely fast} algorithms for learning only a minuscule portion of a solution, rather than computing and storing the entire one.
%, e.g., for a given set $S$ we are interested only in knowing whether $S$ belongs to a {\em fixed} minimum set cover or not.  
This has led to the model of {\em local computation algorithms (LCAs)} introduced by Rubinfeld et al.~\cite{rubinfeld2011fast} and Alon et al.~\cite{alon2012space}. 
A LCA for the set cover problem provides oracle access to some fixed set cover $\cC$.\footnote{Note that for any given 
input, there may be many set covers. $\cC$ is a unique set cover that depends only on the input and the random bits 
used by the algorithm.} That is, given some arbitrary set $S$, the LCA needs to report whether $S$ is part of the set cover $\cC$ by only having primitive query access to the input set system $(\cS,\cE)$. The only shared state across different oracle calls is a tape of random bits. The performance is measured by the approximation guarantee and the query complexity of the LCA. In the LCA model, we further assume that each set in $\cS$ has size at most $s$ and each element in $\cE$ is contained in at most $t$ sets. \cref{section:LCA} contains additional information about the underlying computational model.

\iffalse
The aim of LCAs for the set cover problem is to decide fo More precisely, the aim of LCAs for the set cover problem is to design algorithms that given primitive {\em query access} (usually neighbor and adjacency queries) to the input set system $(\cS, \cE)$ and a string of random bits, with high probability, for each element $e$ {\em consistently} with respect to a {\em unique} set cover solution $\cC$ (picked arbitrarily by the LCA) output a set $S \in \cC$ that contains $e$. The performance of the LCA is measured based on the {\em quality} of solution (i.e., the number of sets in $\cC$) ``returned'' by the algorithm and the {\em query complexity} (the maximum number of queries per each element) of the algorithm. The most interesting regime in this setting is when each set has size at most $s$ for $s$ being much smaller than $|\cE|$, and each element is contained in at most $t$ sets for $t$ which is asymptotically much smaller than $|\cS|$. Then, the goal is to design LCA that produces output per request by using the number of queries depending only on $s$ and $t$, but not depending on $|\cS|$ or $|\cE|$.
Intuitively speaking, an LCA allows us to pretend as if we have query access to a fixed precomputed set cover of $(\cS,\cE)$.
\fi

\paragraph{Simulating the Greedy Algorithm}
There are two main approaches for designing LCAs for set cover. The first one is based on simulating the greedy algorithm using a randomized ranking technique due to Nguyen and Onak~\cite{nguyen2008constant} which was later improved by Yoshida~et al.~\cite{yoshida2012improved}. While the goal of~\cite{nguyen2008constant,yoshida2012improved} is the design of constant-time algorithms for approximating the size of an optimal set cover solution on bounded degree instances, one can turn their algorithms into $O(\log s)$-approximation LCAs with a respective query complexity of  $2^{O((st)^4\cdot 2^s)}$ ~\cite{nguyen2008constant} and $(st)^{O(s)}$ ~\cite{yoshida2012improved}. 

\paragraph{LCAs via Distributed Algorithms}
The other approach is via a generic reduction from distributed algorithms to LCAs by Parnas and Ron~\cite{parnas2007approximating}. Kuhn et al.~\cite{kuhn2006price} designed a distributed algorithm that computes a $O(1)$-approximate fractional solution in $O(\log{s} \cdot \log{t})$ many rounds. In each round, each set and each element only requires local information, i.e., a set uses only the information of the elements it contains and an element uses only the information of the sets that it is contained in. By the reduction of~\cite{parnas2007approximating}, the distributed algorithm of \cite{kuhn2006price} can be transformed into an LCA with a query complexity of $(s t)^{O(\log{s} \cdot \log{t})}$. The LCA only outputs the fractional contribution of $S$ in the corresponding fractional set cover solution for some given set $S$. However, via  the standard randomized rounding technique and a slight increase in the query complexity by a factor of $ s \cdot t$, the LCA can be turned into an LCA that outputs an $O(\log s)$-approximate {\em integral} solution for the set cover problem. 
There is no known distributed algorithm for set cover that uses fewer than $O(\log{s} \cdot \log{t})$ iterations of computation, and hence applying known reductions to any such algorithm would not improve the query complexity implied by \cite{kuhn2006price} and \cite{parnas2007approximating}. 
It is thus natural to wonder:

%\begin{center}
\begin{displayquote}
\emph{Is it possible to obtain LCAs for the set cover problem with a smaller query complexity \\ compared to the query complexity obtained via standard reductions to distributed algorithms?}
\end{displayquote}
In this paper, we answer this question affirmatively by presenting the following result.
\begin{theorem}\label{thm:main-LCA}
	There exists an LCA with a query complexity of $s^{O(\log{s})} t^{O(\log{s} \cdot (\log \log{s} + \log \log{t}))}$ that produces a set cover with an expected size of $O(\log s) \cdot \OPT$, where $\OPT$ denotes the value of an optimal fractional set cover solution.
\end{theorem}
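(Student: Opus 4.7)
The plan is to construct a parallel set cover algorithm that runs in only $O(\log s)$ adaptive rounds and that can be simulated locally in an LCA using the sparsification technique of Ghaffari and Uitto. Each LCA query ``is $S \in \cC$?'' is then answered by a bounded-depth recursive simulation of the parallel algorithm on $S$ and its relevant neighbors; the role of sparsification is to keep the per-round branching factor small enough to give the advertised bound $s^{O(\log s)} \cdot t^{O(\log s \cdot (\log \log s + \log \log t))}$.

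For the parallel algorithm I would follow the template of Berger et al. Bucket the sets by a geometric partition of their residual degree $\degree$, the number of currently uncovered elements they still contain. In round $i$ of at most $O(\log s)$ rounds, process the heaviest non-empty bucket: each set in it joins $\cC$ independently with probability tuned so that each uncovered element of such a set is covered at least once in expectation, but not too many times. In contrast to Berger et al., a set that joins $\cC$ is never removed from it. For the approximation analysis, I would split the sets of each round into the three classes suggested by the paper's macros $\setssmall$, $\setsuperi$, and $\setsnormal$. Normal sets are charged against a fractional dual on the elements they first cover; big sets contribute an amount that is geometrically decreasing across rounds because each round shrinks their bucket's uncovered mass by a constant factor; small sets are swept up directly at the end. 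Summing over $O(\log s)$ rounds yields the expected $O(\log s) \cdot \OPT$ bound.

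For the LCA simulation, whether $S$ ends up in $\cC$ depends only on $S$'s residual degree in each round, which in turn depends on whether a small number of ``competing'' sets, those sharing an element with $S$, joined $\cC$ earlier. A naive recursion branches by $\Theta(s t)$ per round, recovering the $\stlogslogt$ query bound from the Kuhn et al. reduction. Ghaffari-Uitto-style sparsification lets us show that, with probability $1 - \negonenoO$, only $t^{O(\log \log s + \log \log t)}$ competing sets are ``live'' at each round, while the remainder have either been eliminated by an earlier phase or are heavy enough that their fate is determined independently of $S$. Each recursion level therefore contributes a factor of $s \cdot t^{O(\log \log s + \log \log t)}$ instead of $s \cdot t$, and $O(\log s)$ rounds give the theorem.

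The main obstacle I foresee is making the no-revocation approximation argument and the sparsification argument play together. Because decisions are permanent, a set can enter $\cC$ on the basis of slightly stale degree information, so the charging scheme for the approximation ratio must be robust to such perturbations in residual degrees. At the same time, the sparsification must maintain its $1 - \negonenoO$ success invariant across every branch of the LCA's recursive simulation tree, which requires a delicate union bound with failure probabilities that stay negligible even after $O(\log s)$ levels of recursion. Aligning the two, by matching the notion of ``relevance'' used by the sparsification with the notion of ``charge'' used in the approximation analysis, is the crux.
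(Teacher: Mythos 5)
Your high-level architecture (a Berger-et-al.-style randomized parallel algorithm that never revokes decisions, Ghaffari--Uitto-style sparsification for the LCA simulation, and a small/normal/big decomposition for the approximation) matches the paper, but two of your load-bearing claims do not hold. First, the parallel algorithm cannot run in $O(\log s)$ adaptive rounds with a single inclusion probability per bucket: if the probability is large enough that a large set's uncovered elements are all covered in one shot, an element lying in many large sets is covered $\Theta(t)$ times and the charging loses a factor of $t$; if it is small, the bucket does not empty and the degree invariant fails. The paper's \cref{alg:base} therefore spends $\log t$ \emph{iterations} inside each of the $\log s$ stages, ramping the sampling probability geometrically from $2/t$ up to $1$; this ramp is exactly what makes \cref{Lemma:constant coverage} (an element is covered only $O(1)$ times, in expectation, in the iteration where it is first covered) true. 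The real parallel depth is $O(\log s\cdot\log t)$, and the improvement in the exponent does not come from shrinking the number of rounds.

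Second, the query accounting is structurally wrong even though the final expression matches. The factor $t^{O(\log\log s+\log\log t)}$ per stage is not a per-round branching bound: after $k$ iterations of a stage an uncovered element still lies in roughly $t/2^k$ large sets, far more than $t^{O(\log\log\cdot)}$ for small $k$. The paper instead simulates the $\log t$ iterations of a stage by recursive halving (\cref{alg:rec-split}): run the first $R/2$ iterations, re-sparsify both the set families and the element neighborhoods at the midpoint so that within a subproblem of length $R$ an element's relevant degree is only $\poly(\log s)\cdot 2^{O(R)}$, then recurse on the second half. This gives $Q(R)\le \poly(\log s)\cdot 2^{O(R)}\cdot Q(R/2)^2$ with base case $R=\log\log t$, and it is the \emph{squaring} (a query about the second half costs a query about the midpoint state, which itself costs $Q(R/2)$) that produces $t^{O(\log\log s+\log\log t)}$; this repeated re-sparsification is the idea your proposal is missing. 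Relatedly, the paper does not keep the sparsification invariant alive by a union bound over the recursion tree: it \emph{forces} the invariant by immediately adding any over-full set to the cover and letting any over-covered element pretend to be covered, so the query bound holds for every random tape, and all probabilistic loss is pushed into the approximation analysis via a round-by-round coupling with \cref{alg:generic-alg} (\cref{lemma:p_r-is-small}). Your concern about failure probabilities surviving every branch is resolved by this clipping, not by a union bound over branches.
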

This result is proved in \cref{sec:log-log-t}. In \cref{sec:our-techniques} we give an overview of our approach.
%\begin{theorem}
%	INSERT MAIN THEOREM FOR $O(t)$-approximation.
%\end{theorem}

\subsection{The Local Computation Model} \label{section:LCA}
We consider a graph representation of the input set system $(\cS, \cE)$ and adopt the definition of LCAs by Rubinfeld et al.~\cite{rubinfeld2011fast}. In the design of LCAs, the query access model to the input instance plays an important role. Here we consider standard {\em neighbor} queries: ``what is the $i$-th element in a given set $S$?'' and ``what is the $i$-th set containing a given element $e$?''. In our analysis, we assume that a query to a set (resp., element) returns all elements contained in (resp., all sets containing) the required set (resp., element). This will increase the query complexity only by a factor of $O(s+t)$.
%In this work, we consider the LCA model of computation. Since it was introduced by Rubinfeld et al.~\cite{rubinfeld2011fast} and Alon et al.~\cite{alon2012space}
\begin{definition}[LCA for set cover]
An LCA for the set cover problem is a (randomized) algorithm that has access to neighbor queries, a tape of random bits and local working memory. Given some set $S$ from the input set system $(\cS, \cE)$, the LCA algorithm returns whether $S$ is part of the set cover by making queries to the input. The answer must only depend on the given set $S$, the input set system $(\cS, \cE)$ and the random tape. Moreover, for a fixed tape of random bits, the answer given by the LCA to all sets, must be consistent with one particular valid set cover.

We remark that by running the described LCA on at most $t$ sets, we can determine which set is covering a given element in the approximate set cover constructed (hypothetically) by the LCA. 
\end{definition}
The main complexity measures of the LCA for set cover are the expected size of the solution, as well as the query complexity of the LCA which is defined as the maximum number of queries that the algorithm makes to return an answer for any arbitrary element $e$ in the input universe. 
Note that all the algorithms in this paper are randomized and, for any input set system, provide guarantees on the expected size of the solution, over the random tape. For simplicity, we describe all our randomized algorithms using full independence;
however, they can be implemented using a seed with a polylogarithmic number of random bits by the techniques and concentration bounds in~\cite{schmidt1995chernoff,vadhan2012pseudorandomness}.

\subsection{Roadmap and Our Technical Contributions}
\label{sec:our-techniques}
\paragraph{The base algorithm (\cref{sec:base-algorithm}).}
The starting point of our approach is a parallel algorithm that constructs a set cover in  $\log{s} \cdot \log{t}$ rounds. This algorithm, that we present in \cref{sec:base-algorithm}, consists of $\log{s}$ \emph{stages} and each stage consists of $\log{t}$ \emph{iterations}. The stages are used to process sets that have a large number of uncovered elements. That is, in stage $i$ we consider all the sets that have at least $s / 2^i$ uncovered elements. We call such sets \emph{large}.

After the execution of stage $i$, a large set is either added to the set cover constructed so far, or the number of uncovered elements it contains dropped below $s / 2^i$ during the execution of stage $i$. Iterations are used to progressively add large sets to the cover, while attempting to assure that an element is not covered by too many large sets. (We make this statement formal in \cref{Lemma:constant coverage}.) More precisely, in the $k$-th iteration, each currently large set is added with probability $2^k / t$. This step is applied for all the large sets simultaneously. Again, adding large sets with a probability of $2^k/t$ is supposed to ensure that an element contained in roughly $t / 2^k$ large sets will be covered $\Theta(1)$ times at the moment it gets covered for the first time. We show that having this kind of guarantee suffices to obtain an expected $O(\log{s})$-approximate minimum set cover.

The base algorithm has a parallel depth of $O(\log{s} \cdot \log{t})$ and directly simulating it in the LCA model would result in a query complexity of $\stlogslogt$. In spirit, our parallel algorithm is similar to the algorithm developed in \cite{berger1994efficient}, but there is also a crucial difference. After randomly picking a family of sets in each round, the algorithm in \cite{berger1994efficient} verifies whether the number of newly covered elements is large enough in comparison to the number of chosen sets. If yes, then they add the family of sets to the set cover. If not, they repeat the selection process. A random family of sets is good with a constant probability, yielding an O(log n) bound on the number of times that the algorithm needs to repeat a round. Therefore, the resulting algorithm has a larger parallel depth compared to our algorithm.

\paragraph{Estimating set sizes (\cref{sec:generic-algorithm}).}

In \cref{sec:generic-algorithm} we design and analyze \cref{alg:generic-alg}. Compared to \cref{alg:base}, \cref{alg:generic-alg} only estimates the number of remaining free elements of a set. This is a crucial step towards getting query-efficient LCAs. However, for reasons explained in \cref{sec:generic-algorithm}, \cref{alg:generic-alg} does not admit an efficient LCA simulation. The reason for including \cref{alg:generic-alg} is twofold. First, all the later algorithms that admit an efficient LCA simulation, are basically identical to \cref{alg:generic-alg}, except that they ensure that certain conditions hold. Second, the approximation guarantee of  \cref{alg:generic-alg} can be established in more or less the same way as the approximation guarantee for \cref{alg:base}. 
In contrast, establishing the approximation guarantee for \cref{alg:sqrt-t} and \cref{alg:main-alg} is much harder. Therefore, it is easier to establish the approximation guarantee of \cref{alg:sqrt-t} and \cref{alg:main-alg} by relating them to \cref{alg:generic-alg}.

\iffalse
Recall that in each of the $\log{s} \cdot \log{t}$ iterations, our base algorithm marks all the currently large sets. In the LCA simulation, this effectively means that a given set should test which of its (potentially $s$ many) elements are uncovered, and can be wasteful for an LCA simulation. Instead of pinning large sets precisely, we estimate the number of uncovered elements of each set, leading to a noisy, but more query-wise efficient way of detecting large sets. That is, in stage $i$, instead of testing whether each of the $O(s)$ elements of a given set is covered or not, we use a sample of $\poly(\log{s} \cdot \log{t}) \cdot 2^i$ many of them and estimate the number of free elements based on that sample.
\fi

\paragraph{Sparsification of the element-neighborhoods (\cref{sec:sqrt-log-t,sec:log-log-t}).}
Estimating the number of uncovered elements within a given set, instead of counting them exactly, reduces the number of queries that a set has to perform. However, this optimization in terms of set-size estimates does not affect the number of direct set-queries that an element has to perform. To achieve our advertised query complexity, we also reduce the number of queries directly performed by an element. Our high-level approach for reducing the number of queries for elements follows the lines of the work \cite{ghaffari2019sparsifying,chang2019complexity} (and in \cref{sec:sqrt-log-t} also the work \cite{onak2018round}). \cite{ghaffari2019sparsifying} design an LCA for maximal independent set (MIS) and maximal matching. These LCAs do not have a generic reduction to other models.

One of the main challenges in adapting the approach of \cite{ghaffari2019sparsifying} to the set cover problem is that, in the case of MIS or maximal matching it is needed to handle only one type of objects, i.e., vertices or edges. However, in the case of set cover, our algorithm handles sets and elements simultaneously.

%\subsection{Other Related Work}
%\cite{rajagopalan1998primal,blelloch2011linear}

%\stodo{Some results for sublinear computation: \cite{buadoiu2005facility,chazelle2005approximating,indyk1999sublinear,rubinfeld2011sublinear}}

%\subsection{Organization}
%\stodo{I could add a couple of sentences here, but I think that above we have already provided organization. What do you, guys, think?}

\section{Preliminaries}
\paragraph{Notation.}
We will use $\cS$ (respectively, $\cE$) with added sub- and super-scripts to refer to subsets of $\cS$ (respectively, $\cE$). Given a set $S$ at some step of an algorithm, we use $d(S)$ to denote the number of uncovered elements of $S$ in the current set cover. When we use $\hd(S)$, it refers to an estimated number of uncovered elements of $S$ in the current set cover.

Uncovered elements are also referred to by \emph{free}. Subscripts $i$, $j$, and $k$ will have the following meaning: $i$ refers to a stage, $j$ refers to a phase, and $k$ refers to an iteration.

%\paragraph{Query Access Model.} In the design of LCAs, the query access model to the input instance plays an important role. Here we consider standard {\em neighbor} queries: ``what is the $i$-th element in a given set $S$'' and ``what is the $i$-th set containing a given element $e$''?. In our analysis, we assume that a query to a set (resp., element) returns all elements contained in (resp., all sets containing) the quired set (resp., element). This will increase the query complexity only by a factor of $O(s+t)$. 
\paragraph{Relevant Concentration Bounds.} Throughout the paper, we will use the following well-known variants of Chernoff bound.
\begin{theorem}[Chernoff bound]\label{lemma:chernoff}
	Let $X_1, \ldots, X_k$ be independent random variables taking values in $[0, 1]$. Let $X \eqdef \sum_{i = 1}^k X_i$ and $\mu \eqdef \ee{X}$. Then,
	\begin{enumerate}[(A)]
		%\item\label{item:delta-at-most-1} For any $\delta \in [0, 1]$ it holds $\prob{|X - \mu| \ge \delta \mu} \le 2 \exp\rb{- \delta^2 \mu / 3}$.
		\item\label{item:delta-at-most-1-le} For any $\delta \in [0, 1]$ it holds $\prob{X \le (1 - \delta) \mu} \le \exp\rb{- \delta^2 \mu / 2}$.
		%\item\label{item:delta-at-most-1-ge} For any $\delta \in [0, 1]$ it holds $\prob{X \ge (1 + \delta) \mu} \le \exp\rb{- \delta^2 \mu / 3}$.
		\item\label{item:delta-at-least-1} For any $\delta \ge 1$ it holds $\prob{X \ge (1 + \delta) \mu} \le \exp\rb{- \delta \mu / 3}$.
	\end{enumerate}
\end{theorem}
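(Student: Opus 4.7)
The plan is to use the standard moment-generating-function (exponential Markov) method for both parts. For part (B), I would fix a parameter $t > 0$ and apply Markov's inequality to $e^{tX}$:
\[ \prob{X \ge (1+\delta)\mu} = \prob{e^{tX} \ge e^{t(1+\delta)\mu}} \le e^{-t(1+\delta)\mu} \cdot \ee{e^{tX}}. \]
By independence, $\ee{e^{tX}} = \prod_{i=1}^k \ee{e^{tX_i}}$. To handle the individual factors I would exploit $X_i \in [0,1]$ together with convexity of $x \mapsto e^{tx}$: on $[0,1]$ one has $e^{tx} \le 1 + (e^t - 1)x$, so $\ee{e^{tX_i}} \le 1 + (e^t-1)\mu_i \le \exp((e^t-1)\mu_i)$. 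Multiplying over $i$ and using $\mu = \sum_i \mu_i$ gives $\ee{e^{tX}} \le \exp((e^t - 1)\mu)$.

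Plugging this back in and optimizing over $t$ (the optimum is $t = \ln(1+\delta)$) produces the classical form $\prob{X \ge (1+\delta)\mu} \le \rb{e^\delta / (1+\delta)^{1+\delta}}^\mu$. To convert this into the stated $\exp(-\delta\mu/3)$ bound, it remains to verify the one-variable inequality $(1+\delta)\ln(1+\delta) - \delta \ge \delta/3$ for $\delta \ge 1$, which I would check by evaluating the base case $\delta = 1$ (where $2\ln 2 - 1 \approx 0.386 > 1/3$) and differentiating the left-hand side to see that it grows at least as fast as $\delta/3$ thereafter.

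Part (A) is entirely analogous, but with a negative exponent. Writing $\prob{X \le (1-\delta)\mu} = \prob{e^{-tX} \ge e^{-t(1-\delta)\mu}}$ for $t > 0$ and applying the same convexity bound $e^{-tx} \le 1 + (e^{-t}-1)x$ yields $\ee{e^{-tX}} \le \exp((e^{-t}-1)\mu)$. Optimizing with $t = -\ln(1-\delta)$ gives $\prob{X \le (1-\delta)\mu} \le \rb{e^{-\delta}/(1-\delta)^{1-\delta}}^\mu$, and the stated bound $\exp(-\delta^2\mu/2)$ follows from the standard scalar inequality $(1-\delta)\ln(1-\delta) \ge -\delta + \delta^2/2$ on $[0,1]$, provable by Taylor expansion or a direct second-derivative argument.

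The only non-routine ingredients are the pair of final one-variable inequalities, and these constitute the ``main obstacle'' only in the sense of careful bookkeeping; conceptually the entire argument is the textbook MGF template applied to $[0,1]$-valued variables via the single convexity step $e^{tx} \le 1 + (e^t-1)x$, and I would not expect any deeper difficulty.
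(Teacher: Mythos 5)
Your proposal is the standard moment-generating-function proof and it is correct: the convexity step $e^{tx} \le 1 + (e^t - 1)x$ on $[0,1]$ correctly reduces the general bounded case to the Poisson-type bound $\ee{e^{tX}} \le \exp((e^t-1)\mu)$, and both scalar inequalities $(1+\delta)\ln(1+\delta) - \delta \ge \delta/3$ for $\delta \ge 1$ and $(1-\delta)\ln(1-\delta) \ge -\delta + \delta^2/2$ on $[0,1]$ check out. The paper itself gives no proof of this statement — it is quoted as a well-known Chernoff bound — so there is nothing to compare against; your argument is the textbook derivation one would cite.
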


\section{The Base Algorithm}
\label{sec:base-algorithm}
We now fully state and analyze the base algorithm briefly described in \cref{sec:our-techniques}. This algorithm is presented as \cref{alg:base}. As a reminder, this algorithm constructs a set cover in $\log (s)$ many stages (\cref{line:base-stages} of \cref{alg:base}) and each stage consists of $\log (t)$ many iterations (\cref{line:base-iterations} of \cref{alg:base}). \cref{alg:base} maintains the invariant that no set contains more than $s / 2^i$ uncovered elements at the end of stage $i$. In iteration $k$ of stage $i$, each set containing at least $s / 2^i$ uncovered elements is added to the set cover with a probability of $2^k / t$. This step is executed for all the sets simultaneously. In \cref{sec:base-correctness} we show that the algorithm indeed outputs a set cover, while in \cref{section:approx-base-algorithm} we analyze the approximation guarantee. Our approximation analysis diverts from the prior work known to the authors and might be of independent interest to the reader.
%This algorithm can be viewed as a parallelized version of~\cite{berger1994efficient} for LCA model. Both our algorithm and the algorithm of~\cite{berger1994efficient} go through the same stages and iterations; however, while the PRAM model allows them to check whether the sampled collection of in each iteration have low overlap, here in LCA model we can afford this check and our approximation analysis is quite different.
\begin{algorithm}	
	\For{stage $i = 1$ to $\log{s}$ \label{line:base-stages}}{
		\For{iteration $k = 1$ to  $\log{t}$ \label{line:base-iterations}}{
			\For{each sets $S$ in parallel}{
				\If(\tcp*[h]{$d(S)$ denotes the number of free (or yet uncovered) elements in $S$}){$d(S) \geq s/2^i$ \label{line:base-large-set}}{ 
					{\bf add} $S$ to $\cS_{\cvr}$ with probability $2^k/t$. \label{Line:Alg1 add sets}
				}
			}				
		}
	}
	\Return{$\cS_{\cvr}$}
	\caption{The base algorithm for our LCAs which runs in $O(\log{s} \cdot \log{t})$ iterations.}\label{alg:base}
\end{algorithm}

\subsection{Correctness}
\label{sec:base-correctness}
It is not hard to see that \cref{alg:base} indeed constructs a set cover. This stems from the fact that in iteration $k = \log{t}$, each set having at least $s / 2^i$ uncovered elements at that point is added to the cover. A more elaborate argument is provided in the proof of the following claim.
\begin{lemma}
	For $s \geq 2$ and $t \geq 2$, \cref{alg:base} returns a valid set cover.
\end{lemma}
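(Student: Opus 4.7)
The plan is to show that after all stages complete, every set has zero uncovered elements, which in the set cover setting (where each element lies in at least one set) immediately implies that every element is covered.

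First, I would observe the key deterministic fact: in the final iteration $k = \log t$ of any stage $i$, the sampling probability $2^k / t$ equals $1$. Therefore, every set $S$ that still satisfies $d(S) \geq s / 2^i$ at the beginning of that iteration is added to $\cS_{\cvr}$ with probability one. This gives the stage-level invariant: at the end of stage $i$, every set $S$ that was not placed into $\cS_{\cvr}$ satisfies $d(S) < s / 2^i$. I would establish this by induction on $i$, noting that $d(S)$ is monotonically non-increasing throughout the execution (adding sets to the cover only covers more elements), so once a set is added its elements are covered forever, and the threshold test in \cref{line:base-large-set} of \cref{alg:base} forces any still-large set to be added in the final iteration.

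Next, I would plug in $i = \log s$: the stage-level invariant gives $d(S) < s / 2^{\log s} = 1$ for every set $S$ not in $\cS_{\cvr}$. Since $d(S)$ is a nonnegative integer, this means $d(S) = 0$, so every set ends with zero free elements. For any set $S$ that \emph{is} in $\cS_{\cvr}$, all of its elements are by definition covered. Hence after the algorithm terminates, no set contains an uncovered element at all.

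Finally, I would conclude that any element $e \in \cE$ is covered. Indeed, in a meaningful set cover instance each element belongs to at least one set $S \in \cS$ (otherwise no valid set cover exists and the statement is vacuous); since that $S$ has no free elements at termination, $e$ must already be covered by some set in $\cS_{\cvr}$. The conditions $s \geq 2$ and $t \geq 2$ are used solely to ensure that $\log s \geq 1$ and $\log t \geq 1$, so that the outer and inner loops execute at least one iteration and the deterministic argument at $k = \log t$ makes sense. I do not anticipate any real obstacle here; the only subtlety worth double-checking is the interplay between "a set is added to $\cS_{\cvr}$" and "its free-element count drops to zero," which is handled cleanly by the monotonicity of $d(S)$ once a set joins the cover.
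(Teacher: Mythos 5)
Your proof is correct and rests on the same key observation as the paper's: in the final iteration $k = \log t$ the selection probability $2^{\log t}/t$ is $1$, and in the final stage the threshold $s/2^{\log s}$ is $1$, so every still-large set is deterministically added. You package this slightly differently (a set-centric invariant that all remaining sets have $d(S)=0$ at termination, from which element coverage follows) whereas the paper argues element-centrically (any still-free element $e$ has some uncovered set $S \ni e$ that gets forced into the cover), but the underlying argument is essentially identical.
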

\begin{proof}
	It suffices to show that each element is covered by some set in the solution. Consider an arbitrary element $e$. If $e$ was covered before the last iteration of the last stage, then we are fine. Otherwise, there is at least one set S which contains $e$ and was not chosen before the last iteration of the last stage. As this set has at least one free element, it is added to the set cover with a probability of $2^{\log t}/t = 1$. Hence, $e$ will be covered by $S$.
\end{proof}

\subsection{Analysis of the Approximation Guarantee}\label{section:approx-base-algorithm}
In this section we analyze the approximation guarantee of \cref{alg:base}. Our analysis shows that the algorithm matches, up to a constant factor and in expectation, the guarantee of the best possible polynomial-time algorithm unless $\mathrm{P} = \mathrm{NP}$.
\begin{theorem} \label{Theorem:approx_basic}
	Let $(\cS,\cE)$ be some set cover instance and let $\cS_{\cvr}$ denote the solution returned by~\cref{alg:base}. Furthermore, let $\opt$ denote the value of an optimal fractional set cover solution for $(\cS,\cE)$. Then, $\ee{|\cS_{\cvr}|} = O(\log s)\cdot \opt$.
\end{theorem}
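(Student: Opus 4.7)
The plan is to separate the argument into two independent pieces: (1) a per-iteration bound that converts sets added in iteration $(i,k)$ into freshly covered elements via the ``constant coverage'' lemma (\cref{Lemma:constant coverage}), and (2) an LP-duality argument that charges the total ``weighted freshly covered mass'' against the fractional optimum $\opt$.

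For piece (1), fix an iteration $(i,k)$, let $B_{i,k}$ be the set of sets added in this iteration, $d_i(S)$ the number of free elements of $S$ at the start of the iteration, and $n_{i,k}$ the number of elements first covered during this iteration. The threshold check on \cref{line:base-large-set} guarantees $d_i(S) \geq s/2^i$ for every $S \in B_{i,k}$, hence $|B_{i,k}| \cdot s/2^i \leq \sum_{S \in B_{i,k}} d_i(S)$. Re-indexing this sum by elements, every element $e$ that was free at the start contributes its multiplicity $m_{i,k}(e) := |\{S \in B_{i,k} : e \in S\}|$, and only freshly-covered elements contribute a nonzero term. Taking expectations and invoking \cref{Lemma:constant coverage}, which controls the expected multiplicity of an element at its moment of first coverage by $O(1)$, yields
\[
\E[|B_{i,k}|] \cdot s/2^i \;\leq\; \E\bigl[\textstyle\sum_{e \text{ first covered in }(i,k)} m_{i,k}(e)\bigr] \;\leq\; O(1) \cdot \E[n_{i,k}].
\]
Summing over $k$ and writing $A_i := \sum_k |B_{i,k}|$ and $n_i := \sum_k n_{i,k}$ gives $\E[A_i] \leq O(1) \cdot \E[n_i] \cdot 2^i/s$.

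For piece (2), I would construct a feasible dual LP solution. For each element $e$, let $\sigma(e) \in \{1,\dots,\log s\}$ denote the stage in which $e$ is first covered (well-defined since the algorithm returns a valid set cover), and set $y_e := 2^{\sigma(e)}/(2s \log s)$. To check feasibility, observe the stage invariant guaranteed by \cref{alg:base}: after stage $i-1$ every set has fewer than $s/2^{i-1}$ free elements, so $|\{e \in S : \sigma(e)=i\}| \leq s/2^{i-1}$ (with the trivial bound $s$ for $i=1$). Summing the resulting geometric series gives $\sum_{e\in S} y_e \leq \sum_{i=1}^{\log s} (s/2^{i-1}) \cdot 2^i/(2s \log s) = 1$. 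By weak LP duality, $\opt \geq \sum_e y_e = (2 \log s)^{-1} \sum_i n_i \cdot 2^i/s$, so $\sum_i n_i \cdot 2^i/s \leq 2 \log s \cdot \opt$ deterministically on every outcome of the algorithm.

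Combining the two pieces,
\[
\E[|\cS_{\cvr}|] \;=\; \textstyle\sum_i \E[A_i] \;\leq\; O(1)\cdot\E\bigl[\textstyle\sum_i n_i \cdot 2^i/s\bigr] \;\leq\; O(\log s) \cdot \opt.
\]
The main obstacle is piece (1): the naive bound from the threshold check alone only gives $|B_{i,k}| \cdot s/2^i \leq |\cE|$, which is far too weak. Replacing $|\cE|$ by (essentially) the newly-covered elements relies crucially on the fact that when an element is first covered, the expected number of sets simultaneously covering it is $O(1)$, which is exactly the content of \cref{Lemma:constant coverage}. Once this is in hand, piece (2) is a routine LP-duality computation tailored to the geometric threshold schedule $s/2^i$.
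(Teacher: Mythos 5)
Your high-level architecture — use \cref{Lemma:constant coverage} to convert sets added in a stage into covered-element ``mass'', then charge that mass against $\opt$ — is the right one, but the per-iteration claim at the heart of piece~(1) is false as stated, and the version you actually sum up does not follow from \cref{Lemma:constant coverage}.

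Concretely, you assert $\E\bigl[\sum_{e \text{ first covered in }(i,k)} m_{i,k}(e)\bigr] \leq O(1) \cdot \E[n_{i,k}]$, attributing to \cref{Lemma:constant coverage} that ``the expected multiplicity of an element at its moment of first coverage'' is $O(1)$. But \cref{Lemma:constant coverage} bounds $\E[X_e] \leq 5$ where $X_e$ is the multiplicity \emph{times the indicator of being covered}; it is an unconditional bound, not a conditional one, and it is a bound per element over the whole stage, not per iteration. At the iteration level the inequality is simply wrong: take an element $e$ whose $t$ containing sets all remain large throughout the stage. Given survival to iteration $k$, the number of sets covering $e$ is $\mathrm{Bin}(t, 2^k/t)$, so the conditional expected multiplicity is $2^k$ while the conditional coverage probability is $1-e^{-2^k} \approx 1$; the per-iteration ratio grows like $2^k$, not $O(1)$. (This is balanced in \cref{Lemma:constant coverage} only because the probability of \emph{reaching} iteration $k$ is $e^{-\Omega(2^k)}$, so the weighted sum over $k$ stays constant — but that rescues the sum, not your per-iteration step.) Even the summed form you need, $\E[\sum_e X_e^{(i)}] \leq O(1)\cdot\E[\text{\# elements first covered in stage }i]$, is a conditional strengthening $\E[X_e^{(i)}] \lesssim \Pr[e \text{ covered in stage }i]$ that the paper never proves, and you would have to prove it separately.

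The repair is to calibrate against the right quantity: bound $\E[\sum_e X_e^{(i)}]$ by $O(1)$ times the number $n_i$ of elements \emph{free at the start of stage $i$}, not the number freshly covered. Since $X_e^{(i)}=0$ for elements already covered before stage $i$, and \cref{Lemma:constant coverage} gives $\E[X_e^{(i)}] \leq 5$ conditionally on the past, you get $\E[A_i]\cdot s/2^i \leq 5\,\E[n_i]$ cleanly. Once you do this, piece~(2) becomes unnecessary: the algorithm's stage invariant that no unselected set has more than $s/2^{i-1}$ free elements at the start of stage $i$ gives directly $\opt \geq n_i / (s/2^{i-1})$, i.e.\ $n_i \cdot 2^i/s \leq 2\opt$ deterministically, and summing over stages yields the claimed $O(\log s)\cdot\opt$. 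This is essentially the paper's proof of \cref{Theorem:approx_basic} via \cref{Lemma:first_iteration}. Your dual LP construction with $y_e = 2^{\sigma(e)}/(2s\log s)$ is a correct and somewhat slicker way to aggregate the charge across stages, but it is orthogonal to the real issue, which is picking the right normalizer in piece~(1).
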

To prove this claim, we need the following result that we establish in the rest of this section.
\begin{lemma} \label{Lemma:first_iteration}
	Let $(\cS,\cE)$ be a set cover instance with a maximal set size of $s$ and let $\cS_{\cvr,1}$ denote the number of sets that are added to $\cS_{\cvr}$ during the first stage of \cref{alg:base}. Then, $\mathbb{E}[|\cS_{\cvr,1}|] = O(n/s)$.
\end{lemma}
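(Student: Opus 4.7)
The plan is to reduce the bound on $|\cS_{\cvr,1}|$ to a double-counting identity combined with the constant coverage guarantee of~\cref{Lemma:constant coverage}. For each element $e \in \cE$, let $T_e$ be the iteration of stage~$1$ in which $e$ first gets covered (set $T_e = \infty$ if $e$ is not covered during stage~$1$), and let $CC(e)$ denote the number of sets of $\cS_{\cvr,1}$ that contain $e$ and are added in iteration $T_e$, with $CC(e) = 0$ if $e$ is not covered during stage~$1$. This is exactly the quantity controlled by the constant coverage property.

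First I would verify the identity
$$\sum_{e \in \cE} CC(e) \;=\; \sum_{S \in \cS_{\cvr,1}} d_S,$$
where $d_S$ denotes the number of free elements of $S$ at the moment $S$ is added. A pair $(S,e)$ with $S \in \cS_{\cvr,1}$, $e \in S$, and $e$ free at the moment $S$ is added contributes exactly once to each side: to $d_S$ because $e$ is then a free element of $S$, and to $CC(e)$ because $S$ is one of the sets that simultaneously first-cover $e$, i.e.\ $S$ is added in iteration $T_e$. Conversely, every contribution to either side arises this way. Since $S$ is added in stage~$1$ only in an iteration where it is considered large, we have $d_S \geq s/2$ for every $S \in \cS_{\cvr,1}$, and hence
$$\sum_{e \in \cE} CC(e) \;\geq\; \frac{s}{2}\,|\cS_{\cvr,1}|.$$

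Next I would invoke \cref{Lemma:constant coverage} to conclude that $\E[CC(e)] = O(1)$ for every $e \in \cE$: the probabilities $p_k = 2^k/t$ are calibrated so that at the iteration in which $e$ becomes first covered, the expected number of simultaneously chosen large sets that contain $e$ is only $\Theta(1)$. Summing over the at most $n$ elements and taking expectations in the displayed inequality yields $\tfrac{s}{2}\,\E[|\cS_{\cvr,1}|] \leq O(n)$, hence $\E[|\cS_{\cvr,1}|] = O(n/s)$.

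The main obstacle is deferred entirely to the constant coverage lemma itself, as the remaining steps are elementary counting. Its delicate content is that, conditional on $e$ still being free at the start of iteration $k$, the number of large sets containing $e$ is typically $O(t/2^{k-1})$: otherwise $e$ would already have been covered with probability $1 - e^{-\Omega(1)}$ in iteration $k-1$. This forces the expected number of sets chosen in iteration $T_e$ that contain $e$ to be $O(1)$, giving $\E[CC(e)] = O(1)$. Once the constant coverage lemma is in hand, the present claim follows immediately from the double-counting argument above.
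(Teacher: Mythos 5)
Your proposal is correct and takes essentially the same route as the paper: your $CC(e)$ is precisely the paper's $X_e$, and the double-counting identity you verify is the (slightly more detailed) justification of the paper's one-line inequality $(s/2)\,|\cS_{\cvr,1}| \leq \sum_{e \in \cE} X_e$, followed by the same appeal to \cref{Lemma:constant coverage}.
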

\begin{proof}[Proof of \cref{Theorem:approx_basic}]
In \cref{Lemma:first_iteration}, we show that \cref{alg:base} adds at most $O(n/s)$ many sets, in expectation, to the set cover during the first stage. Here we show that this observation suffices to prove the theorem. To see why, let $(\cS_i,\cE_i)$  denote the set cover instance with $\cE_i$ being the set of all uncovered elements prior to the $i$-th stage and $\cS_i = \{S \cap \cE_i|S \in \cS\}$. Let $n_i = |\cE_i|$ denote the number of uncovered elements prior to the $i$-th stage. Note that the maximal set size in $S_i$ is at most $s_i := s/2^{i-1}$. Let $\opt_i$ denote the value of an optimal fractional set cover for $(\cS_i,\cE_i)$. %We have:
\begin{align}\label{eq:incremental-sol}
n_i/s_i \leq \opt_i \leq \opt
\end{align}

Let $\cS_{\cvr,i}$ denote the collection of sets that \cref{alg:base} picks in stage $i$. Note that $\ee{|\cS_{\cvr,i}|}$ is equal to the expected number of sets that \cref{alg:base} would pick in the first stage when given $(\cS_i,\cE_i)$ as an input. Thus, by~\cref{eq:incremental-sol}, $\ee{|\cS_{\cvr,i}|} = O(n_i/s_i) = O(\opt)$ which implies that 
\begin{align*}
	\ee{|\cS_{\cvr}|} = \sum_i \ee{|\cS_{\cvr,i}|} \leq \log(s) \cdot O(\opt) = O(\log s) \cdot \opt.
\end{align*}
\end{proof}
The following is the main technical lemma that we use to prove~\cref{Lemma:first_iteration}.
\begin{lemma}\label{Lemma:constant coverage}
Let $e \in \cE$ be an arbitrary element and $X_e$ be a random variable which is equal to $0$ if $e$ does not get covered during the first stage and otherwise is equal to the number of sets that contain $e$ and are added to $\cS_{\cvr}$ in the same iteration in which $e$ is covered for the first time. Then, $\mathbb{E}[X_e] \leq 5$.   
\end{lemma}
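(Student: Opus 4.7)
Plan: Fix $e$. Since stage $1$ only ever adds sets with at least $s/2$ currently-free elements, and $d(S)$ is monotonically non-increasing, only sets containing $e$ whose initial size is at least $s/2$ can ever be added in stage $1$; call this number $L$. If $L=0$, then $X_e\equiv 0$, so I assume $L\ge 1$. Introduce $p_k:=2^k/t$, the indicator $\phi_k:=\mathbf{1}[e \text{ uncovered at start of iter } k]$, and the random variable $\ell_k:=$ number of still-large sets in the neighborhood of $e$ at the start of iteration $k$; then $\ell_k$ is non-increasing in $k$ and bounded by $L$. Conditionally on the history up to iteration $k-1$, the number $A_k$ of sets containing $e$ added in iteration $k$ is distributed as $\mathrm{Bin}(\ell_k, p_k)$. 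Writing $X_e=\sum_k A_k \phi_k$ (the factor $\phi_k$ isolates the first iteration that covers $e$) and taking conditional expectations yields
$$\E[X_e] \;=\; \sum_{k=1}^{\log t}\E[\phi_k\,\ell_k\,p_k].$$

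The main budget I plan to spend against is the telescoping identity
$$\sum_k \E\bigl[\phi_k\,(1-(1-p_k)^{\ell_k})\bigr] \;=\; \Pr[e \text{ is covered in stage } 1] \;\le\; 1,$$
which holds because $\phi_k-\phi_{k+1}$ is exactly the event that some set containing $e$ is added in iteration $k$. I then plan to split each term of $\E[X_e]$ according to whether the random load $\ell_k p_k$ is at most $1$ (low load) or exceeds $1$ (high load). In the low-load regime, the elementary bound $(1-p_k)^{\ell_k}\le 1-\ell_k p_k+\tfrac{1}{2}(\ell_k p_k)^2$ gives $\ell_k p_k\le 2(1-(1-p_k)^{\ell_k})$, so the contribution of low-load iterations is at most $2$ by the telescoping identity above.

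The main obstacle is the high-load regime: there $\ell_k p_k$ can exceed $1$ by an arbitrary amount, so the pointwise comparison fails. I plan to combine two observations: first, whenever $\ell_k p_k>1$ we have $1-(1-p_k)^{\ell_k}\ge 1-e^{-1}$, so every time we enter the high-load regime the event $\phi$ is cut by a factor of at least $1/e$; second, the deterministic bound $\ell_k p_k\le L\cdot 2^k/t$ controls the maximum size of any single term. Coupling these, the high-load contribution should reduce to a geometric-series estimate of the form $\sum_{m\ge 0}2^m e^{-2^m}=O(1)$, giving an absolute-constant bound (which I expect to tune to $\le 3$, so that the total is $\le 2+3=5$). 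The technically delicate point, which I expect to be the crux, is that $\ell_k p_k$ may cross the threshold $1$ non-monotonically along a sample path, because $\ell_k$ can shrink whenever adds outside the neighborhood of $e$ cover elements inside the large sets containing $e$. My plan to handle this is to introduce the stopping time "first iteration at which $\ell_k p_k>1$ while $\phi_k=1$" and to couple the subsequent evolution to the idealized process in which $\ell_k\equiv L$, showing the expected remaining contribution after the stopping time is bounded by a constant; summing with the low-load bound yields $\E[X_e]\le 5$.
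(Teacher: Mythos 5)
Your setup and your low-load half are sound and constitute a genuinely different route from the paper's. The identity $\E[X_e]=\sum_k\E[\phi_k\,\ell_k\,p_k]$, the telescoping bound $\sum_k\E\bigl[\phi_k(1-(1-p_k)^{\ell_k})\bigr]\le 1$, and the pointwise comparison $\ell_kp_k\le 2\bigl(1-(1-p_k)^{\ell_k}\bigr)$ when $\ell_kp_k\le 1$ are all correct, so the low-load contribution is indeed at most $2$. The paper instead fixes all randomness outside the neighborhood of $e$ (deferred decisions), reduces to a deterministic non-increasing sequence $t\ge n_1\ge\cdots\ge n_{\log t}$, upper-bounds $\E[Y_{seq}]$ by an explicit function $f$, and proves $f\le 5$ by induction on the sequence length; the case distinction there (whether $2x_2>r$) is precisely how the paper absorbs the ``high-load'' regime, using that in that case $x_1\cdot\frac2r\in[1,2]$ so the prefactor $e^{-x_1\cdot 2/r}\le e^{-1}$ kills the factor-$2$ blowup from doubling the arguments.

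The gap is in your resolution of the crux you correctly identified. Coupling the post-$\tau$ evolution to the idealized process with $\ell_k\equiv L$ does not give an upper bound: shrinking $\ell$ decreases each per-iteration load $\ell_kp_k$ but \emph{increases} every survival factor $(1-p_j)^{\ell_j}$, so neither process dominates the other, and the quantity to be bounded, $\sum_{m}\ell_{\tau+m}p_{\tau+m}\prod_{j<m}(1-p_{\tau+j})^{\ell_{\tau+j}}$, is not monotone in the $\ell$'s. A fix that stays within your framework: since $\ell_k\le\ell_{k-1}$ and $p_k=2p_{k-1}$, the load at most doubles per iteration, so for every high-load index $k$ other than the first one has $\ell_kp_k\le 2\,\ell_{k-1}p_{k-1}\le 2V_k$, where $V_k:=\sum_{j<k}\ell_jp_j$ is the accumulated exponent in the survival bound $\prod_{j<k}(1-p_j)^{\ell_j}\le e^{-V_k}$; moreover each earlier high-load index contributes more than $1$ to $V_k$, so the $m$-th high-load term is at most $2V_ke^{-V_k}\le 2(m-1)e^{-(m-1)}$ (and the first is at most $2$, again by the at-most-doubling property), giving a convergent series. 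Note, however, that this accounting yields roughly $2+2\sum_{j\ge1}je^{-j}\approx 3.8$ for the high-load part, i.e.\ a total near $5.8$ rather than the stated $5$; any absolute constant suffices for how the lemma is used in \cref{Lemma:first_iteration}, but to recover the literal bound $5$ you would need to tighten the low/high split rather than take the threshold at load $1$.
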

Let's observe what happens with $e$ during the first stage. In each iteration, a set $S$ is {\em active} if it contains $e$ and has more than $s/2$ free elements in the beginning of the iteration.
Suppose that there are $N_1\leq t$ active sets in the first iteration.  Since each active set joins the set cover independently with probability $2/t$, the number of times that $e$ is covered in the first iteration can be described by a random variable $Y_1 \sim \Bin(N_1,2/t)$. 
If $Y_1 > 0$, then $X_e = Y_1$. 
Otherwise, $e$ is still uncovered after the first iteration. In this case, there are $N_2$ active sets at the beginning of the second iteration. 
Note that as the number of free elements in some of the $N_1$ active sets of the first iteration might have dropped below $s/2$, $N_1\geq N_2$. 
More generally, given that $e$ was not covered prior to the $k$-th iteration, there are $N_k$ active sets and the number of times that $e$ is covered in this iteration can be described as a random variable $Y_k \sim \Bin(N_k,2^k/t)$. If $Y_k> 0$, then $X_e = Y_k$, otherwise we proceed to the next iteration. Note that for all $k\geq1$, $N_k \geq N_{k+1}$ and $N_{k+1}$ are random variables.
To analyze the random process it is however easier to think of $N_1, \cdots, N_{\log t}$ as being fixed in advance. 
To that end, we think about an equivalent random process. We first choose a random vector $b_S \in \{0,1\}^{\log(t)}$ for each set $S$ such that $\Pr[(b_S)_k = 1] = 2^k/t$. 
Now, in the $k$-th iteration we add a set $S$ if it is an active set in the $k$-th iteration and $(b_S)_k = 1$. It is easy to see that this process is equivalent. Additionally, it allows us to use the principle of deferred decision making by fixing all the random vectors for those sets which do not contain $e$ in advance. After fixing those, it is easy to see that $N_k$ is also fixed if $e$ is a free element in the $k$-th iteration. 
\begin{definition}
Let $seq = (n_1, ..., n_{\log t})$ be an integer sequence such that $t \geq n_1 \geq n_2 \geq ... \geq n_{\log t} \geq 0$. Let $Y_1, ..., Y_{\log t}$ be random variables with $Y_k \sim \Bin(n_k,2^k/t)$. Let $Y_{seq} = Y_{k^*}$ with $k^*$ being the smallest index for which $Y_{k^*} > 0$ if such a $k^*$ exists and $0$ otherwise. 
\end{definition}
We now prove the following result that we will use to upper-bound	 $\ee{X_e}$.
\begin{lemma}\label{lemma:upper-bound-ee-Xe-by-max}
	$\ee{X_e}$ can be upper-bounded as follows
	\[
		\ee{X_e} \le \max_{t\geq n_1 \geq \ldots \geq n_{\log t} \geq 0} \ee{Y_{(n_1, \ldots, n_{\log t})}}.
	\]
\end{lemma}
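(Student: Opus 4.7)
The plan is to make precise the deferred-decisions argument sketched immediately before the lemma statement. First I would fix, once and for all, a realization of the random vectors $(b_S)_{S : e \notin S}$; call this the ``outer'' randomness and let $\mathcal{F}$ denote the $\sigma$-algebra it generates. The only remaining randomness is the $(b_S)$ for sets $S$ containing $e$, and the random choice of whether $e$ ever becomes covered. The key observation is that conditional on $\mathcal{F}$ \emph{and} on $e$ still being free at the start of iteration $k$, no set containing $e$ has been added to the cover in iterations $1, \ldots, k-1$ (otherwise $e$ would be covered). Therefore the set of free elements other than $e$ at the start of iteration $k$ is a deterministic function of $\mathcal{F}$ alone — it is exactly the set of elements not covered by the sets in $\{S : e \notin S\}$ that were chosen in the first $k-1$ iterations.

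As a consequence, conditional on $\mathcal{F}$ and on $e$ being free at the start of iteration $k$, the number $n_k$ of sets containing $e$ that have at least $s/2$ free elements at that moment is a deterministic function of $\mathcal{F}$. Because the set of free elements is monotonically non-increasing over time, the family of currently-active sets is also monotonically non-increasing in $k$, so $n_1 \geq n_2 \geq \cdots \geq n_{\log t} \geq 0$; and $n_1 \leq t$ since $e$ lies in at most $t$ sets. Thus $(n_1, \ldots, n_{\log t})$ is a valid sequence in the sense of the lemma.

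Next I would verify that, conditional on $\mathcal{F}$, the process governing $X_e$ coincides with the process defining $Y_{(n_1,\ldots,n_{\log t})}$. In iteration $k$, each of the $n_k$ active sets containing $e$ is independently added with probability $2^k/t$ (these are the still-pending coordinates of the $b_S$ for $e \in S$, which are independent of $\mathcal{F}$). Let $Y_k \sim \Bin(n_k, 2^k/t)$ denote the number of such additions. If the smallest $k$ with $Y_k > 0$ exists, call it $k^*$; in that iteration $e$ gets covered for the first time and, by the definition of $X_e$, we have $X_e = Y_{k^*}$. If no such $k^*$ exists then $e$ is never covered and $X_e = 0$. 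This is exactly the definition of $Y_{(n_1,\ldots,n_{\log t})}$, so
\[
\ee{X_e \mid \mathcal{F}} \;=\; \ee{Y_{(n_1, \ldots, n_{\log t})}}.
\]
Taking an outer expectation and bounding the integrand by its maximum over all admissible sequences yields
\[
\ee{X_e} \;=\; \ee{\ee{X_e \mid \mathcal{F}}} \;\le\; \max_{t \ge n_1 \ge \cdots \ge n_{\log t} \ge 0} \ee{Y_{(n_1, \ldots, n_{\log t})}},
\]
which is the desired inequality.

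The main obstacle is the subtle conditional-independence point in the second paragraph: one must be careful that ``conditioning on $e$ being free at iteration $k$'' does not implicitly reveal information about the still-pending bits $(b_S)_{k'}$ for $k' < k$ and $e \in S$. The clean way to handle this is exactly the reformulation given in the text — sample all vectors $b_S$ up front, and simply declare that a set joins the cover in iteration $k$ iff it is active then and $(b_S)_k = 1$. Under this coupling, the event $\{e \text{ free at start of iteration } k\}$ depends only on the coordinates $(b_S)_{k'}$ for $k' < k$ and $S \ni e$; so conditioning on $\mathcal{F}$ leaves the coordinates $(b_S)_k$ for $S \ni e$ i.i.d.\ Bernoulli$(2^k/t)$, which is what the binomial step in the argument needs.
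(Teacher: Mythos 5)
Your argument is correct and is essentially the same as the paper's: both fix the randomness of sets not containing $e$ via deferred decisions, observe that this determines the sequence $(N_1,\ldots,N_{\log t})$, identify the conditional law of $X_e$ with that of $Y_{(n_1,\ldots,n_{\log t})}$, and conclude by taking an outer expectation and bounding by the maximum. The only cosmetic difference is that the paper conditions directly on the realized values of $(N_1,\ldots,N_{\log t})$ rather than on the full $\sigma$-algebra $\mathcal{F}$, but since that vector is $\mathcal{F}$-measurable (given $e$ free) the two formulations coincide.
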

\begin{proof}
	Let $\sN$ denote the collection of all possible values for the random variable ($N_1,\ldots, N_{\log t})$.
	The proof is given by the following sequence of inequalities:
	\begin{align*}
		\ee{X_e} &= \sum_{(n_1,\ldots,n_{\log t}) \in \sN}\ee{X_e|(N_1,\ldots,N_{\log t}) = (n_1, \ldots, n_{\log t})} \cdot \Pr[(N_1,\ldots,N_{\log t}) = (n_1,\ldots, n_{\log t})] \\
		&= \sum_{(n_1,\ldots,n_{\log t}) \in \sN}\ee{Y_{(n_1, \ldots, n_{\log t})}} \cdot \Pr[(N_1, \ldots,N_{\log t}) = (n_1,\ldots, n_{\log t})] \\
		& \leq \max_{(n_1,\ldots,n_{\log t}) \in \sN} \ee{Y_{(n_1, \ldots, n_{\log t})}} \\
		&\leq \max_{t\geq n_1 \geq \ldots \geq n_{\log t} \geq 0} \ee{Y_{(n_1, \ldots, n_{\log t})}}
	\end{align*}
\end{proof}
Following \cref{lemma:upper-bound-ee-Xe-by-max}, it only remains to upper-bound $\ee{Y_{(n_1, \ldots, n_{\log t})}}$ for any arbitrary sequence $n_1, \cdots, n_{\log t}$ such that $t \geq n_1 \geq \ldots \geq n_{\log t} \geq 0$. We will do this in two steps.

First, we will find an upper bound for $\ee{Y_{(n_1, \ldots, n_{\log t})}}$ and define a function $f$ such that $f((n_1, \ldots, n_{\log t}),t)$ is equal to this upper bound. Then, we will show by induction on the sequence length that for some restrictions on the input parameters the value of the function can be bounded by a constant. The restricted input parameters still capture the parameters produced by \cref{alg:base} and what we require for the analysis. We upper-bound $\ee{Y_{(n_1, \ldots, n_{\log t})}}$ as follows
\begin{align*}
	\ee{Y_{(n_1, \ldots, n_{\log t})}} &= \ee{Y_1} + \sum_{k=2}^{\log t} \ee{Y_k} \cdot \Pr[Y_1 = 0, Y_2 = 0, \ldots, Y_{k-1} = 0] \\
	&= n_1 \cdot \frac{2}{t} + \sum_{k=2}^{\log t} n_k \cdot \frac{2^k}{t}\prod_{j=1}^{k-1}\rb{1-2^j/t}^{n_j}\\&\leq n_1 \cdot \frac{2}{t} + \sum_{k=2}^{\log t} n_k \cdot \frac{2^k}{t}\prod_{j=1}^{k-1}e^{-n_j\cdot\frac{2^j}{t}}.
\end{align*} 

\begin{definition} \label{def:seq}
	Let $f((x_1, \ldots, x_l),r) := x_1 \cdot \frac{2}{r} + \sum_{k=2}^{l} x_k \cdot \frac{2^k}{r} \cdot \prod_{j=1}^{k-1}e^{-x_j \cdot \frac{2^j}{r}}$. For $l \in \mathbb{N}$, let $P(l)$ be the property that for all $(x_1, \ldots, x_l) \in \mathbb{R}_{>0}$ and $r \in \mathbb{R}_{>0}$ such that $r \geq x_1 \geq x_2 \geq \ldots \geq x_l \geq 0$, we have $f((x_1, \ldots, x_l),r) \leq 5$.
\end{definition}
We have that $P(\log(t))$ being true directly implies $\ee{Y_{(n_1,\ldots,n_{\log t})}} \leq 5$, which in turn implies \cref{Lemma:constant coverage}.
\begin{lemma} \label{lemma:seq}
Let $P(l)$ be as defined in \cref{def:seq}. Then, $P(l)$ holds for all $l \geq 1$.
\end{lemma}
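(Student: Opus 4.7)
The plan is induction on $l$ built around the one-step recursion
\[
f((x_1, \ldots, x_l), r) \;=\; \frac{2 x_1}{r} \;+\; e^{-2 x_1/r}\, f\bigl((x_2, \ldots, x_l), r/2\bigr),
\]
which follows from \cref{def:seq} by pulling the $j=1$ exponential factor (equal to $e^{-2 x_1/r}$) out of every product, reindexing $k \mapsto k-1$ and $j \mapsto j-1$, and observing that $2^{k+1}/r = 2^k/(r/2)$.

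The main obstacle to a direct induction on the statement $P(l-1)$ is that this recursion rescales $r$ to $r/2$, whereas the premise of $P(l)$ only gives $x_2 \le x_1 \le r$, not $x_2 \le r/2$; hence $P(l-1)$ need not apply to the tail $(x_2, \ldots, x_l)$ at scale $r/2$. My plan is to sidestep this by proving the stronger statement
\[
Q(l):\qquad f((x_1, \ldots, x_l), r) \;\le\; \frac{2 x_1}{r} + 3
\]
for every $l \ge 1$, every decreasing sequence $x_1 \ge x_2 \ge \cdots \ge x_l \ge 0$, and every $r > 0$, with no constraint linking $x_1$ and $r$. Because $x_1 \le r$ forces $2 x_1/r \le 2$, the implication $Q(l) \Rightarrow P(l)$ is immediate.

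The base case $Q(1)$ is just $f((x_1), r) = 2 x_1/r$. For the inductive step, combining the recursion with $Q(l-1)$ applied to $(x_2, \ldots, x_l)$ at scale $r/2$ yields
\[
f((x_1, \ldots, x_l), r) \;\le\; \frac{2 x_1}{r} + e^{-2 x_1/r}\!\left(\frac{4 x_2}{r} + 3\right).
\]
Using $x_2 \le x_1$ and setting $\alpha := 2 x_1/r \ge 0$, this reduces to the scalar inequality $e^{-\alpha}(2\alpha + 3) \le 3$, which I will verify by noting that $h(\alpha) := e^{-\alpha}(2\alpha + 3) - 3$ satisfies $h(0) = 0$ and $h'(\alpha) = -e^{-\alpha}(2\alpha + 1) < 0$ on $[0, \infty)$. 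In my view the only non-routine step is the strengthening from a constant bound to the bound $2x_1/r + 3$, which is what makes the induction close irrespective of whether $x_2 \le r/2$; once that is in place the recursion and a short monotonicity check finish the argument.
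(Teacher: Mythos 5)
Your proof is correct, and it takes a genuinely different route from the paper's. Both arguments rest on the same one-step recursion
\[
f((x_1,\ldots,x_l),r) = \frac{2x_1}{r} + e^{-2x_1/r}\,f((x_2,\ldots,x_l),r/2)
= \frac{2x_1}{r} + e^{-2x_1/r}\,f((2x_2,\ldots,2x_l),r),
\]
and both recognize the same obstruction: the rescaled tail need not satisfy the constraint $2x_2 \le r$ that $P(l)$ presupposes. The paper resolves this with a case distinction ($x_1 \le r/2$ versus $2x_2 > r$), using in the second case the crude bound $f((2x_2,\ldots),r) \le 2\,f((x_2,\ldots),r)$ together with $1 \le 2x_1/r \le 2$ to absorb the factor of two. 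You instead strengthen the induction hypothesis to $Q(l)\colon f((x_1,\ldots,x_l),r) \le 2x_1/r + 3$, which deliberately drops the hypothesis $x_1 \le r$ so that the recursion is self-propagating without any case split; the whole inductive step then reduces to the single scalar inequality $e^{-\alpha}(2\alpha+3) \le 3$ for $\alpha \ge 0$, which you verify by monotonicity. Your version is arguably cleaner and more robust: the paper's case split as literally written ($x_1 \le r/2$ or $2x_2 > r$) is not obviously exhaustive (one can have $x_1 > r/2$ while $x_2 \le r/2$, which falls in neither stated case even though the first case's argument would still apply), whereas your $Q(l)$ sidesteps any such bookkeeping entirely. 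The only cost is that one has to guess the right strengthening, which you motivate well.
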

\begin{proof}
	We prove this statement by induction on $l$. We first show $P(1)$. Let $x_1$ and $r \in \mathbb{R}_{>0}$ be arbitrary such that $r \geq x_1 \geq 0$. We have $f((x_1),r) = x_1 \cdot 2/r \leq  5$.
	
	Now, let $l \in \mathbb{N}$ be arbitrary. We assume $P(l)$ to be true and need to show $P(l+1)$. Let $r \in \mathbb{R}_{>0}$ and $x_1, \ldots, x_{l+1}$ be arbitrary such that $r \geq x_1 \geq \ldots \geq x_{l+1} \geq 0$.
	
	\begin{align*}
		f((x_1, \ldots, x_{l+1}),r) &= x_1 \cdot \frac{2}{r} + \sum_{k=2}^{l+1} x_k \cdot \frac{2^k}{r} \cdot \prod_{j=1}^{k-1}e^{-x_j \cdot \frac{2^j}{r}} \\
		&= x_1 \cdot \frac{2}{r} + e^{-x_1\cdot \frac{2}{r}} \cdot \rb{x_2 \cdot \frac{2^2}{r} + \sum_{k=3}^{l+1} x_k \cdot \frac{2^k}{r} \cdot \prod_{j=2}^{k-1}e^{-x_j \cdot \frac{2^j}{r}}} \\
		&= x_1 \cdot \frac{2}{r} + e^{-x_1\cdot \frac{2}{r}} \cdot \rb{(2x_{2}) \cdot \frac{2}{r} + \sum_{k=2}^{l} (2x_{k+1}) \cdot \frac{2^k}{r} \cdot \prod_{j=1}^{k-1}e^{-(2x_{j+1}) \cdot \frac{2^j}{r}}} \\
		&=  x_1 \cdot \frac{2}{r} + e^{-x_1\cdot \frac{2}{r}} \cdot f((2x_2,2x_3, \ldots ,2x_{l+1}),r)
	\end{align*}
	We would like to use the induction hypothesis to bound $f((2x_2,2x_3, \ldots,2x_{l+1}),r)$. However, it might be the case that $2x_2 > r$. Therefore we need to do a case distinction. First, assume that $x_1 \leq r/2$ which implies that $2x_2 \leq r$. Thus we can indeed use the induction hypothesis:
	
	\begin{align*}
		f((x_1,\ldots,x_{l+1}),r) &\leq x_1 \cdot \frac{2}{r} + e^{-x_1 \cdot \frac{2}{r}} \cdot 5 \leq 5 % \\
		%&\leq \max_{y \in [0,1]} g(y)		
	\end{align*}
	%with $g(y) := y + 5 \cdot e^{-y}$. We have $g'(y) = 1 - 5 \cdot e^{-y}$ and thus $y^* = \ln(5) > 1$ is the only local extreme point. Furthermore we have $g(0) \leq 5$ and $g(1) \leq 5$. Therefore, $max_{y \in [0,1]} g(y) \leq 5$. \\
	
	Next we consider the case where $2x_2 > r$. 
	
	\begin{align*}
			f((x_1, \ldots, x_{l+1}),r) &\leq x_1 \cdot \frac{2}{r} + e^{-x_1\cdot \frac{2}{r}} \cdot f((2x_2,2x_3,\ldots,2x_{l+1}),r) \\
			&\leq x_1 \cdot \frac{2}{r} + e^{-x_1\cdot \frac{2}{r}} \cdot 2 \cdot f((x_2,x_3,\ldots,x_{l+1}),r) \\
		    &\leq  x_1 \cdot \frac{2}{r} + e^{-x_1\cdot \frac{2}{r}} \cdot 10 \leq 5 \quad\rhd\text{since $1\leq x_1 \cdot \frac{2}{r}\leq 2$}%\\
		    %&\leq \max_{z \in [1,2]} h(z)
	\end{align*}
	%with $h(z) := z + 10 \cdot e^{-z}$. We have $g'(z) = 1 - 10\cdot e^{-z}$ and thus $z^* = ln(10) > 2$ is the only local extreme point. Furthermore we have $g(1) = 1 + 10/e \leq 5$ and $g(2) = 2 + 10/e^2 \leq 5$. Therefore, $\max_{z \in [1,2]} h(z) \leq 5$.
\end{proof}

We are now ready to prove \cref{Lemma:first_iteration}.
\begin{proof}[Proof of \cref{Lemma:first_iteration}]
	We have $(s/2) \cdot |\cS_{\cvr,1}| \leq \sum_{e \in \cE} X_e$ as during the first stage,  \cref{alg:base} only adds sets which have at least $s/2$ free elements at the beginning of the iteration. As $\ee{\sum_{e \in \cE}X_e} \leq 5\cdot n$, we get $\ee{|\cS_{\cvr,1}|} = O(n/s)$ as desired.
\end{proof}
%\cref{Lemma:first_iteration} was the only missing step to show \cref{Theorem:approx_basic}. Thus we have proven \cref{Theorem:approx_basic}.

\section{A Generic Algorithm}
\label{sec:generic-algorithm}
In this section we present~\cref{alg:generic-alg}. The algorithm differs from~\cref{alg:base} in two ways. 
First, it only \emph{estimates} the number of uncovered elements contained in a given set. 
Second, it fixes all the randomness right at the beginning of \cref{alg:generic-alg}. After \cref{line:gen-define-Sik}, the algorithm just executes a deterministic process. 
Both of these modifications are crucial to get efficient LCAs and will be explained in more detail in~\cref{section:generig-estimating}. 

Although \cref{alg:generic-alg} estimates set sizes instead of computing them exactly (as \cref{alg:base} does) it is, however, not clear how to simulate \cref{alg:generic-alg} in the local computation model with significantly less than $(st)^{O(\log s \cdot \log t)}$ many queries. The main reason for this is the following. The efficiency of a LCA is measured by its worst-case query complexity. In our case, this means that in order for a (randomized) LCA to have a query complexity of $T(s,t)$, each set $S$ needs to decide after at most $T(s,t)$ queries whether it is contained in the set cover or not with high probability (in $n$). As we aim for a query complexity independent of $n$, this essentially implies that for each set $S$, $S$ needs to decide after at most $T(s,t)$ queries whether it is part of the set cover or not, even if all the random bits are chosen adversarially.
Although the subsequent LCAs will essentially simulate \cref{alg:generic-alg}, crucially, they will bound the influence that ``bad randomness'' can have on the query complexity by, e.g., immediately adding sets to the set cover for which the number of sampled elements is too large. However, these tests make it hard to directly establish an approximation guarantee for those algorithms. To deal with that problem, we first establish the approximation guarantee for \cref{alg:generic-alg} in \cref{section:generic-approx}. After that, we use the approximation guarantee of \cref{alg:generic-alg} to establish the approximation guarantee of the subsequent algorithms by relating them to \cref{alg:generic-alg}.

\begin{algorithm}
		\caption{\label{alg:generic-alg} A generic set cover algorithm}
		
	Let $(\cS,\cE)$ be some set cover instance.
		
	\For{each pair of $(i,k)$ where $i \in [\log s]$ and $k \in [\log t]$ \label{line:generic-loop-Bi(S)}}{
	$B_i(S) \leftarrow$ a subset of $S$ s.t.~each element is included indep. w.p.~$p_i = \min(1,\logfive \cdot {2^i \over s})$. \label{line:gen-define-Bi-and-pi}

	$\cS_{i,k} \gets$ a subfamily of $\cS$ s.t.~each set is included indep. w.p.~$2^k/t$.
	\label{line:gen-define-Sik}
	}
		
	\label{line:gen-end-for-loop}			
	\For{stage i = 1 to $\log s$}{				
			\For{iteration $k = 1$ to $\log t $}{			
%				$\cE_{i,k} \gets $ set of free elements.			
				$\degree(S) \eqdef \frac{|B_i(S) \cap \cE_{i,k}|}{p_i}$. \label{line:define-degree-S}\tcp{$\cE_{i,k}$ denotes the set of free elements in the beginning of current iteration}				
				\For{each set $S \in \cS_{i,k}$ with $\degree(S) \geq s/2^i$ \label{line:test-degree-S}}
				{{\bf add} $S$ to $\cS_{\cvr}$.}
			}			
	}
	\Return $\cS_{\cvr}$
\end{algorithm}

\subsection{Details of \cref{alg:generic-alg}}  \label{section:generig-estimating}
In this section, we further elaborate on the design decisions made in \cref{alg:generic-alg}. 
For a given set $S$, \cref{alg:generic-alg} estimates the number of free elements within stage $i$ by counting the number of uncovered elements in $B_i(S)$. 
The set $B_i(S)$ is a random subset of $S$ such that each element in $S$ is contained in $B_i(S)$ with a probability of $p_i$.  Note that if $p_i = 1$, then $\degree(S)$ is equal to the actual number of free elements of $S$ at the beginning of iteration $k$ in stage $i$. 
Otherwise, if $p_i < 1$, $\degree(S)$ estimates the number of free elements to be at least $s/2^i$ if the number of uncovered elements in $B_i(S)$ is at least $\logfive$ and otherwise $\degree(S)$ estimates that $S$ has fewer than $s/2^i$ free elements. 
Note that if we would not reuse the samples within the same stage, then $\degree(S)$ would be an unbiased estimator of $d(S)$. However, reusing the samples will be helpful in the analysis as we get the following monotonicity property within stage $i$: $S$ does not get added to the set cover in stage $i$ once the estimated number of uncovered elements in $S$ is smaller than $s/2^i$. This fact will turn out to be very useful in order to establish the approximation guarantee of \cref{alg:generic-alg}. 

Note that we only need to decide on $B_i(S)$ at the beginning of stage $i$ and on $\cS_{i,k}$ at the beginning of iteration $k$ in stage $i$. The reason why we have decided to fix all the $B_i(S)$'s and all the $\cS_{i,k}$'s at the beginning is to make the connection of \cref{alg:generic-alg} to the subsequent algorithms more obvious.

\subsection{Approximation Analysis of \cref{alg:generic-alg}} \label{section:generic-approx}
In this section we prove the following approximation guarantee for \cref{alg:generic-alg}.
\begin{theorem} \label{theorem:gen-set-cover-size}
	Let $\cS_{\cvr}$ be the solution returned by \cref{alg:generic-alg}. Then, $\ee{|\cS_{\cvr}|} = O(\log s) \cdot \opt$.
\end{theorem}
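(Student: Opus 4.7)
My plan is to follow the template of the proof of \cref{Theorem:approx_basic} and add only the machinery needed to cope with the fact that \cref{alg:generic-alg} works with estimated rather than exact degrees.

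First, I would reduce the theorem to a single-stage statement --- namely that the expected number of sets picked in the \emph{first} stage of \cref{alg:generic-alg} satisfies $\E[|\cS_{\cvr,1}|] = O(n/s)$ --- and then apply it inductively to the residual instances $(\cS_i,\cE_i)$, exactly as in the proof of \cref{Theorem:approx_basic}. The one genuinely new issue is that \cref{alg:generic-alg} no longer guarantees deterministically that the maximum set size halves after each stage; it only guarantees that the \emph{estimated} size is $<s/2^i$ at the end of stage $i$. To salvage the inductive step I would invoke Chernoff on $|B_i(S)\cap\cE|$, using that $p_i\, d(S)\ge \logfive$ whenever $d(S)\ge s/2^{i-1}$: this yields $d(S)<2s/2^i$ for every surviving set $S$ except on an event of probability $(st)^{-\omega(1)}$. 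A union bound over the polynomially many (set, stage) pairs makes the total failure probability negligible, and the resulting factor-$2$ slack in the residual maximum set size costs only a constant factor in the inductive telescoping.

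For the single-stage statement, I would define $X_e$ exactly as in \cref{Lemma:constant coverage}: it counts the sets containing $e$ that are added in the iteration in which $e$ becomes covered for the first time. A set $S$ added in stage $1$ satisfies $\hat d_{B_1}(S)\ge s/2$ at the moment of addition, and the same Chernoff argument forces $d(S)\ge s/4$ except on a negligible event. Hence, outside that event, each such $S$ contributes at least $s/4$ to the sum $\sum_{e\in\cE}X_e$, yielding $(s/4)\,|\cS_{\cvr,1}| \le \sum_{e}X_e$. Combining with the bound $\E[X_e]=O(1)$ proved below, and absorbing the negligible-probability event (whose contribution to $\E[|\cS_{\cvr,1}|]$ is at most $|\cS|\cdot(st)^{-\omega(1)}$), I obtain $\E[|\cS_{\cvr,1}|]=O(n/s)$.

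The main technical step is re-proving the constant-coverage bound in the estimated regime. I would expose the randomness in the same order as in \cref{Lemma:constant coverage}: first reveal $B_1(S')$ for every $S'\not\ni e$, the indicators $\mathbf{1}[S'\in \cS_{1,k}]$ for all $k$ and all $S'\not\ni e$, and also $B_1(S)$ for every $S\ni e$; only the indicators $\mathbf{1}[S\in\cS_{1,k}]$ for $S\ni e$ remain random. The critical observation --- and the reason \cref{alg:generic-alg} freezes $B_i(S)$ for the whole stage --- is that $|B_1(S)\cap \cE_{1,k}|$ is monotone non-increasing in $k$, so the count $N_k$ of ``active'' sets containing $e$ (those with $\hat d_{B_1}(S)\ge s/2$ at the start of iteration $k$) forms a non-increasing, adapted sequence conditional on $e$ still being free. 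Moreover $\cS_{1,k}$ is independent of $B_1$, so conditional on $N_K$ and on $e$ being free at the start of iteration $K$, the number of sets containing $e$ added in iteration $K$ is $\Bin(N_K,\,2^K/t)$. These are precisely the hypotheses under which \cref{lemma:upper-bound-ee-Xe-by-max} upper-bounds $\E[X_e]$ by a maximum over non-increasing sequences $(n_k)$ bounded by $t$, and \cref{lemma:seq} then concludes $\E[X_e]\le 5$.

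The hardest part is verifying that the monotonicity of $N_k$ really holds and is measurable with respect to the information exposed before iteration $k$. Monotonicity relies crucially on the decision to sample $B_i(S)$ once per stage rather than per iteration --- resampling would force us to handle a stochastic up-and-down of the active-set counts and destroy the structure exploited by \cref{lemma:seq}; measurability uses the independence of $\cS_{1,k}$ from $B_1$. Once both are in place, the remainder of the argument is a direct reuse of the machinery from \cref{section:approx-base-algorithm}, and the probabilistic bookkeeping is routine.
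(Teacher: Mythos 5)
Your single-stage machinery is essentially the paper's: you correctly identify that freezing $B_i(S)$ for the whole stage makes the count of ``active'' sets containing $e$ monotone non-increasing across iterations, which is exactly what \cref{lemma:gen-constant-coverage} uses to recycle \cref{lemma:upper-bound-ee-Xe-by-max} and \cref{lemma:seq}; and handling the sets that are added while actually having too few free elements by summing, over the $m\le nt$ sets, the $\negfive$ probability of a misestimate is what \cref{lemma:gen-small-too-big} and \cref{lemma:gen-bound-small-sets} do. The genuine gap is in your cross-stage induction. You assert that ``a union bound over the polynomially many (set, stage) pairs makes the total failure probability negligible,'' but the per-pair failure probability is only $\negfive$ --- quasi-polynomially small in $st$ with no dependence on $n$ --- while the number of sets $m$ can be arbitrarily large compared to $s$ and $t$. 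Hence $m\log s\cdot\negfive$ need not be small, and on the complementary event the algorithm may pick up to $m$ sets, so its contribution to $\ee{|\cS_{\cvr}|}$ is not controlled by $\opt$. Without a global guarantee that every surviving set has at most $2s/2^i$ free elements, the inequality $\opt=\Omega(n_i/s_i)$ that drives your telescoping is simply unavailable.

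The paper closes exactly this hole by a different route: it splits the selected sets into $\setssmall$, $\setsnormal$, and the oversized sets $\setsuperi$, bounds $\ee{|\setsuperi|}\le n/(t s^3)$ by linearity of expectation (never by a union bound), applies Markov's inequality to get $\Pr[|\setsuperi|>n/s^2]\le 1/(st)$, and on the complementary event uses
\[
\opt \;\ge\; \frac{n_i - s\cdot|\setsuperi|}{4 s_i} \;\ge\; \Omega(n_i/s_i)-O(\opt),
\]
while the bad event contributes at most $nt\cdot\tfrac{1}{st}=n/s=O(\opt)$ to the expectation (\cref{lemma:gen-super} and \cref{lemma:generic-alg-bound-on-setsnormal}). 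You would need this Markov-plus-conditioning argument, or an equivalent, to make your inductive step sound.
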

Now we give a roadmap of our analysis. In \cref{line:gen-define-Sik}, we include each set $S$ in $\cS_{i,k}$ with a probability of ${2^k}/{t}$. If $S$ is not contained in $\cS_{i,k}$, then $S$ will not be added to the set cover in iteration $k$ of stage $i$. If $S$ is contained in $\cS_{i,k}$, then $S$ will be added to the set cover in iteration $k$ of stage $i$ if the estimated number of free elements in $S$ at the beginning of iteration $k$ in stage $i$ is at least ${s}/{2^i}$. 
In this section we show that $\ee{\cS_{\cvr}} = O(\log s) \cdot \OPT$.  To that end, we upper-bound the expected size of
\begin{itemize}
	\item $\setsuperi$: Sets which contain at least $4 \cdot \frac{s}{2^i}$ free elements at the beginning of stage $i$. 
	\item  $\setssmall$: Sets which contained fewer than $\frac{1}{2} \cdot \frac{s}{2^i}$ free elements in the iteration they got added to the set cover
	\item  $\setsnormal$: Sets which contained between $\frac{1}{2} \cdot \frac{s}{2^i}$ and $4 \cdot \frac{s}{2^i}$ free elements in the iteration they got added to the set cover
\end{itemize}
Bounding $\ee{\setsuperi}$ and $\ee{\setssmall}$ is straightforward. Let $n_i$ denote the number of free elements at the beginning of stage $i$ and let $s_i := s/2^i$. We show that $\ee{\setsnormal} = O(n_i/s_i)$ in a similar way as in the approximation analysis of \cref{alg:base} by relying on the monotonicity property of the estimates established in the previous subsection. However, unlike~\cref{alg:base}, the bound $\OPT = \Omega(n_i/s_i)$ does not generally hold in \cref{alg:generic-alg}, as we are loosing the  guarantee that all sets contain no more than $2 \cdot s_i$ uncovered elements at the beginning of stage $i$. Nevertheless, we can still relate $\ee{\setsnormal}$ to $\OPT$ by using the bound on $\ee{\setsuperi}$.

\begin{lemma}  \label{lemma:gen-small-too-big}
	Let $S$ be some arbitrary set and $i$ be some arbitrary stage. The probability that $S$ gets added to the set cover in some arbitrary iteration of stage $i$ such that $S$ had fewer than $\frac{1}{2} \cdot s/2^i$ many free elements at the beginning of the iteration in which it was added is at most $\negfive$.
\end{lemma}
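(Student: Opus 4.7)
The plan is to split into two cases based on the value of $p_i$, and in the non-trivial case to use a coupling argument to decouple the randomness of $B_i(S)$ from the set of free elements, after which a Chernoff bound yields the desired super-polynomially small probability.

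\textbf{Case $p_i = 1$.} Then $B_i(S) = S$, so $\degree(S) = d(S)$ exactly at every iteration of stage $i$, and the bad event (which requires $\degree(S) \geq s/2^i$ while $d(S) < \tfrac{1}{2}\cdot s/2^i$) cannot occur.

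\textbf{Case $p_i < 1$.} Then $p_i = \logfive \cdot 2^i/s$, so the threshold $\tau := p_i \cdot s/2^i$ equals $\logfive$. I would fix an iteration $k$ of stage $i$, bound the probability of the bad event occurring at that iteration, and then union-bound over the $\log t$ iterations. The main obstacle is that $\cE_{i,k}$ depends on $B_i(S)$: if $S$ itself was added in an earlier iteration of stage $i$, that decision used $B_i(S)$ and changed which elements are still free. To sidestep this, I would introduce an auxiliary process $A'$ which runs \cref{alg:generic-alg} with the sole modification that $S$ is never added to the cover. Let $\cE^{A'}_{i,k}$ be its free set at iteration $k$ of stage $i$; this depends only on the randomness $Z$ consisting of all bits \emph{other} than those defining $B_i(S)$. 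As long as $S$ has not been added in the real algorithm before iteration $k$ of stage $i$, the two processes agree through iteration $k$, so $\cE_{i,k} = \cE^{A'}_{i,k}$. Moreover, the bad event at iteration $k$ forces $S$ not to be in the cover before iteration $k$: otherwise $S \cap \cE_{i,k} = \emptyset$ and the test $\degree(S) \geq s/2^i$ would fail.

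Writing $F := S \cap \cE^{A'}_{i,k}$, the bad event at iteration $k$ thus implies $|F| < \tfrac{1}{2}\cdot s/2^i$ together with $|B_i(S) \cap F| \geq \tau$. Conditioning on $Z$, the random variable $|B_i(S) \cap F|$ is a sum of $|F|$ independent Bernoulli$(p_i)$ variables with mean $p_i|F| \leq \tau/2$. Using stochastic dominance to reduce to the case of mean exactly $\tau/2$ and applying \cref{lemma:chernoff}~\eqref{item:delta-at-least-1} with $\delta = 1$ yields $\Pr[|B_i(S) \cap F| \geq \tau \mid Z] \leq \exp(-\tau/6) = \exp(-\logfive/6)$. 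Taking expectation over $Z$ and union-bounding over the $\log t$ iterations of stage $i$ gives total failure probability at most $\log t \cdot \exp(-\logfive/6)$, which is $\negfive$ for $s,t$ sufficiently large. The hard part is setting up the decoupling: naively the randomness of $B_i(S)$ is intertwined with the algorithm's state, and it is the auxiliary $S$-forbidden process that makes the Chernoff step go through cleanly.
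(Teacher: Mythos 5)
Your proof is correct and follows essentially the same route as the paper's: both arguments decouple $B_i(S)$ from the set of free elements of $S$ at the decisive iteration (the paper via deferred decisions plus the observation that, because the sample is reused within a stage, only the first iteration with $S \in \cS_{i,k}$ matters; you via the auxiliary $S$-forbidden process plus a union bound over the $\log t$ iterations), and then apply the same Chernoff bound to $|B_i(S) \cap F|$ against the threshold $\logfive$.
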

\begin{lemma} \label{lemma:gen-big-too-small}
	Let $S$ be some arbitrary set and $i$ be some arbitrary stage. The probability that $S$ has more than $2 \cdot s/2^i$ many free elements at the end of stage $i$ is at most $\negfive$.
\end{lemma}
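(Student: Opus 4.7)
My plan is to reduce the event $E = \{d(S) > 2\cdot s/2^i \text{ at the end of stage } i\}$ to a single Chernoff-type tail bound on the sample $B_i(S)$ in the last iteration of stage $i$. Two monotonicity observations enable the reduction. First, any set added to $\cS_{\cvr}$ has $d(\cdot) = 0$ immediately afterward, so $E$ forces $S$ to survive all $\log t$ iterations of stage $i$ unchosen. Second, $d(S)$ is non-increasing over iterations, so $E$ forces $|S \cap \cE_{i,k}| > 2 s/2^i$ for every $k \le \log t + 1$. In iteration $k = \log t$, the random family $\cS_{i,\log t}$ contains $S$ with probability $2^{\log t}/t = 1$, so the non-addition of $S$ in that iteration forces $\degree(S) < s/2^i$, equivalently $|B_i(S) \cap \cE_{i,\log t}| < (s/2^i)\cdot p_i = \logfive$.

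The main obstacle is that $\cE_{i,\log t}$ itself depends on $B_i(S)$: whether $S$ is added in an earlier iteration of stage $i$ alters the running $\cE$, which in turn affects the $\degree$-checks of other sets and thereby $\cE_{i,\log t}$. To decouple these, I would condition on the $\sigma$-algebra $\omega$ generated by every random choice except the Bernoulli trials defining $B_i(S)$ -- namely every $B_{i'}(S')$ with $(i',S') \neq (i,S)$ and every $\cS_{i',k'}$. For each such $\omega$, define the counterfactual trajectory $\cE_{i,k}^\omega$ as the value $\cE_{i,k}$ would take if $S$ were forced not to be added during stage $i$. Because every other set $S'$'s decision in stage $i$ depends only on the fixed $B_i(S')$ and the running $\cE$, a short induction on $k$ shows that $\cE_{i,k}^\omega$ is $\omega$-measurable and that, on the event ``$S$ is never added in stage $i$,'' the actual trajectory coincides with $\cE_{i,k}^\omega$. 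Consequently, on this event, ``$S$ is never added in stage $i$'' is equivalent to $|B_i(S) \cap \cE_{i,k}^\omega| < \logfive$ for every $k \in [\log t]$ with $S \in \cS_{i,k}$, and the predicate $|S \cap \cE_{i,\log t+1}^\omega| > 2 s/2^i$ is $\omega$-measurable.

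Finally I would fix any $\omega$ for which the $\omega$-measurable event $|S \cap \cE_{i,\log t+1}^\omega| > 2 s/2^i$ holds, since otherwise $E$ is impossible. Conditional on $\omega$, the Bernoullis defining $B_i(S)$ are independent of $\omega$, so $|B_i(S) \cap \cE_{i,\log t}^\omega|$ is binomial with mean $|S \cap \cE_{i,\log t}^\omega| \cdot p_i \ge 2s/2^i \cdot p_i = 2\logfive$ whenever $p_i < 1$. \cref{lemma:chernoff}\ref{item:delta-at-most-1-le} with $\delta = 1/2$ then gives $\Pr[|B_i(S) \cap \cE_{i,\log t}^\omega| < \logfive \mid \omega] \le \exp(-\logfive/4) \le \negfive$ for $s,t$ above a small absolute constant. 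The boundary case $p_i = 1$ is immediate: then $B_i(S) = S$, so $\degree(S)$ coincides with $d(S) > 2 s/2^i > s/2^i$, forcing $S$ to be added in iteration $\log t$ and contradicting $E$. Integrating over $\omega$ yields the lemma.
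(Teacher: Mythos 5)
Your proposal is correct and rests on the same core idea as the paper's proof: defer the Bernoulli trials that define $B_i(S)$, use the within-stage monotonicity of $\degree(S)$ (which holds because the same $B_i(S)$ is reused across all iterations of stage $i$), observe that the bad event forces $\degree(S)<s/2^i$ at an iteration where $d(S)>2\cdot s/2^i$, and finish with a Chernoff lower-tail bound on $|B_i(S)\cap\cE_{i,k}|$. The one structural difference is how you handle the dependence of the trajectory $\cE_{i,\cdot}$ on $B_i(S)$: you fix $\omega$, define a counterfactual trajectory in which $S$ is never added, and argue that on the bad event the actual trajectory coincides with it. The paper sidesteps the counterfactual construction entirely by looking at the \emph{first} iteration $k$ at which $S\in\cS_{i,k}$ (which exists because $\Pr[S\in\cS_{i,\log t}]=1$ and is $\omega$-measurable); up to that point $B_i(S)$ has had no opportunity to influence the run, so $\cE_{i,k}$ is $\omega$-measurable with no coupling argument, and the monotonicity property then says that $S$'s fate for the entire stage is decided at that single iteration. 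Your route is sound and perhaps more explicit about the independence you are exploiting, but it is somewhat longer than necessary; the paper's choice of the first iteration with $S\in\cS_{i,k}$ collapses the counterfactual and the ``actual trajectory'' into the same object for free. (Note also that the paper's write-up swaps the two lemma names in the last two paragraphs of the joint proof; the second paragraph, which assumes $d(S)\geq 2\cdot s/2^i$ and bounds the probability that $\degree(S)<s/2^i$, is the one proving this lemma.)
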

\begin{proof}[Proof of~\cref{lemma:gen-small-too-big} and~\cref{lemma:gen-big-too-small}]
	Note that we can assume that $p_i < 1$ as otherwise the estimated number of free elements of a set as computed in~\cref{line:define-degree-S} is equal to the actual number of free elements of the set. In this case, both \cref{lemma:gen-small-too-big} and \cref{lemma:gen-big-too-small} trivially hold as the bad events cannot happen.
	 
	Next, note that the set of random bits used in~\cref{alg:generic-alg} are set right at the beginning. 
After~\cref{line:gen-end-for-loop} the algorithm basically executes a deterministic process. By the principle of deferred decision making, assume that we fix all the random bits except for those ones determining $B_i(S)$. Let $k$ be the first iteration for which $S$ is contained in $\cS_{i,k}$. Since $p_{i,\log t} = 1$, there always exists such an iteration. 

Since the algorithm uses the same set of elements for estimation within one stage, if in iteration $k$, the number of free elements in $S$ is estimated to be less than $s/2^i$, then $S$ will not be added to the set cover within stage $i$ in any subsequent iteration. Furthermore, if the number of free elements in $S$ is estimated to be at least $s/2^i$ in iteration $k$, then $S$ will be added to the set cover and after that the estimated number of free elements will always be 0.

	This implies that in order to prove~\cref{lemma:gen-big-too-small}, we can assume that $S$ has less than $\frac{1}{2} \cdot {s\over 2^i}$ many free elements at the beginning of iteration $k$ in stage $i$ and we only need to show that the probability that the estimated number of free elements in $S$ is at least $s/2^i$ is $\negfive$. Note that the first condition implies that the expected number of free elements in $B_i(S)$ is at most $\frac{1}{2} \cdot \logfive$. We only estimate $S$ to have a size of at least $s/2^i$ if $B_i(S)$ contains at least $\logfive$ many free elements. A Chernoff bound (\cref{lemma:chernoff}-\ref{item:delta-at-least-1}) implies that this happens with a probability of $\negfive$. 
	
	Similarly, to prove~\cref{lemma:gen-small-too-big}, we can assume that $S$ has at least $2 \cdot s/2^i$ many free elements at the beginning of iteration $k$ in stage $i$ and we need to bound the probability that the estimated number of free elements of $S$ is less than $s/2^i$. A Chernoff bound (\cref{lemma:chernoff}-\ref{item:delta-at-most-1-le}) implies that this happens with a probability of $\negfive$.
\end{proof}

\begin{lemma}  \label{lemma:gen-bound-small-sets}
	Let $\cS_{\sml}$ be the family of sets containing those sets which were added to the set cover in some arbitrary stage $i$ despite having fewer than $\frac{1}{2} \cdot {s\over 2^i}$ many free elements at the beginning of the iteration in which they were added. Then, $\ee{|\cS_{\sml}|} = O(\opt)$.
\end{lemma}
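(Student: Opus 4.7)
The plan is to bound $\ee{|\cS_{\sml}|}$ by summing the per-set failure probability given by \cref{lemma:gen-small-too-big}, and then to control the resulting prefactor $m = |\cS|$ in terms of $\opt$. I would start by fixing any set $S \in \cS$ and any stage $i \in [\log s]$. By \cref{lemma:gen-small-too-big}, the probability that $S$ is added to $\cS_{\cvr}$ during stage $i$ while having fewer than $\frac{1}{2}\cdot s/2^i$ free elements at the beginning of the iteration in which it was added is at most $\negfive$. A union bound over the $\log s$ stages then yields
\[
    \Pr[S \in \cS_{\sml}] \le \log s \cdot \negfive,
\]
and, by linearity of expectation, $\ee{|\cS_{\sml}|} \le m \cdot \log s \cdot \negfive$.

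Next I would bound $m$ in terms of $\opt$ using the degree restrictions of the instance. Any feasible fractional cover $(x_S)_{S \in \cS}$ satisfies $\sum_{S \ni e} x_S \ge 1$ for every element $e$; summing over all $n = |\cE|$ elements and using $|S| \le s$ gives $s \cdot \opt \ge n$, i.e., $n \le s \cdot \opt$. Since every element is contained in at most $t$ sets, we further obtain $m \le t \cdot n \le st \cdot \opt$.

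Combining the two bounds,
\[
    \ee{|\cS_{\sml}|} \;\le\; m \cdot \log s \cdot \negfive \;\le\; st \cdot \opt \cdot \log s \cdot \negfive \;=\; O(\opt),
\]
where the last equality uses that $\negfive$ decays superpolynomially faster in $s$ and $t$ than the prefactor $st \log s$ grows. I do not anticipate a real obstacle here: the entire point of choosing the estimator's sampling rate to be $p_i = \min(1, \logfive \cdot 2^i/s)$ in \cref{line:gen-define-Bi-and-pi} is to make the tail bound in \cref{lemma:gen-small-too-big} strong enough to absorb a union bound over every set in $\cS$, which by the input degree constraints numbers at most $st \cdot \opt$.
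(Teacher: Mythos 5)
Your proof is correct and takes essentially the same route as the paper: apply \cref{lemma:gen-small-too-big}, union-bound over the $\log s$ stages to get $\Pr[S \in \cS_{\sml}] \le \log s \cdot \negfive$, use linearity of expectation, and then bound the prefactor via $m \le n\cdot t$ and $\opt \ge n/s$. The paper's proof is just a more compressed version of exactly these steps.
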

\begin{proof}
	By \cref{lemma:gen-small-too-big} and a union bound over the $\log s$ many stages, any set $S$ is in $\cS_{\sml}$ with a probability of $\log s \cdot \negfive$. The lemma follows as there are at most $m\leq n \cdot t$ many sets and $\opt\geq n/s$.
\end{proof}
\begin{lemma} \label{lemma:gen-constant-coverage}
	Consider some arbitrary stage $i$. Let $e$ be an arbitrary element and $\xei$ the random variable which equals $0$ if $e$ does not get covered for the first time during stage $i$, and otherwise is equal to the number of sets that contain $e$ and were added to the set cover in the same iteration in which $e$ was covered for the first time. Then, $\ee{\xei} \leq 5$. Furthermore, this bound even holds if we fix all the randomness needed for all stages up to stage $i$ in some arbitrary way.
\end{lemma}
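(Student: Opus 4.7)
The plan is to mirror the proof of \cref{Lemma:constant coverage}, with one new ingredient: the monotonicity of $\degree(\cdot)$ within a single stage that was highlighted in \cref{section:generig-estimating}. First I would fix all the randomness from stages $1,\ldots,i-1$ arbitrarily; this determines $\cE_{i,1}$ and automatically yields the ``furthermore'' clause. For a set $S \ni e$, call $S$ \emph{active} at iteration $k$ if $\degree(S) \ge s/2^i$ at the start of iteration $k$, and let $N_k$ be the number of such active sets containing $e$ (with $N_k = 0$ once $e$ is covered). Because $B_i(S)$ is reused throughout stage $i$ and $\cE_{i,k} \supseteq \cE_{i,k+1}$, the quantity $|B_i(S) \cap \cE_{i,k}|$ is non-increasing in $k$; hence $\degree(S)$ is non-increasing, and $t \ge N_1 \ge N_2 \ge \cdots \ge N_{\log t}$.

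Given $e$ is still free at the start of iteration $k$ and given the value of $N_k$, the number of sets containing $e$ that are added in iteration $k$ is distributed as $\Bin(N_k, 2^k/t)$, since each such active set is included in $\cS_{i,k}$ independently with probability $2^k/t$. This is exactly the random process analyzed in \cref{Lemma:constant coverage}. To turn $N_1,\ldots,N_{\log t}$ into a worst-case fixed sequence I would apply deferred decision making: freeze all the samples $B_i(S)$ and all indicators $\mathbf{1}[S' \in \cS_{i,k'}]$ for sets $S' \not\ni e$. The crucial observation is that the event ``$e$ is still free at iteration $k$ of stage $i$'' rules out any set containing $e$ from ever having been added so far in the whole process; hence, on this event, every addition in every earlier iteration of every stage is a set disjoint from $\{e\}$, and the entire history up to iteration $k$ is a deterministic function of the frozen randomness. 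In particular, $N_k$ is frozen as well.

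With $N_1,\ldots,N_{\log t}$ now a fixed sequence satisfying $t \ge N_1 \ge \cdots \ge N_{\log t} \ge 0$, the same calculation as in the proof of \cref{Lemma:constant coverage} yields
\[
\ee{X_e^{(i)} \mid \text{frozen randomness}} \;\le\; \ee{Y_{(N_1,\ldots,N_{\log t})}} \;\le\; f\bigl((N_1,\ldots,N_{\log t}),\, t\bigr) \;\le\; 5,
\]
where the last inequality is \cref{lemma:seq} with $l = \log t$ and $r = t$. Averaging over the frozen randomness gives $\ee{X_e^{(i)}} \le 5$.

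The main obstacle — and the step where \cref{alg:generic-alg} departs nontrivially from \cref{alg:base} — is justifying that the frozen randomness really does determine $N_k$ on the event that $e$ is free. This rests on two ingredients that must be combined carefully: the monotonicity of $\degree(\cdot)$ within stage $i$ coming from sample reuse, and the self-consistency observation that while $e$ stays free, the addition dynamics involve only sets not containing $e$ and are therefore measurable with respect to the frozen bits alone. Once this decoupling is established, the rest of the proof is a direct invocation of \cref{lemma:seq}.
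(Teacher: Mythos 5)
Your proof is correct and takes essentially the same approach as the paper's: deferred decision making that freezes $B_i(S)$ for all $S$ and the membership indicators $\mathbf{1}[S' \in \cS_{i,k}]$ for sets $S'$ not containing $e$, combined with the monotonicity of $\degree(\cdot)$ within a stage (from sample reuse) to obtain a fixed non-increasing sequence $N_1 \ge \cdots \ge N_{\log t}$, after which the Binomial process of \cref{Lemma:constant coverage} and the bound of \cref{lemma:seq} apply verbatim. The paper states the "on the event $e$ is free, the history is measurable w.r.t.\ the frozen bits" observation somewhat more tersely, but both arguments rest on exactly that decoupling.
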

\begin{proof}
	Let $\cS_e$ be the family of sets that contain $e$. We show this lemma by using the principle of deferred decision making. Suppose that we have fixed all the randomness used in stage $i$ other than the sets in $\cS_e$ at the beginning: e.g. it is not yet determined whether $S \in \cS_{i,k}$ or not for $S \in \cS_e$. We will fix those decisions only in the iteration for which we need them. Let $N_{i,k}$ denote the random variable (with respect to the randomness which we have not fixed) that is equal to the number of sets in $\cS_e$ whose estimated size is at least $s/2^i$ at the beginning of iteration $k$ in stage $i$. As we always use the same elements to estimate the size of a particular set within stage $i$, we have $t \geq N_{i,1} \geq N_{i,2} \geq ... \geq N_{i,\log t}$. As we have fixed all the randomness needed to execute all the stages up to stage $i$, $N_{i,1}$ is just equal to some number $n_{i,1}$. 
	Now, each of those $n_{i,1}$ sets is considered in the first iteration in stage $i$ with probability $\frac{2}{t}$. Thus, the number of sets which cover $e$ in the first iteration is a random variable $Y_1 \sim \mathrm{Bin}(n_{i,1},p_{i,1})$. 
	If $Y_1 > 0$, then $\xei = Y_1$. Otherwise, none of the sets in $\cS_e$ is added to the set cover. Given that information, $N_{i,2}$ is equal to some number $n_{i,2}$. Then, the number of sets containing $e$ in the second iteration is a random variable $Y_2 \sim \mathrm{Bin}(n_2,p_{i,2})$ and so on. Note that this process is completely analogous to the process in the proof of \cref{Lemma:constant coverage}. Thus, we can conclude that $\ee{\xei} \leq 5$. 
\end{proof}
\begin{lemma} \label{lemma:gen-super}
	Let $\setsuperi$ be a family of sets that contains those sets which have more than $4 \cdot {s\over 2^i}$ many free elements at the beginning of stage $i$. Then, $\ee{|\setsuperi|} \leq \frac{n}{t \cdot s^3}$ and $\Pr[|\setsuperi| > \frac{n}{s^2}] \leq \frac{1}{st}$. 
\end{lemma}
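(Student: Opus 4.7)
The plan is to leverage \cref{lemma:gen-big-too-small} applied to the previous stage. Membership in $\setsuperi$ requires more than $4\cdot s/2^i = 2\cdot s/2^{i-1}$ free elements at the beginning of stage $i$, which is the same as having that many free elements at the end of stage $i-1$. So by \cref{lemma:gen-big-too-small} (with stage $i-1$), each individual set lies in $\setsuperi$ with probability at most $\negfive$.

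First I would dispose of the base case $i=1$: since every set has size at most $s < 2s = 4\cdot s/2^1$, we have $\setsuperi[1] = \emptyset$ deterministically, so both bounds are trivial. For $i \geq 2$, I would invoke \cref{lemma:gen-big-too-small} applied to stage $i-1$ (with threshold $s/2^{i-1}$) for each set $S$; this gives $\Pr[S \in \setsuperi] \leq \negfive$. Then by linearity of expectation over the at most $m \leq n \cdot t$ sets in the instance (using the standard bound $m \leq nt$ from each element being in at most $t$ sets),
\[
\ee{|\setsuperi|} \;\leq\; n t \cdot O\!\rb{\frac{1}{(st)^{\log^5 s \cdot \log^5 t}}} \;\leq\; \frac{n}{t \cdot s^3},
\]
where the last inequality uses that $(st)^{\log^5 s \cdot \log^5 t} \gg t^2 s^3$ for $s, t \geq 2$.

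For the high-probability bound, I would simply apply Markov's inequality to the nonnegative random variable $|\setsuperi|$:
\[
\Pr\!\left[|\setsuperi| > \frac{n}{s^2}\right] \;\leq\; \frac{\ee{|\setsuperi|}}{n/s^2} \;\leq\; \frac{n/(t s^3)}{n/s^2} \;=\; \frac{1}{s t}.
\]
There is no real obstacle here; the only care needed is to match ``beginning of stage $i$'' with ``end of stage $i-1$'' so that \cref{lemma:gen-big-too-small} can be invoked, and to verify that the super-polynomial slack $(st)^{\log^5 s \cdot \log^5 t}$ swallows the polynomial factors $nt \cdot ts^3 / n = t^2 s^3$.
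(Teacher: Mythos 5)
Your proposal is correct and matches the paper's proof essentially step for step: both handle $i=1$ trivially, reduce membership in $\setsuperi$ for $i>1$ to the event that $S$ has more than $2\cdot s/2^{i-1}$ free elements at the end of stage $i-1$ (bounded by \cref{lemma:gen-big-too-small}), sum over the at most $n\cdot t$ sets, and finish with Markov's inequality. No issues.
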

\begin{proof}
	Consider an arbitrary set $S$. For $i = 1$, $S$ is never contained in $\setsuperi$. For $i > 1$, $S$ is only in $\setsuperi$ if $S$ has at least $2 \cdot (s/2^{i-1})$ many free elements at the end of stage $i-1$. By~\cref{lemma:gen-big-too-small}, this only happens with a probability of $\negfive$. As we have at most $t \cdot n$ many sets, $\ee{|\setsuperi|} \leq \frac{n}{t \cdot s^3}$. Hence, a Markov bound implies that $\Pr[|\setsuperi| > \frac{n}{s^2}] \leq \frac{1}{st}$. 
\end{proof}

\begin{lemma}  \label{lemma:gen-bound-big-sets}
	Consider the beginning of stage $i$. Let $\setsnormal$ be the family of sets consisting of those sets which are added to the set cover in some arbitrary stage $i$ and additionally, at the beginning of the iteration in which they are added to the set cover, contain at least $\frac{1}{2} \cdot (s/2^i)$ and at most $4 \cdot (s/2^i)$ many free elements. 
	Suppose that the randomness needed for all stages up to $i$ are fixed in advance. Let $n_i$ denote the number of free elements at the beginning of stage $i$ and $s_i := s/2^i$. Then, $\ee{|\setsnormal|} = O(n_i/s_i)$.
\end{lemma}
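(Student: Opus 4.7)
The plan is to mimic the double-counting argument used in the proof of \cref{Lemma:first_iteration}, but replace the role of $X_e$ there with the stage-$i$ analogue $\xei$ defined in \cref{lemma:gen-constant-coverage}. The key observation is that a set $S \in \setsnormal$ was added in some iteration $k$ of stage $i$ at which it still had at least $\tfrac{1}{2}(s/2^i) = s_i/2$ genuinely free elements, so it contributes at least $s_i/2$ to the double-count of (set, element) pairs where the set is added in stage $i$ and the element is one of its free elements at the moment of adding.

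First I would relate $|\setsnormal|$ to the $\xei$'s. Fix the randomness up to the beginning of stage $i$, so that the set $\cE_i$ of free elements at the start of stage $i$ is fixed and has size $n_i$. For any set $S$ added to the cover in stage $i$, every element $e$ that is free at the beginning of the iteration in which $S$ is added is covered for the first time in that very iteration; hence $S$ contributes $1$ to $\xei$ for each such $e$. Summing over all sets added in stage $i$ and using the lower bound on the free-element count of sets in $\setsnormal$,
\begin{equation*}
\frac{s_i}{2}\cdot |\setsnormal|
\;\le\; \sum_{S \text{ added in stage } i} \bigl(\text{\# free elts of } S \text{ when added}\bigr)
\;=\; \sum_{e \in \cE_i} \xei .
\end{equation*}

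Next I would take expectations over the randomness of stage $i$ (conditioning on everything fixed so far). By \cref{lemma:gen-constant-coverage}, $\ee{\xei} \le 5$ for every $e \in \cE_i$ even under this conditioning, so linearity of expectation yields
\begin{equation*}
\frac{s_i}{2}\cdot \ee{|\setsnormal|} \;\le\; \sum_{e\in\cE_i} \ee{\xei} \;\le\; 5 n_i,
\end{equation*}
which rearranges to $\ee{|\setsnormal|} \le 10\, n_i/s_i = O(n_i/s_i)$, as desired.

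The only subtle point — and the main thing to be careful about — is the bookkeeping that makes the double counting $\sum_S (\text{free elts at time of adding}) = \sum_e \xei$ legitimate. It requires that an element $e \in \cE_i$ which is a free element of $S$ at the start of the iteration in which $S$ is added, is indeed being covered for the first time in that iteration (true by definition of ``free''), and that no element outside $\cE_i$ ever contributes (true since $\xei$ is defined only for the first-covering event inside stage $i$). Once that accounting is set up, the rest follows directly from \cref{lemma:gen-constant-coverage}. No separate use of $\setsuperi$ or \cref{lemma:gen-big-too-small} is needed here; those will enter later when comparing $n_i/s_i$ against $\opt$.
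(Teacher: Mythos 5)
Your proposal is correct and is essentially the paper's own proof: both bound $\tfrac{s_i}{2}\cdot|\setsnormal| \le \sum_{e\in\cE_i}\xei$ via the same (set, element) double count and then apply \cref{lemma:gen-constant-coverage} with linearity of expectation, using the fact that the bound $\ee{\xei}\le 5$ holds conditioned on the randomness of earlier stages. The extra bookkeeping you spell out for the double counting is exactly the justification implicit in the paper's one-line inequality.
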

\begin{proof}
	Let $\cE_i$ denote the set of uncovered elements at the beginning of stage $i$. As all sets in $\setsnormal$ have at least $\frac{1}{2} \cdot (s/2^i)$ many free elements at the beginning of the iteration in which they are added to the set cover,
	$$|\setsnormal| \leq \frac{1}{\frac{1}{2} \cdot s_i} \cdot \sum_{e \in \cE_i} \xei$$
	Thus, as $|\cE_i| = n_i$ and $\ee{\xei} \leq 5$,
	$$ \ee{|\setsnormal|} \leq \frac{5 \cdot n_i}{\frac{1}{2} \cdot s_i} = O(n_i/{s_i})$$
\end{proof}
\begin{lemma}\label{lemma:generic-alg-bound-on-setsnormal}
		Consider the beginning of stage $i$. Let $\setsnormal$ be the family of sets that are added to the set cover in stage $i$ and additionally, at the beginning of the iteration in which they are added to the set cover, contain at least $\frac{1}{2} \cdot (s/2^i)$ and at most  $4 \cdot (s/2^i)$ many free elements. Then, $\ee{|\setsnormal|} = O(\opt)$.
\end{lemma}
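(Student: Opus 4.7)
The plan is to combine the conditional bound of \cref{lemma:gen-bound-big-sets} with the control on $|\setsuperi|$ from \cref{lemma:gen-super}. Taking outer expectation over the randomness used in stages $1, \ldots, i-1$ in \cref{lemma:gen-bound-big-sets} immediately gives $\ee{|\setsnormal|} = O(\ee{n_i/s_i})$, so it suffices to prove $\ee{n_i/s_i} = O(\opt)$.

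To relate $n_i/s_i$ to $\opt$, I would use an LP-style counting argument. Let $(x_S)_{S \in \cS}$ be a fractional set cover of the residual instance $(\cS, \cE_i)$ of value $\opt_i$; restricting the optimal LP solution of the original instance to the coverage constraints for $\cE_i$ certifies $\opt_i \leq \opt$. Feasibility gives $\sum_{S \ni e} x_S \geq 1$ for every $e \in \cE_i$, and summing over $\cE_i$ yields $\sum_{S \in \cS} x_S \cdot |S \cap \cE_i| \geq n_i$. I would then split the sum based on whether $S \in \setsuperi$: for $S \notin \setsuperi$ we have $|S \cap \cE_i| \leq 4 s_i$, which contributes at most $4 s_i \cdot \opt_i \leq 4 s_i \cdot \opt$; for $S \in \setsuperi$ I would use the crude bounds $x_S \leq 1$ and $|S \cap \cE_i| \leq s$, contributing at most $s \cdot |\setsuperi|$. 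Dividing by $s_i = s/2^i$ yields
\[
    n_i/s_i \;\leq\; 4\,\opt \;+\; 2^i \cdot |\setsuperi|.
\]

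Finally I would take the outer expectation. By \cref{lemma:gen-super}, $\ee{|\setsuperi|} \leq n/(t s^3)$, and since $i \leq \log s$ we have $2^i \leq s$, so $2^i \cdot \ee{|\setsuperi|} \leq n/(t s^2)$. Because every element requires total fractional weight at least $1$ while each set can cover at most $s$ elements, $\opt \geq n/s$, hence $n/(ts^2) \leq \opt/(ts) = O(\opt)$. Combining with the $4\,\opt$ term gives $\ee{n_i/s_i} = O(\opt)$ and completes the proof.

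The main obstacle is the middle step. In \cref{alg:base} every set at stage $i$ has at most $2 s_i$ free elements, which immediately delivers $\opt \geq \Omega(n_i/s_i)$ and lets the approximation analysis go through term-by-term. In \cref{alg:generic-alg} the size-estimation step allows a (hopefully tiny) family of unusually large sets to survive into stage $i$, so the naive inequality fails. The LP decomposition above is tailored precisely to isolate these surviving large sets and charge their contribution against the very strong tail bound $\ee{|\setsuperi|} \leq n/(t s^3)$ from \cref{lemma:gen-super}, which makes their contribution negligible compared with $\opt$.
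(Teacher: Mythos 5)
Your proof is correct and follows the same overall strategy as the paper: first reduce to bounding $\ee{n_i/s_i}$ via \cref{lemma:gen-bound-big-sets}, then derive the inequality $n_i/s_i \leq 4\,\opt + 2^i\,|\setsuperi|$, and finally charge the second term against the tail bound of \cref{lemma:gen-super}. The only differences are presentational: you obtain the key inequality by double-counting against an LP solution with the clipping $x_S \le 1$, whereas the paper states the equivalent combinatorial fact that after hypothetically adding all of $\setsuperi$, each remaining set has at most $4 s_i$ free elements in $\cE_i$; and you conclude with a direct first-moment bound $2^i\,\ee{|\setsuperi|} \leq n/(ts^2) = O(\opt)$, whereas the paper conditions on the event $\{|\setsuperi| \leq n/s^2\}$, applies Markov's inequality, and uses the crude bound $|\setsnormal| \leq nt$ on the bad event. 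Your packaging is arguably a touch cleaner, since it avoids the conditioning step entirely, but the underlying argument is the same.
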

\begin{proof}

	Assume that we are at the beginning of stage $i$. As before, we denote with $n_i$ the number of free elements at the beginning of stage $i$ and $s_i := s/2^i$. As a reminder, $\setsuperi$ was defined as the sets which have more than $4 \cdot s/2^i$ many free elements at the beginning of stage $i$. We have the following lower bound for $\opt$
	\[
		\opt \geq \frac{n_i - s \cdot |\setsuperi|}{4 \cdot s_i}.
	\]
	
	This holds because after adding all sets in $\setsuperi$ to the set cover, at least $n_i - s \cdot |\setsuperi|$ free elements remain and the maximal number of free elements any set contains is at most $4 \cdot s_i$. 
	Assume that $|\setsuperi| \leq \frac{n}{s^2}$. Then
	\[
		\opt \geq  \frac{n_i - s \cdot |\setsuperi|}{4 \cdot s_i} \geq \Omega(n_i/s_i) - \frac{n}{s} \geq \Omega(n_i/s_i) - \opt.
	\]
	Hence, in this case $\opt = \Omega(n_i/s_i) \geq \Omega(\ee{|\setsnormal|})$. (Note that $|\setsuperi|$ only depends on the randomness needed for all stages prior to stage $i$ and therefore we can apply the previous lemma) \\
	By \cref{lemma:gen-super}, $\Pr[|\setsuperi| > \frac{n}{s^2}] \leq \frac{1}{s \cdot t}$.  Therefore
	\begin{align*}
		\ee{\setsnormal} 
		&=\ee{|\setsnormal|\ |\ |\setsuperi| > \frac{n}{s^2}} \cdot \prob{|\setsuperi| > \frac{n}{s^2}} \\
		&+ \ee{|\setsnormal|\ |\ |\setsuperi| \leq \frac{n}{s^2}} \cdot \prob{|\setsuperi| \leq \frac{n}{s^2}} \\
		&\leq n \cdot t \cdot \frac{1}{s \cdot t} + \ee{|\setsnormal|\ |\ |\setsuperi| \leq \frac{n}{s^2}} \\
		&\leq O(n/s) + O(\opt) \\
		&= O(\opt).
	\end{align*}
\end{proof}

We are now ready to prove the main result of this section.
\begin{proof}[Proof of \cref{theorem:gen-set-cover-size}]
	First, observe that
	\[
		\ee{|\cS_{\cvr}|} \leq \ee{|\cS_{\sml}|} + \sum_{i=1}^{\log s} \ee{|\setsuperi|} + \ee{|\setsnormal|}.
	\]
	Then, from \cref{lemma:gen-bound-small-sets,lemma:gen-super,lemma:generic-alg-bound-on-setsnormal} we conclude
	\begin{align*}
		\ee{|\cS_{\cvr}|} &\leq O(n/s) + \sum_{i=1}^{\log s} \frac{n}{t \cdot s^3} + O(\opt) \\
			&\leq O(\log s) \cdot \opt.
	\end{align*}
\end{proof}

\subsubsection{Bounding the Probability of ``Bad'' Events}

In this section we introduce the notion of {\em bad elements} and {\em bad sets}. Bounding the probability of some element or set being {\em bad} will play a central role in relating all the subsequent algorithms to \cref{alg:generic-alg}. The connection we get is the following: Consider that we are executing \cref{alg:generic-alg} and some subsequent algorithm with the same randomness. Let $S$ be some arbitrary set and assume that there does not exist a bad element or a bad set in the $(c \cdot \log s \cdot \log t)$-hop neighborhood of $S$ for some large enough constant $c$. Then, $S$ is either added to the set cover in both algorithms or in none of them. As the probability is fairly large that there does not exist such a bad element or bad set in the neighborhood of $S$, by the approximation analysis of~\cref{alg:generic-alg}, we conclude that the subsequent algorithm returns a set cover with an expected size of $O(\log s) \cdot \OPT$.

	An element $e$ is a {\em bad element} iff there exist $i \in [\log s]$ and $1\leq k_1\leq k_2\leq \log t$ such that the following conditions hold:
	\begin{enumerate}
		\item $e$ is uncovered at the beginning of iteration $k_1$ of stage $i$.
		\item $e$ is contained in more than $\logten \cdot 2^{k_2-k_1}$ sets in $\cS_{i,k_2}$ with an estimated size of at least $s/2^i$ at the beginning of iteration $k_1$ of stage $i$.
	\end{enumerate}

\begin{lemma} \label{lemma:gen-bad-element}
	For any element $e$, $\Pr[e \text{ is bad }] = \negfour$.
\end{lemma}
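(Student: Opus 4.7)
My plan is to union-bound over the $O(\log s \cdot \log^{2} t)$ triples $(i, k_1, k_2)$ and show that the probability that conditions (1) and (2) both hold for a fixed triple is $\exp(-\Omega(\logten))$. Write $a := \logten \cdot 2^{k_2-k_1}$ for the count threshold in condition~(2), let $\mathcal{T}_e(i,k_1) := \{S \ni e : \degree(S) \geq s/2^i \text{ at the beginning of iteration } k_1 \text{ of stage } i\}$, and set $M := |\mathcal{T}_e(i,k_1)|$. Since the subfamilies $\cS_{i,k}$ in \cref{line:gen-define-Sik} are sampled independently up front, $\cS_{i,k_2}$ is independent of everything that determines $\mathcal{T}_e(i,k_1)$ and of the event ``$e$ is uncovered at the beginning of iteration $k_1$''; conditional on the state at that point, $|\cS_{i,k_2} \cap \mathcal{T}_e(i,k_1)| \sim \Bin(M, 2^{k_2}/t)$.

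I will split on $M$ at the threshold $C := \logten \cdot t/(2e \cdot 2^{k_1})$. In the ``small'' regime $M \leq C$ the Binomial mean is at most $C \cdot 2^{k_2}/t = a/(2e)$, so the upper-tail Chernoff bound in \cref{lemma:chernoff}(B) yields $\Pr[\Bin(M, 2^{k_2}/t) > a] \leq \exp(-\Omega(a)) \leq \exp(-\Omega(\logten))$. Note that for $k_1 = 1$ the inequality $M \leq t \leq C$ is automatic (once $\logten$ exceeds a mild constant), so this case handles the $k_1 = 1$ subcase outright.

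For $k_1 \geq 2$ and $M > C$, I will exploit the monotonicity of $\degree$ within a stage explained in \cref{section:generig-estimating}: because $B_i(S) \cap \cE_{i,k}$ is non-increasing in $k$, the family $\mathcal{T}_e(i,k)$ is non-increasing in $k$, so $|\mathcal{T}_e(i,k_1-1)| \geq M > C$. Moreover, if $e$ is uncovered at the beginning of iteration $k_1$, then no set in $\mathcal{T}_e(i,k_1-1)$ lay in $\cS_{i,k_1-1}$, since any such set would have been added in iteration $k_1-1$ and would have covered $e$. I now condition on all randomness except $\cS_{i,k_1-1}$: crucially, $\mathcal{T}_e(i,k_1-1)$ is determined by $B_i(\cdot)$ and by $\cS_{i,k'}$ for $k' < k_1 - 1$, so it is independent of $\cS_{i,k_1-1}$, and hence $\Pr[\cS_{i,k_1-1} \cap \mathcal{T}_e(i,k_1-1) = \emptyset \mid |\mathcal{T}_e(i,k_1-1)| > C] \leq (1 - 2^{k_1-1}/t)^{C} \leq \exp(-\logten/(4e))$.

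Summing the two cases yields $O(\exp(-\logten/(4e)))$ per triple, and the $O(\log s \cdot \log^{2} t)$ factor from the union bound is absorbed into the $O(\cdot)$ hidden in $\negfour$. The main obstacle I anticipate is the correlation between $\mathcal{T}_e(i,k_1)$ and $\cS_{i,k_1-1}$, both of which shape the state at the beginning of iteration $k_1$. The fix is precisely the enlargement $\mathcal{T}_e(i,k_1) \subseteq \mathcal{T}_e(i,k_1-1)$, which trades a smaller but correlated family for a larger one that is independent of $\cS_{i,k_1-1}$, enabling the clean conditional-independence computation above.
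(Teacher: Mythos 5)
Your proof is correct and follows essentially the same route as the paper's: split on whether $e$ sees ``many'' or ``few'' large sets around iteration $k_1-1$; in the few-sets case apply the upper-tail Chernoff bound to $|\cS_{i,k_2}\cap\mathcal{T}_e(i,k_1)|$, and in the many-sets case argue that $e$ would almost surely have been covered during iteration $k_1-1$, then union-bound over the $O(\log s\cdot\log^2 t)$ triples. The only cosmetic difference is that you parameterize the case split by the count at iteration $k_1$ and lift it to $k_1-1$ via monotonicity of $\degree$, whereas the paper conditions directly on the count at $k_1-1$ (with threshold $\logfive\cdot t/2^{k_1-1}$) and uses monotonicity in the other direction; your explicit independence bookkeeping (conditioning out everything but $\cS_{i,k_1-1}$, resp.\ $\cS_{i,k_2}$) makes precise what the paper leaves implicit.
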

\begin{proof}
First consider the case $k_1 = 1$. The expected number of sets in $\cS_{i,k_2}$ which contain $e$ is at most $t \cdot \frac{2^{k_2}}{t} = 2^{k_2}$. Hence, an application of Chernoff bound (\cref{lemma:chernoff}-\ref{item:delta-at-least-1}) implies that the probability, that the number of sets in $\cS_{i,k_2}$ exceeds $\logten \cdot 2^{k_2-k_1}$, is at most $\negfive$. \\
Next, consider the case $k_1 > 1$ and we consider two subcases for this case. 
\begin{itemize}
	\item{\bf There are at least $\logfive \cdot \frac{t}{2^{k_1-1}}$ many sets that contain $e$ and have estimated size of at least $s/2^i$ at the beginning of iteration $k_1 - 1$ in stage $i$.} Then, the probability that $e$ remains uncovered at the beginning of iteration $k_1$ is at most 
\begin{align*}
\rb{1-\frac{2^{k_1-1}}{t}}^{\logfive \cdot \frac{t}{2^{k_1-1}}} = \negfive. 
\end{align*}
\item{\bf There are less than $\logfive \cdot \frac{t}{2^{k_1-1}}$ many sets that contain $e$ and have estimated size of at least $s/2^i$ at the beginning of iteration $k_1 - 1$ in stage $i$.} 
This implies that the expected number of sets in $S_{i,k_2}$ that contain $e$ and have estimated size of at least $s/2^i$ at the beginning of iteration $k_1$ of stage $i$ is at most $\logfive \cdot \frac{t}{2^{k_1-1}} \cdot \frac{2^{k_2}}{t}$. Hence, an application of Chernoff bound (\cref{lemma:chernoff}-\ref{item:delta-at-least-1}) implies that the number of sets in $\cS_{i,k_2}$ that contain $e$ and have an estimated size of at least $s/2^i$ at the beginning of iteration $k_1$ in stage $i$ exceeds $\logten \cdot 2^{k_2-k_1}$ with a probability of at most $\negfive$. The lemma follows by a union bound over the at most $O(\log s \cdot \log^2 t)$ many choices for $i$, $k_1$ and $k_2$.
\end{itemize}
\end{proof}
A set $S$ is a {\em bad set} iff there exist $1\leq i_1\leq i_2\leq \log s$ such that the number of free elements in $B_{i_2}(S)$ at the beginning of stage $i_1$ exceeds $\logten \cdot 2^{i_2-i_1}$.
\begin{lemma} \label{lemma:gen-bad-set}
 For any set $S$, $\Pr[S \text{ is bad }] = \negfour$.
\end{lemma}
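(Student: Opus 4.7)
The plan is to fix a set $S$ and a pair $(i_1,i_2)$ with $1\le i_1\le i_2\le \log s$, bound the probability of the pair-specific bad event, and union bound over the $O(\log^2 s)$ pairs. The critical independence observation is that the sample $B_{i_2}(S)$ is drawn once at \cref{line:gen-define-Bi-and-pi} of \cref{alg:generic-alg}, so it is independent of all randomness used to execute stages $1,\dots,i_1-1$; in particular, it is independent of which elements of $S$ are free at the beginning of stage $i_1$.

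First, I would control $d_{i_1}(S)$, the number of free elements of $S$ at the start of stage $i_1$. For $i_1=1$ we trivially have $d_{i_1}(S)\le s\le 4s/2^{i_1}$. For $i_1\ge 2$, \cref{lemma:gen-big-too-small} applied to stage $i_1-1$ gives $d_{i_1}(S)\le 2s/2^{i_1-1}=4s/2^{i_1}$ except with probability $\negfive$. Call this good event $\mathcal{G}_{i_1}$ and condition on it, along with all randomness used in stages preceding $i_1$; then $B_{i_2}(S)$ is still fresh, and the number of free elements of $B_{i_2}(S)$ at the start of stage $i_1$ is binomially distributed with at most $4s/2^{i_1}$ trials and success probability $p_{i_2}$. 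If $p_{i_2}=1$, then $\logfive\cdot 2^{i_2}/s\ge 1$, so this count is at most $4s/2^{i_1}\le 4\logfive\cdot 2^{i_2-i_1}<\logten\cdot 2^{i_2-i_1}$ deterministically for sufficiently large $s,t$, and the bad event cannot occur. If $p_{i_2}<1$, the expectation is exactly $4\logfive\cdot 2^{i_2-i_1}$, while the threshold $\logten\cdot 2^{i_2-i_1}$ is larger by a factor of $\Theta(\log^{10}s\cdot \log^{10}t)$; Chernoff (\cref{lemma:chernoff}~(\ref{item:delta-at-least-1})) then yields a failure probability of at most $\exp\rb{-\Omega(\logten\cdot 2^{i_2-i_1})}\le \exp\rb{-\Omega(\logten)}$, which comfortably beats $\negfive$.

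A final union bound handles both the at most $\log s$ possible failures of $\mathcal{G}_{i_1}$ and the $O(\log^2 s)$ Chernoff events, giving a total failure probability of $\log s\cdot\negfive + \log^2 s\cdot\exp\rb{-\Omega(\logten)}=\negfour$, as required. The only minor subtlety is verifying that in the $p_{i_2}=1$ case the definitional lower bound $\logfive\cdot 2^{i_2}\ge s$ is precisely what is needed to kill the bad event on $\mathcal{G}_{i_1}$; everything else reduces to a routine Chernoff estimate on a binomial whose mean and deviation threshold are separated by a polylogarithmic factor.
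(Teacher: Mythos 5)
Your proof follows essentially the same route as the paper's: condition on the good event from \cref{lemma:gen-big-too-small} that $S$ has at most $2s/2^{i_1-1}=4s/2^{i_1}$ free elements at the start of stage $i_1$, apply Chernoff to the fresh Binomial$(d_{i_1}(S),p_{i_2})$ count of free elements in $B_{i_2}(S)$, and union-bound over the $O(\log^2 s)$ choices of $(i_1,i_2)$. You are in fact a bit more careful than the paper's terse write-up in two respects — you explicitly invoke the independence of $B_{i_2}(S)$ from the randomness of stages $1,\ldots,i_1-1$, and you handle the degenerate $p_{i_2}=1$ case separately (which the paper leaves implicit) — but these are refinements of the same argument, not a different approach.
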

\begin{proof}
We first consider the case that $S$ has more than $2 \cdot s/2^{i_1-1}$ many free elements at the beginning of stage $i_1$. \cref{lemma:gen-big-too-small} implies that this happens with a probability of $\negfive$.
Next, suppose that $S$ has less than $2 \cdot s/2^{i-1}$ free elements at the beginning of stage $i_1$. This implies that the expected number of free elements in $B_{i_2}(S)$ is at most $2 \cdot s/2^{i_1-1} \cdot p_{i_2} \leq 4 \cdot \logfive \cdot 2^{i_2-i_1}$. Hence, an application of Chernoff bound (\cref{lemma:chernoff}-\ref{item:delta-at-least-1}) implies that the number of free elements in $B_{i_2}(S)$ is more than $\logten \cdot 2^{i_2-i_1}$ with a probability of at most $\negfive$. The lemma follows by a union bound over the $O(\log^2(s))$ many possible choices for $i_1$ and $i_2$.
\end{proof}

	Consider the bipartite graph $G$ induced by the set system $(\cS, \cE)$. A set $S$ (resp., An element $e$) has a {\em bad neighborhood} iff there exists a bad element (resp., a bad set) in the $10$-hop neighborhood of $S$ (resp., $e$) in $G$. 
\begin{lemma} \label{lemma:gen-bad-neighborhood}
The probability that an arbitrary set or element has a bad neighborhood is at most $\negthree$.
\end{lemma}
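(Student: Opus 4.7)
The plan is to prove this bound by a straightforward union bound over all sets and elements in the $10$-hop neighborhood, leveraging the per-vertex bad probability bounds from \cref{lemma:gen-bad-element} and \cref{lemma:gen-bad-set}.

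First I would bound the size of the $10$-hop neighborhood in the bipartite graph $G$. Since each set contains at most $s$ elements and each element is contained in at most $t$ sets, a single hop from a set to an element blows up the frontier by at most $s$, and a single hop from an element to a set blows up the frontier by at most $t$. Hence starting from any vertex (either a set or an element), the number of vertices within distance $10$ is at most $(st)^{10}$; in particular, the number of bad-eligible elements (respectively bad-eligible sets) in this neighborhood is at most $(st)^{10}$.

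Next I would apply a union bound. Fix an arbitrary set $S$ (the argument for an element $e$ is symmetric). For $S$ to have a bad neighborhood, some element $e'$ within $10$ hops of $S$ must be bad. By \cref{lemma:gen-bad-element} each such element is bad with probability $\negfour$, and there are at most $(st)^{10}$ candidates. Therefore
\[
\Pr[S \text{ has a bad neighborhood}] \le (st)^{10} \cdot \negfour = O\!\left(\frac{1}{(st)^{\log^4 s \cdot \log^4 t - 10}}\right) = \negthree,
\]
where the last equality uses that $\log^4 s \cdot \log^4 t - 10 \ge \log^3 s \cdot \log^3 t$ for all sufficiently large $s,t$ (and the small-$s,t$ regime can be absorbed into the $O(\cdot)$ constant). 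The bound for an arbitrary element $e$ follows identically by invoking \cref{lemma:gen-bad-set} in place of \cref{lemma:gen-bad-element}.

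Since there is no real technical obstacle here, the only subtlety I would flag is the (trivial) asymmetry that the $10$-hop neighborhood of a set contains both sets and elements, but only elements can contribute to $S$ being bad (by the definition of bad neighborhood), and similarly only sets matter in the neighborhood of an element; the union bound then ranges over at most $(st)^{10}$ vertices of the relevant type in each case.
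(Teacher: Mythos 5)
Your proof is correct and matches the paper's argument exactly: a union bound over the $\poly(st)$ vertices in the $10$-hop neighborhood, applying \cref{lemma:gen-bad-element} for the bad probability of each element (when bounding a set's neighborhood) and \cref{lemma:gen-bad-set} for each set (when bounding an element's neighborhood). The paper states this in one line; you simply spell out the $(st)^{10}$ neighborhood bound and the exponent arithmetic, and your note on the set/element asymmetry in the definition of bad neighborhood is accurate.
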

\begin{proof}
	The lemma directly follows by \cref{lemma:gen-bad-element} and \cref{lemma:gen-bad-set} together with a union bound over the at most $\poly(st)$ many sets and elements in the $10$-hop neighborhood of $S$.
\end{proof}

\newcommand{\numel}{\log^2 t \cdot \log^2 s}
\newcommand{\numsets}{4 \cdot \log^2 s \cdot 2^{3T}}
\newcommand{\hsil}{\widehat{\cS}_{i,k}}
\newcommand{\hsjk}{\widehat{\cS}_{i,j \cdot T + \ell}}

\section{Beyond the Parnas-Ron Paradigm}
\label{sec:sqrt-log-t}

By applying the standard reduction of Parnas and Ron~\cite{parnas2007approximating} for transforming distributed algorithms into LCAs, \cref{alg:base} and \cref{alg:generic-alg} can be simulated in the local computation model with a query complexity of $(st)^{O(\log{s} \log{t})}$. In this section, we present an LCA that can be implemented with a query complexity of $(st)^{O\rb{\log{s} \sqrt{\log{t}}}}$. 
%Except for some minor details, \cref{alg:sqrt-t} is identical to \cref{alg:generic-alg}. However, these details make it possible to sparsify the LCA implementation.

\subsection*{Overview of Our Approach}
We now discuss our general ideas for obtaining a query-efficient LCA simulation. These ideas will be applied to the algorithms developed in this section and in \cref{sec:log-log-t}. The main purpose of the discussion that follows is to illustrate a way to design LCAs for the set cover so that it suffices to access only a ``small'' number of sets and elements in the direct neighborhood of another set/element.

\paragraph{LCA from the point of view of a set.} Our LCA simulations test whether a given set $S$ is in the cover or not by recursively testing whether $S$ has been added to the cover in any of the rounds, starting from round $\log{s} \cdot \log{t}$. To be more detailed, consider a round $r$ of stage $i$. To decide whether $S$ has been added to the set cover by the end of round $r$ in \cref{alg:generic-alg}, we perform two steps: 
\begin{enumerate}[(1)]
	\item\label{item:test-S-added-to-the-cover} We test whether $S$ is added to the cover during the first $r - 1$ rounds of the algorithm.
	\item\label{item:test-S-not-added-size-of-BiS} If $S$ has not been added to the set cover yet, we count the number of yet uncovered elements in $B_i(S)$ at the beginning of round $r$ and proceed based on the count.
\end{enumerate}
To perform step~\eqref{item:test-S-added-to-the-cover}, it suffices to perform the same two steps for only round $r-1$. For step~\eqref{item:test-S-not-added-size-of-BiS}, as we already know the outcome of step~\eqref{item:test-S-added-to-the-cover}, we expect that $S$ does not have more than $10 \cdot s/2^i$ many free elements at the beginning of stage $i$. As each element of $S$ is contained in $B_i(S)$ with a probability of  $\min\rb{1,\logfive \cdot \tfrac{2^i}{s}}$, we furthermore expect that $B_i(S)$ does not have more than $\poly(\log s\log t)$ many free elements at the beginning of stage $i$, and therefore at the beginning of round $r$. Intuitively, the sooner we detect those uncovered elements, the more efficient LCA we are likely to obtain. The reason is that testing whether an element is free or not requires performing additional recursive tests. Since in general $B_i(S)$ may contain a lot of elements that are covered before stage $i$, these recursive tests for each covered element can be very expensive query-wise. To alleviate that, informally speaking, we remove all previously covered elements  at the beginning of a stage $i$ (see \cref{alg:main-alg}) or at the beginning of some phase $j$ that consists of multiple iterations (see \cref{alg:sqrt-t}). Therefore, within a phase $j$ (or stage $i$), $S$ only needs to check for all elements in this ``sparsified'' version of $B_i(S)$, which only includes the free elements in $B_i(S)$, whether they are covered or not at the beginning of round $r$. We expect this ``sparsified'' version of $B_i(S)$ to only consist of $O(\poly(\log s \log t))$ many elements.

\paragraph{LCA from the point of view of an element.} In the following, assume that round $r$ corresponds to iteration $k$ in stage $i$. Furthermore, assume that we merge iterations into multiple phases, with iteration $k$ being part of phase $j$. In order to test whether an element $e$ is covered for the first time in round $r$ of \cref{alg:generic-alg}, our LCA simulation performs the following test:
\begin{itemize}
	\item Given that $e$ was not yet covered, does {\em any} of the sets $e$ is contained in get added to the set cover in round $r$?
\end{itemize}
To answer this question efficiently, ideally, $e$ should invoke tests on a small number of sets. We develop the following two ideas to reduce the number of sets that the above question should be answered for:
\begin{enumerate}[(i)]
	\item As in \cref{alg:generic-alg}, we sample all sets that are potentially considered in stage $i$ of iteration $k$ beforehand. This sample is denoted by $\cS_{i,k}$.
	\item At the beginning of phase $j$, we remove all sets from $\cS_{i,k}$ that have an estimated set size smaller than $s/2^i$. 
\end{enumerate} 
These ideas have the following positive consequence. Assume that each phase has length $T$ and that round $r$ corresponds to the $\ell$-th iteration within phase $j$. If $e$ is still uncovered at the beginning of phase $j$, then we expect $e$ to be contained in fewer than $10 \cdot \frac{t}{2^{j \cdot T}}$ many ``large'' sets, as each ``large'' set is added to the set cover with a probability of $\frac{2^{j \cdot T - 1}}{t}$ in  iteration $j \cdot T - 1$. Each of those large sets is contained in $\cS_{i,k}$ with a probability of $\frac{2^{j\cdot T+l}}{t}$. Hence, we expect that the number of sets containing $e$ and belonging to the sparsified version of $\cS_{i,k}$ is at most $2^{l} \leq 2^T$, which for our choice of $T$ is significantly less than $t$.

\paragraph{Handling expectations.} In the previous paragraphs, most of the statements we provided on the sparsified neighborhoods of elements and sets were stated only in expectation. However, as noted in \cref{sec:generic-algorithm}, the query complexity still needs to be guaranteed even if the random bits are chosen adversarially. To guarantee this, we immediately add a set to the set cover if the sparsified version of $B_i(S)$ contains much more elements than expected at the beginning of a phase (\cref{alg:sqrt-t}) or stage (\cref{alg:rec-split}). Likewise, we ``pretend'' that an element is covered if $e$ is contained in much more sets than expected in the sparsified version of $\cS_{i,k}$. This is also the only difference to \cref{alg:generic-alg}. As $e$ reports to be covered, it might happen that $e$ is still uncovered at the end of the algorithm. In that case, we simply add one set to the set cover that contains $e$. In this way, bad randomness can only affect the approximation guarantee, but not the query complexity of the LCA. In this way, intuitively, while not completely true, one can think about executing one phase in a sparsified set system where the maximal set size is $O(\poly(\log s \log t))$ and the maximal number of sets any element is contained in is $2^{O(T)}$.  Note that there is a trade-off for the phase length $T$. With increasing $T$, the sparsified versions are becoming less and less sparse. Likewise, in order to produce the illusion of a sparsified set system, we need to query the set system corresponding to the previous phase $\poly(st)$ many times for each query that we receive for the sparsified set system. We balance this tradeoff by  setting $T = \sqrt{\log t}$.

\subsection{Details about \cref{alg:sqrt-t}}
In this section we design \cref{alg:sqrt-t}. Compared to \cref{alg:generic-alg}, \cref{alg:sqrt-t} processes iterations in groups of $T$. Each group is called a \emph{phase}. \cref{alg:sqrt-t} also samples sets $B_i(S)$, for each stage $i$ and each set $S$. At the beginning of phase $j$, we compute the set $B_{i,j}(S)$ by removing all elements in $B_i(S)$ that are covered at the beginning of phase $j$. The LCA can efficiently estimate the number of free elements within phase $j$ by only counting the number of free elements in $B_{i,j}(S)$. Furthermore, $\hcS_{i,k}$ denotes the family of sets obtained by removing all the sets from $\cS_{i,k}$ that have an estimated number of free elements smaller than $s/2^i$ at the beginning of the respective phase.

\begin{algorithm}[t]
	$T \eqdef \sqrt{\log(t)}$
	
	\For{each pair of $(i,k)$ where $i \in [\log s]$ and $k \in [\log t]$}{
		
		$B_i(S)$ and $\cS_{i,k}$ are generated in the exact same way as in \cref{alg:generic-alg} 
		
	}

	\For{ $i = 1$ to $\log s$
	 \label{line:sqrt-t-stage-loop}}{
		\For{phase $j = 0$ to  $T-1$}{
			%$\cE_{i, j} \leftarrow$ the set of elements not covered so far.\label{line:preprocessing-begin}
			\For{each set $S \in \cS$}{
				%\tcc{It is sufficient to know $B_{i,j}(S)$ in order to calculate $\degree(S)$ within iteration $j$ of stage $i$} 
				$B_{i,j}(S) \gets B_i(S) \cap \cE_{i,j}$ 
				\tcp{$\cE_{i,j}$ denotes the set of free elements at the beginning of the phase}
				\If(\tcp*[h]{test whether $S$ is a bad set}){$|B_{i,j}(S)| \geq \logten$}{
					{\bf add} $S$ to $\cS_{\cvr}$.  \label{line:alg2_large_set}
				}
			}

			\For{each $k \in \{j \cdot T + 1,j \cdot T+2, ...,(j+1) \cdot T\}$}{		    	
				$\hsil \gets \{S \;|\; S \in \cS_{i,k} \text{ and } \degree(S) \geq s/2^i\}$ \\
				\For{each $e\in \cE_{i,j}$ in parallel\label{line:loop-sqrt-t-discard-high-degree-els}}{
					\If(\tcp*[h]{test whether $e$ is a bad element})
					{$d_{\hsil}(e) \geq \logten \cdot 2^T$\label{line:sqrt-t-if-high-degree-element}}{				   			
						{\it pretend} that $e$ is covered. 
				   		\tcp{$e$ will be covered in~\cref{line:sqrt-t-clean-up}.}\label{line:sqrt-t-pretend}
				   	}
				}
			}

			\label{line:sqrt-t-after-filtering}

			\For{iteration $\ell$ = 1 to $T$ \label{line:sqrt-t-the-most-inner-loop}}{
				\For{each set $S \in \hsjk$ in parallel} {
		    		\If{$\degree(S) \geq s/2^i$\label{line:estimate-set-test}}{ 
		  			  	{\bf add} $S$ to $\cS_{\cvr}$. \label{line:alg2_add_S}
	    			}

				}

			} 
		}
	}

	\For(\tcp*[h]{Handling bad elements}){each free element in parallel $e$}{
		{\bf add} the set with smallest ID which contains $e$ to $\cS_{\cvr}$.\label{line:sqrt-t-clean-up}
	}	
	\caption{A variant of \cref{alg:generic-alg} that can be simulated by an LCA with $(st)^{O(\log{s} \sqrt{\log{t}})}$ many queries. \label{alg:sqrt-t}}
\end{algorithm}

\newcommand{\oracleBij}{\cO_{B_{i,j}}(S)}
\newcommand{\oracleSij}{\cO_{\hat{S}_{i,j}}(e)}

\subsection{Simulation of \cref{alg:sqrt-t} in LCA}

\label{sec:query-complexity-of-alg-sqrt-t}
In the rest of this section we describe the LCA simulation of \cref{alg:sqrt-t} and analyze its query complexity. 
Our final goal is to provide oracle access to the set cover produced by \cref{alg:sqrt-t}. 
We call this oracle $\Oracle$. That is, $\Oracle(S)$ answers if $S$ is part of the set cover or not. 
It is convenient to define intermediate oracles for the analysis. 
Namely, $\Oracle_i(S)$ outputs if $S$ is part of the set cover at the end of the $i$-th stage, $\Oracle_{i,j}(S)$ outputs if $S$ is part of the set cover at the end of phase $(i,j)$ and $\Oracle_{i,j,\ell}(S)$ outputs if $S$ is part of the set cover at the end of iteration $(i,j,\ell)$. We remark that $\Oracle_{i,j,0}(S)$ is defined to output whether $S$ is added to the set cover at the beginning of iteration $(i,j,1)$. 
Note that the oracle $\Oracle_{i,j,T}$ is always equal to the oracle $\Oracle_{i,j}$ and $\Oracle_{i,T-1}$ is always equal to $\Oracle_i$. However, $\Oracle_{\log(s)}$ is not equal to $\Oracle$ because of \cref{line:sqrt-t-clean-up}. We also define oracles $\OracleEl_i(e)$, $\OracleEl_{i,j}(e)$ and $\OracleEl_{i,j,\ell}(e)$ which answer if $e$ was covered at the end of stage $i$, phase $(i,j)$ or iteration $(i,j,\ell)$, respectively. Note that they also answer that $e$ is covered even if $e$ only pretends that it is covered. We also remark that $\OracleEl_{i,j,0}(e)$ is defined to output whether $e$ is covered by any of the sets at the beginning of iteration $(i,j,1)$. 

\paragraph{Description of $\Oracle_{i, j}$ and $\OracleEl_{i, j}$ given query access to $(\cS_{i,j}, \cE_{i,j})$.}
In this section, we assume to have query access to $(\cS_{i,j}, \cE_{i,j})$ which respectively denote the collection of unselected sets and uncovered elements at beginning of phase $(i,j)$. Given that, we describe how to answer oracles $\Oracle_{i, j}(S)$ and $\OracleEl_{i, j}(e)$ for some set $S$ or element $e$ by querying $(\cS_{i,j}, \cE_{i,j})$ at most $\poly(st)$ times.
	We first note that it takes $O(1)$ many queries to decide for a set $S$ and some $k \in [\log t]$ if $S \in \cS_{i,k}$: $S$ is contained in $\cS_{i,k}$ with a probability of ${2^{k} \over t}$.
	
	Furthermore, for a given set $S$, we can get all elements in $B_{i,j}(S)$ with $O(s)$ many queries to $(\cS_{i,j},\cE_{i,j})$ and for a given element $e \in \cE_{i,j}$ and some $k \in \{j \cdot T + 1, ..., (j+1) \cdot T\}$, we can get all the sets in $\hcS_{i,k}$ that contain $e$ with $O(st)$ many queries to $(\cS_{i,j},\cE_{i,j})$. \\
	
	To answer $\Oracle_{i, j, \ell}(S)$, we first check whether $S$ was already added to the set cover at the beginning of iteration $(i,j,\ell)$ or not:
\begin{enumerate}
\item{$\boldsymbol{\ell=0}$:} we simply check whether $S \notin \cS_{i,j}$
\item{$\boldsymbol{\ell>0}$:} we recursively invoke $\Oracle_{i, j, \ell-1}(S)$.
\end{enumerate}
	
If $S$ was indeed added, then we can immediately answer $\Oracle_{i, j, \ell}(S)$. Otherwise, we need to figure out whether $S$ gets added in iteration $(i,j,\ell)$ or not. To that end, we again consider two cases:
%in~\cref{line:alg2_large_set} or~\cref{line:alg2_add_S}. 

\begin{enumerate}
\item{$\boldsymbol{\ell=0}$:}  $S$  gets added to the set cover iff $|B_{i,j}(S)| \geq \logten$.
\item{$\boldsymbol{\ell>0}$:} we first check whether $S \in \cS_{i,j\cdot T+\ell}$. If $S \in \cS_{i,j\cdot T+\ell}$, then for each element $e \in B_{i,j}(S)$, we invoke the oracle $\OracleEl_{i, j, \ell-1}(e)$. If the estimated number of free elements in $S$ is at least $s/2^i$, then $S$ gets added to the set cover in iteration $(i,j,\ell)$, otherwise not. This step requires at most $\logten$ many recursive calls to the oracle $\OracleEl_{i,j, \ell-1}$.
\end{enumerate}
Next, we describe how to answer $\OracleEl_{i, j, \ell}(e)$. To do so, we first check whether $e$ was already covered (or pretends to be covered) at the beginning of iteration $(i,j,\ell)$. 
\begin{enumerate}
\item{$\boldsymbol{\ell=0}$:} we simply check whether $e\notin \cE_{i,j}$,
\item{$\boldsymbol{\ell>0}$:} we recursively invoke $\OracleEl_{i, j, \ell-1}(e)$.
\end{enumerate}
If $e$ is not covered, then we proceed as follows:
\begin{enumerate}
\item{$\boldsymbol{\ell=0}$:} we first invoke $\Oracle_{i,j,0}$ for all sets which contain $e$. If none of them is added to the set cover by iteration $(i,j,0)$, then we check if $e$ pretends to be covered in \cref{line:sqrt-t-pretend} by checking for each $k \in \{j \cdot T + 1, ..., (j+1)\cdot T\}$ if $e$ is contained in more than $\logten \cdot 2^T$ many sets in $\hcS_{i,k}$. 
\item{$\boldsymbol{\ell>0}$:} Otherwise, for all sets in $\hcS_{i,j \cdot T + \ell}$ containing $e$, we invoke the oracle $\Oracle_{i, j, \ell}$ to check if any of them is added to the set cover by the end of iteration $(i,j,\ell)$. If that's the case, then $e$ is covered after iteration $(i,j,\ell)$. This step requires at most $\logten \cdot 2^T$ recursive calls to $\Oracle_{i,j,\ell-1}$ oracle.
\end{enumerate}

	Let $Q_{set}(\ell)$ and $Q_{el}(\ell)$ be an upper bound on the total number of queries to $(\cS_{i,j},\cE_{i,j})$ that we need in order to answer  $\Oracle_{i, j, \ell}(S)$ and $\OracleEl_{i, j, \ell}(e)$ respectively. From the description, we derive the following recursions:
	\begin{align*}
		Q_{set}(0) &= O(\poly(st)) \\
		Q_{set}(\ell) & = Q_{set}(\ell-1) + \logten \cdot Q_{el}(\ell-1) + O(s) \\
					& \leq \poly(\log s) \cdot 2^{O(T)} \cdot (Q_{set}(\ell-1) + Q_{el}(\ell-1)) + O(st) \text{ for $\ell > 0$} \\
		Q_{el}(0) & = O(\poly(st)) \\
		Q_{el}(\ell) & = Q_{el}(\ell-1) +\logten \cdot 2^T \cdot Q_{set}(\ell) + O(st) \\
		& \leq \poly(\log s) \cdot 2^{O(T)} \cdot (Q_{set}(\ell-1) + Q_{el}(\ell-1)) + O(st) \text{ for $\ell > 0$}
	\end{align*}
	Therefore, we get $Q_{set}(\ell) = Q_{el}(\ell) = O(\poly(st))\cdot (\poly(\log s) \cdot 2^{O(T)})^\ell$. This implies $Q_{set}(T) = Q_{el}(T)  = O(\poly(st))$. Recall that we have $\Oracle_{i,j,T} = \Oracle_{i,j}$ and $\OracleEl_{i,j,T} = \OracleEl_{i,j}$. Thus, we can answer both $\Oracle_{1,0}$ and  $\OracleEl_{1,0}$ with $O(\poly(st))$ many queries.
	
\paragraph{Description of $\Oracle_{i,j}$ and $\OracleEl_{i,j}$ given query access to  $\Oracle_{i,j-1}$ and $\OracleEl_{i,j-1}$ ($j > 1$).} 
	It is easy to verify that we can give query access to $(\cS_{i,j},\cE_{i, j})$ by using $O(st)$ many queries to  $\Oracle_{i,j-1}$, $\OracleEl_{i,j-1}$ and the original set cover instance. 
	Together with the previous paragraph, this implies that we can answer $\Oracle_{i,j}(S)$ and $\OracleEl_{i,j}(e)$ with $O(\poly(st))$ many oracle calls to $\Oracle_{i,j-1}$, $\OracleEl_{i,j-1}$ and the original set cover instance.

\paragraph{Description of $\Oracle_{i}$ and $\OracleEl_{i}$ given query access to $(\cS_{i,0},\cE_{i,0})$.}
	The query complexity analysis of the previous case together with a simple induction argument implies that we can answer $\Oracle_{i,j}(S)$ and $\OracleEl_{i,j}(e)$ with $(st)^{O(j)}$ many queries to $(\cS_{i,0},\cE_{i,0})$. In particular, this implies that we can answer the oracles corresponding to the last phase in stage $i$, namely $\Oracle_{i,T-1}(S)$ and $\OracleEl_{i,T-1}(e)$, with $(st)^{O(\sqrt{\log(t)})}$ many queries to $(\cS_{i,0},\cE_{i,0})$ and the original set cover instance. Remember that $\Oracle_{i,T-1}(S) = \Oracle_i(S)$ and $\OracleEl_{i,T-1}(e) = \OracleEl_i(e)$. This implies that we can answer $\Oracle_{i}(S)$ and $\OracleEl_{i}(e)$ with $(st)^{O(\sqrt{\log(t)})}$ many queries to $(\cS_{i,0},\cE_{i,0})$ and the original set cover instance.
	
\paragraph{Description of $\Oracle_{i}$ and $\OracleEl_{i}$ given query access to $\Oracle_{i-1}$ and $\OracleEl_{i-1}$ ($i > 1$).}
		It is easy to verify that we can give query access to $(\cS_{i,0},\cE_{i, 0})$ by using $O(st)$ many queries to  $\Oracle_{i-1}$, $\OracleEl_{i-1}$ and the original set cover instance. Together with the query complexity analysis of the previous paragraph, this implies that we can answer $\Oracle_{i}(S)$ and $\OracleEl_{i}(e)$ with $(st)^{O(\sqrt{\log(t)})}$ many oracle calls to $\Oracle_{i-1}$ and $\OracleEl_{i-1}$ and the original set cover instance.
	
\paragraph{Description of $\Oracle$ given query access to $(\cS,\cE)$.}
	Note that $(\cS_{1,0},\cE_{1,0}) = (\cS,\cE)$. Hence, a simple induction argument implies that we can answer $\Oracle_{i}(S)$ and $\OracleEl_{i}(e)$ with $(st)^{O(i \cdot \sqrt{\log(t)})}$ many queries to $(\cS,\cE)$. In particular, this implies that we can answer $\Oracle_{\log(s)}(S)$ with $(st)^{O(\log(s) \cdot \sqrt{\log(t)})}$ many queries to $(\cS,\cE)$. Now, it is straightforward to implement $\Oracle(S)$ with $O(st)$ many calls to $\Oracle_{\log(s)}$ and $(\cS,\cE)$. This implies that the total query complexity is $(st)^{O(\log(s) \cdot \sqrt{\log(t)})}$.

\subsection{Approximation Proof}
In this section we analyze the approximation guarantee of \cref{alg:sqrt-t} and prove the following result.
\begin{theorem}\label{thm:main-theorem-of-sqrt-t-approximation}
Let $\cS_{\cvr}$ be the solution returned by \cref{alg:sqrt-t}. Then, $\ee{|\cS_{\cvr}|} = O(\log s) \cdot \opt$.
\end{theorem}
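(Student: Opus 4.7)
The plan is to couple Algorithm~\ref{alg:sqrt-t} with Algorithm~\ref{alg:generic-alg} under identical random bits (the same $B_i(S)$'s and $\cS_{i,k}$'s) and partition the output $\cS_{\cvr}$ of Algorithm~\ref{alg:sqrt-t} into three disjoint groups, bounding the expected size of each. Group~I consists of sets added on Line~\ref{line:alg2_large_set} because $|B_{i,j}(S)| \geq \logten$; Group~II consists of sets added on the clean-up Line~\ref{line:sqrt-t-clean-up}; and Group~III consists of all remaining sets, i.e., those added on Line~\ref{line:alg2_add_S} via the estimated-size test.

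For Group~I, every $S$ counted there satisfies the defining condition of a bad set with $i_1 = i_2 = i$, so by Lemma~\ref{lemma:gen-bad-set} any particular $S$ contributes with probability $\negfour$. Summing over the at most $nt$ sets and using $\opt \geq n/s$ gives $\E[|\text{Group I}|] = O(\opt)$. For Group~II, each set in it can be charged to a distinct element $e$ that is still free at the end of the main loop. The only way $e$ can remain uncovered is if $e$ pretended to be covered on Line~\ref{line:sqrt-t-pretend} at some iteration, which by definition makes $e$ a bad element. Lemma~\ref{lemma:gen-bad-element} and a union bound over all $n$ elements yield $\E[|\text{Group II}|] = O(\opt)$.

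For Group~III, I compare it directly to the output of Algorithm~\ref{alg:generic-alg} under the same randomness. The key claim is that whenever no bad set and no bad element appears in a bounded-radius neighborhood of $S$ (large enough to contain every object that ever influences $S$'s decision), the two algorithms agree on whether $S$ is added: the pretend-covered rule fires only on bad elements, and the early-add rule on Line~\ref{line:alg2_large_set} fires only on bad sets, so the estimated set sizes and free-element counts that drive Line~\ref{line:alg2_add_S} in Algorithm~\ref{alg:sqrt-t} coincide with those in Algorithm~\ref{alg:generic-alg}. Hence the symmetric difference between Group~III and the set cover produced by Algorithm~\ref{alg:generic-alg} is contained in the family of sets whose neighborhood contains a bad object; by (a suitably enlarged version of) Lemma~\ref{lemma:gen-bad-neighborhood}, this has expected size at most $nt \cdot \negthree = O(\opt)$. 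Combining with Theorem~\ref{theorem:gen-set-cover-size} gives $\E[|\cS_{\cvr}|] \le \E[|\text{Algorithm \ref{alg:generic-alg} output}|] + O(\opt) = O(\log s) \cdot \opt$.

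The main obstacle is formalizing the coupling in the third step. One must enumerate every mechanism by which Algorithm~\ref{alg:sqrt-t} can diverge from Algorithm~\ref{alg:generic-alg} --- the pretend-covered rule, the early-add rule, and the downstream effect of either on later stages via the surviving elements of $B_{i,j}(S)$ and the filtered families $\hsil$ --- and argue that each divergence witnessed at $S$ forces a bad element or bad set within a bounded-radius neighborhood of $S$. Care is needed because the dependency depth across $\log s \cdot \log t$ rounds exceeds the $10$-hop radius fixed in Lemma~\ref{lemma:gen-bad-neighborhood}, but since a polynomial-radius neighborhood still contains only $\poly(st)$ objects, the same union-bound argument establishes the required $\negthree$ failure probability with a larger constant, which is absorbed without affecting the final bound.
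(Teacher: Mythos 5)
Your proposal is essentially the same approach as the paper's: couple \cref{alg:sqrt-t} with \cref{alg:generic-alg} under shared randomness, argue that divergence at a set $S$ forces a bad set or bad element within a bounded-radius neighborhood, and transfer \cref{theorem:gen-set-cover-size} across the coupling at a cost of $O(\opt)$. The paper implements this as a single uniform state-comparison (\cref{lemma:sqrt-t-ten-hop} and \cref{lemma:p_r-is-small}) rather than splitting $\cS_{\cvr}$ into your Groups I/II/III, but that is a packaging choice, not a different technique.

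Two issues in the write-up are worth flagging. First, the sentence ``a polynomial-radius neighborhood still contains only $\poly(st)$ objects'' is false: a $\Theta(\log s \cdot \log t)$-hop ball in the set-element bipartite graph contains up to $(st)^{\Theta(\log s \cdot \log t)}$ objects. The union bound still closes because a single object is bad with probability $\negfour$, and $(st)^{O(\log s \cdot \log t)}\cdot \negfour$ is still $\negthree$ or smaller; this is exactly the slack the paper uses, but you should state the count correctly, since ``a larger constant'' substantially undersells what has to be absorbed. Second, your Group I and Group II bounds are presented as if they invoke \cref{lemma:gen-bad-set} and \cref{lemma:gen-bad-element} directly, but those lemmas bound the probability of badness in the execution of \cref{alg:generic-alg}; Group I/II membership is a condition in the execution of \cref{alg:sqrt-t}. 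Relating the two already requires the coupling you defer to ``the third step,'' so the three-way decomposition does not actually isolate anything --- every group ultimately leans on the same neighborhood-agreement argument. The paper avoids this by comparing states round-by-round with a shrinking radius $T_r = 10(\log s \cdot \log t + 1 - r)$ and proving $p_r \le r\cdot c \cdot \negonenoO$ by induction (\cref{lemma:p_r-is-small}); this is the precise formalization of the ``bounded-radius neighborhood large enough to contain every object that influences $S$'' that you correctly identify as the main obstacle, and it handles your Groups I, II, III all at once.
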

To prove \cref{thm:main-theorem-of-sqrt-t-approximation}, we highly reuse the statements proved for \cref{alg:generic-alg}. In particular, \cref{theorem:gen-set-cover-size} shows that \cref{alg:generic-alg} in expectation returns a $O(\log{s})$-approximate solution. In this section, we will analyze the similarity between \cref{alg:sqrt-t} and \cref{alg:generic-alg}, and show that from the point of view of most of the elements and sets these two algorithms are identical. Then we apply \cref{theorem:gen-set-cover-size} to prove \cref{thm:main-theorem-of-sqrt-t-approximation}.

We now briefly describe the notation and the setup that is used for proving the required lemmas. 
First, we randomly choose sets $B_i(S)$ and families $\cS_{i,k}$ as described in \cref{line:gen-define-Bi-and-pi} and \cref{line:gen-define-Sik} of \cref{alg:generic-alg}. After that we execute \cref{alg:generic-alg} and \cref{alg:sqrt-t} ``simultaneously'' with the same $B_i(S)$'s and  $\cS_{i,k}$'s. For $r \in \{0,1,\ldots ,\log s\cdot \log t\}$, some set $S$ is considered to have the same state after $r$ rounds in both algorithms iff $S$ was either added to the set cover in both algorithms or in none of them after $r$ rounds. Similarly, an element $e$ is considered to be in the same state in both algorithms after $r$ rounds iff $e$ is still free after $r$ rounds in both algorithms or $e$ is covered in both algorithms after $r$ rounds. We say that the $T$-hop neighborhood of some set $S$ (resp., element $e$) is in the same state after $r$ rounds in both algorithms iff all the sets and all the elements in the $T$-hop neighborhood of $S$ (resp., $e$) have the same state in both algorithms after $r$ rounds.

\begin{lemma} \label{lemma:sqrt-t-ten-hop}
Consider an arbitrary round $r < \log s \cdot \log t$. For a set $S$ (an element $e$, respectively) let $EQ^{(r)}_{S,10}$ ($EQ^{(r)}_{e,10}$, respectively) denote the event that the $10$-hop neighborhood of $S$ (e, respectively) in \cref{alg:generic-alg} and \cref{alg:sqrt-t} is in the same state after $r$ rounds. Furthermore, let $DIF^{(r+1)}_S$ ($DIF^{(r+1)}_e$, respectively) denote the event that $S$ ($e$, respectively) is in a different state in \cref{alg:generic-alg} compared to \cref{alg:sqrt-t} after $r+1$ rounds. Assume that $\prob{EQ^{(r)}_{S,10}} \geq \frac{1}{2}$ and $\prob{EQ^{(r)}_{e,10}} \geq \frac{1}{2}$. Then,
\[
	\prob{DIF^{(r+1)}_S\ |\ EQ^{(r)}_{S,10}} = \negthree \text{, and }
\]
\[
	\prob{DIF^{(r+1)}_e\ |\ EQ^{(r)}_{e,10}} = \negthree.
\]
\end{lemma}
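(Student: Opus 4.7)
The plan is to pinpoint the differences between \cref{alg:generic-alg} and \cref{alg:sqrt-t} and show that each such difference is exactly an instance of the bad-set or bad-element events studied in \cref{lemma:gen-bad-set} and \cref{lemma:gen-bad-element}. The two algorithms share the same random choices of $B_i(S)$ and $\cS_{i,k}$, so within an ordinary iteration both execute the rule ``add $S$ iff $S$ is in the relevant sampled family and $\degree(S) \geq s/2^i$.'' The substitution $\cS_{i,k} \to \hcS_{i,k}$ inside an iteration is harmless because $|B_i(S) \cap \cE_{i,k}|$ is monotone nonincreasing within a stage: if $S$ is filtered out of $\hcS_{i,k}$ then its start-of-phase estimate is below $s/2^i$, so its iteration-$k$ estimate is also below $s/2^i$, and \cref{alg:generic-alg} would not add it either. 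The only genuine deviations are therefore (i) the phase-boundary bad-set test at line~\ref{line:alg2_large_set}, which adds $S$ whenever $|B_{i,j}(S)| \geq \logten$, and (ii) the phase-boundary bad-element test at line~\ref{line:sqrt-t-if-high-degree-element}, which causes $e$ to pretend to be covered when $d_{\hcS_{i,k}}(e) \geq \logten \cdot 2^T$ for some $k$ inside the phase.

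For the set bound I would condition on $EQ^{(r)}_{S,10}$ and on the complement of ``$S$ is bad.'' Since $B_i(S)$ lies in the $1$-hop neighborhood of $S$, the event $EQ^{(r)}_{S,10}$ forces the relevant free-element counts at the start of round $r+1$ to coincide in the two algorithms, so $\degree(S)$ agrees and the rule at line~\ref{line:alg2_add_S} yields the same decision. Specializing the bad-set definition to $i_1 = i_2 = i$ where $i$ is the current stage, $S$ not being bad implies $|B_i(S) \cap \cE_{i,0}| < \logten$, so by monotonicity $|B_{i,j}(S)| < \logten$ at every phase boundary of stage $i$; hence the bad-set check cannot fire at the phase boundary preceding round $r+1$. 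Together these observations imply that the two algorithms make the same decision for $S$ in round $r+1$, contradicting $DIF^{(r+1)}_S$. Thus $DIF^{(r+1)}_S \cap EQ^{(r)}_{S,10}$ is contained in the event ``$S$ is bad,'' which has probability $\negfour$ by \cref{lemma:gen-bad-set}. Dividing by $\Pr[EQ^{(r)}_{S,10}] \geq 1/2$ yields the claimed conditional bound of $\negthree$.

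For the element bound the argument is analogous but invokes both types of bad events. Under $EQ^{(r)}_{e,10}$, every set $S \ni e$ together with its $B_i(S)$ lies well inside the $10$-hop neighborhood of $e$, so the paragraph above applies to each such $S$ provided it is not bad. If additionally $e$ itself is not a bad element, then for the specific indices $(i, k_1, k_2)$ probed by line~\ref{line:sqrt-t-if-high-degree-element}, namely $k_1$ equal to the current phase start and $k_2 - k_1 \leq T-1$, the trigger threshold $\logten \cdot 2^T$ strictly exceeds $\logten \cdot 2^{k_2 - k_1}$, so the pretend-covered check does not fire. Consequently $DIF^{(r+1)}_e \cap EQ^{(r)}_{e,10}$ implies that either $e$ is a bad element or some set in its $1$-hop (hence $10$-hop) neighborhood is a bad set, an event of probability $\negthree$ by \cref{lemma:gen-bad-neighborhood}; dividing by $\Pr[EQ^{(r)}_{e,10}] \geq 1/2$ preserves the $\negthree$ bound. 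The main obstacle is the neutrality verification: one must carefully argue that the filtering $\cS_{i,k} \to \hcS_{i,k}$, the use of the sparsified set $B_{i,j}(S)$ inside a phase, and the ordering of the phase-boundary operations introduce no divergence beyond the two explicit bad-event checks, so that the reduction to the bad-set and bad-element definitions is genuinely exhaustive and the quantifier ranges line up with the specific phase-boundary tests performed by \cref{alg:sqrt-t}.
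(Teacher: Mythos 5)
Your framework matches the paper's: condition on $EQ^{(r)}$, show that $DIF^{(r+1)}$ can only occur via a bad event, and divide by $\Pr[EQ^{(r)}] \geq 1/2$. Your handling of the $\cS_{i,k} \to \hcS_{i,k}$ filtering via monotonicity, and your element-side argument, are essentially the paper's. However, there is a genuine gap in the set case.

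You claim that $DIF^{(r+1)}_S \cap EQ^{(r)}_{S,10}$ is contained in the event ``$S$ is bad'' and invoke \cref{lemma:gen-bad-set}. This containment is false. The pretend-covered step at \cref{line:sqrt-t-pretend} is executed \emph{at the start of round $r+1$} (when $r+1$ opens a phase), \emph{before} the add-test in \cref{line:alg2_add_S}, and $\OracleEl_{i,j,0}$ already reports pretending elements as covered. So if some $e' \in B_i(S)$ that was still free after round $r$ is a bad element and starts pretending to be covered in round $r+1$, then $\degree(S)$ as evaluated by \cref{alg:sqrt-t} in round $r+1$ drops relative to $\degree(S)$ in \cref{alg:generic-alg}, and $S$ can be added in \cref{alg:generic-alg} but not in \cref{alg:sqrt-t} --- even though $S$ itself is not bad. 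Thus your statement that $EQ^{(r)}_{S,10}$ ``forces $\degree(S)$ to agree'' is exactly where the argument breaks: $EQ^{(r)}_{S,10}$ fixes the state \emph{after $r$ rounds}, not after the phase-boundary sparsification inside round $r+1$. The correct containment is that $DIF^{(r+1)}_S \cap EQ^{(r)}_{S,10}$ implies $S$ has a \emph{bad neighborhood}, which covers both the $S$-is-bad case and the $e'\in B_i(S)$-is-bad case, and then one applies \cref{lemma:gen-bad-neighborhood} (this is the paper's Case~(B) specialized to sets). The quantitative conclusion $\negthree$ survives because \cref{lemma:gen-bad-neighborhood} still gives $\negthree$, but as written your reasoning is incomplete. (A minor side note: dividing $\negfour$ by $\tfrac12$ does not ``yield'' $\negthree$; it yields $2\cdot\negfour$, which merely happens to be dominated by $\negthree$.)
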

\begin{proof} 
	We provide the proof for $S$, but the same argument works for $e$. Assume that the $10$-hop neighborhood of $S$ is in the same state after $r$ rounds in both algorithms. In order for $S$ to be in a different state after $r+1$ rounds, either
\begin{enumerate}[(A)]
	\item\label{item:case-S'-added-to-the-cover} there exists a set $S'$ in the $5$-hop neighborhood of $S$ such that $S'$ is added to the set cover in \cref{line:alg2_large_set} of \cref{alg:sqrt-t} during round $r+1$; or
	\item\label{item:case-e'-pretends-to-be-covered} there exists an element $e'$ in the $5$-hop neighborhood of $S$ such that $e'$ pretends to be covered in \cref{line:sqrt-t-pretend} of \cref{alg:sqrt-t} during round $r+1$.
\end{enumerate}
We consider these two cases independently.
	
\paragraph{Case~\ref{item:case-S'-added-to-the-cover}.} Let $i$ be the stage that $r$ is part of. This implies that $B_i(S')$ contains more than $\logten$ many free elements at the beginning of stage $i$ in \cref{alg:sqrt-t}. As the $5$-hop neighborhood of $S'$ is contained in the $10$-hop neighborhood of $S$, this implies that $B_i(S')$ also contains more than $\logten$ many elements in \cref{alg:generic-alg}. This implies that $S'$ is a bad set in round $r$.
	
\paragraph{Case~\ref{item:case-e'-pretends-to-be-covered}.} Observe that an element can start pretending to be covered from round $r + 1$ onwards only if $r + 1$ corresponds to the first iteration of some phase $(i,j)$. Hence, assume that $r + 1$ corresponds to the first iteration of phase $j$ in stage $i$ of \cref{alg:sqrt-t} and to iteration $k_1 = j \cdot T + 1$ in stage $i$ of \cref{alg:generic-alg}. As $e'$ pretends to be covered, $e'$ was still uncovered at the beginning of iteration $k_1$. Therefore, there exists $k_2 \in \{j \cdot T + 1, \ldots,(j+1) \cdot T\}$ such that $\cS_{i,k_2}$ had more than $\logten \cdot 2^T \geq \logten \cdot 2^{k_2-k_1}$ many sets that contain $e'$ with an estimated set size of at least  $s/2^i$ at the beginning of iteration $k_1$ in stage $i$ (see \cref{line:sqrt-t-if-high-degree-element} of \cref{alg:sqrt-t}). This implies that $e'$ is a bad element in \cref{alg:generic-alg}. 
	
\paragraph{Combining the two cases.}
	Both cases imply that $S$ has a bad neighborhood in \cref{alg:generic-alg}. Therefore, from \cref{lemma:gen-bad-neighborhood} we conclude
	\begin{align*}
		\prob{DIF^{(r+1)}_S\ |\ EQ^{(r)}_{S,10}} 
		&\leq \prob{\text{``$S$ has a bad neighborhood''}\ |\ EQ^{(r)}_{S,10}}  \\
		&\leq \frac{\prob{\text{``$S$ has a bad neighborhood''}}}{\prob{EQ^{(r)}_{S,10}}} \\
		&\leq 2 \cdot \prob{\text{``$S$ has a bad neighborhood''}} \\
		&\leq 2 \cdot \negthree.
	\end{align*}
In the same way we derive	$\prob{DIF^{(r+1)}_e\ |\ EQ^{(r)}_{e,10}} = \negthree$.
\end{proof}

\begin{lemma}\label{lemma:p_r-is-small}
	Consider a round $r \in \{0,\ldots,\log s \cdot \log t\}$. Define $T_r := 10 \cdot (\log s \cdot \log t + 1 - r)$. Let $p_r$ be equal to the maximum probability, over all sets $S$ and all elements $e$, that the $T_r$-hop neighborhood of $S$ or $e$ looks different in \cref{alg:generic-alg} compared to \cref{alg:sqrt-t} after $r$ rounds. Then, it holds
	\begin{equation}\label{eq:upper-bound-on-p_r}
		p_r \leq r \cdot c \cdot \negonenoO,
	\end{equation}
	for an absolute constant $c$.
\end{lemma}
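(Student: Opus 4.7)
The plan is to prove the claim by induction on $r$, with the inductive hypothesis being precisely \eqref{eq:upper-bound-on-p_r}. The base case $r = 0$ is immediate: both algorithms start from the same initial state (they share the same $B_i(S)$'s and $\cS_{i,k}$'s chosen at the very beginning), so after $0$ rounds the $T_0$-hop neighborhoods of every set and every element are identical in both algorithms. Hence $p_0 = 0$, which satisfies \eqref{eq:upper-bound-on-p_r}.

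For the inductive step, fix an arbitrary set $S$; the argument for an arbitrary element $e$ is identical. I want to bound the probability that its $T_{r+1}$-hop neighborhood differs after $r+1$ rounds. The key observation is the relation $T_r = T_{r+1} + 10$: for any $X$ (set or element) lying in the $T_{r+1}$-hop neighborhood of $S$, the entire $10$-hop neighborhood of $X$ is contained in the $T_r$-hop neighborhood of $S$. Hence, if $S$'s $T_r$-hop neighborhood is in the same state after $r$ rounds (which happens with probability at least $1 - p_r$ by the inductive hypothesis), then the $10$-hop neighborhood of every such $X$ is also in the same state after $r$ rounds.

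Decompose the bad event for $S$ as follows: either (A) $S$'s $T_r$-hop neighborhood differs after $r$ rounds, or (B) $S$'s $T_r$-hop neighborhood agrees after $r$ rounds but some $X$ in the $T_{r+1}$-hop neighborhood of $S$ transitions to a different state during round $r+1$. Event (A) has probability at most $p_r$ by definition. For event (B), I plan to apply \cref{lemma:sqrt-t-ten-hop} to each such $X$; the hypothesis $\prob{EQ^{(r)}_{X,10}} \geq 1/2$ required by that lemma follows from the inductive bound, since $p_r \le \log s \cdot \log t \cdot c \cdot \negonenoO \ll 1/2$ for any reasonable values of $s,t$. The lemma then yields a conditional transition probability of $\negthree$ for each $X$, and a union bound over the at most $(st)^{O(T_{r+1})} = (st)^{O(\log s \cdot \log t)}$ many sets and elements in the $T_{r+1}$-hop neighborhood of $S$ bounds event (B) by $(st)^{O(\log s \cdot \log t)} \cdot \negthree$.

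Combining the two contributions gives $p_{r+1} \leq p_r + (st)^{O(\log s \cdot \log t)} \cdot \negthree$. Since the exponent $\log^3 s \cdot \log^3 t$ dominates $O(\log s \cdot \log t)$, the extra term is at most $c \cdot \negonenoO$ for an appropriate absolute constant $c$, so $p_{r+1} \leq (r+1) \cdot c \cdot \negonenoO$, closing the induction. The one subtlety worth flagging is the radius bookkeeping: one must arrange that the $10$-hop slack built into the recurrence $T_r - T_{r+1} = 10$ exactly matches the $10$-hop neighborhood assumption of \cref{lemma:sqrt-t-ten-hop}, and that $T_r \geq 10$ throughout the induction (which holds because $r$ ranges only up to $\log s \cdot \log t$); with this alignment in place, the induction essentially writes itself.
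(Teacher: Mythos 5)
Your proof is correct and follows the paper's argument essentially verbatim: the same induction on $r$, the same decomposition into the event that the $T_r$-hop neighborhood already differed after $r$ rounds versus the event that some set or element in the $T_{r+1}$-hop neighborhood transitions during round $r+1$, the same application of \cref{lemma:sqrt-t-ten-hop} with a union bound over the $(st)^{O(\log s \cdot \log t)}$ objects in the neighborhood, and the same final telescoping bound. Your remarks on the $T_r = T_{r+1} + 10$ bookkeeping and the verification that $\prob{EQ^{(r)}_{X,10}} \ge 1/2$ make explicit two points the paper leaves implicit, but the underlying argument is the same.
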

\begin{proof}
	We proof the lemma by induction over $r$. The base case $r = 0$ holds trivially. Let $r \in \{0, \ldots,\log s \cdot \log t - 1\}$ be an arbitrary round. Assuming that \cref{eq:upper-bound-on-p_r} holds for $r$, we will show that \cref{eq:upper-bound-on-p_r} holds for $r + 1$ as well.
	
	Let $S$ be a set. Assume that the  $T_{r+1}$-hop neighborhood of $S$ looks different in \cref{alg:generic-alg} compared to \cref{alg:sqrt-t} after $r+1$ rounds. This could happen for one of the two following reasons. First, the $T_r$-hop neighborhood of $S$ looked different after $r$ rounds in \cref{alg:generic-alg} compared to \cref{alg:sqrt-t}, which happens with a probability of at most $p_r$. Second, there exists a set $S'$ (an element $e'$, respectively) in the $T_{r+1}$-hop neighborhood of $S$ such that the $10$-hop neighborhood of $S'$ ($e'$, respectively) looks the same after $r$ rounds, but $S'$ ($e'$, respectively) is not in the same state after $r+1$ rounds in the two different algorithms. Conditioned that the first case does not happen, \cref{lemma:sqrt-t-ten-hop} together with a union bound over at most $(st)^{O(\log s \cdot \log t)}$ many sets and elements in the $T_{r+1}$-hop neighborhood of $S$ implies that this case happens with a probability of at most $c \cdot \negonenoO$, where $c$ is an absolute constant. This implies that
	\begin{align*}
		p_{r+1} \leq p_r + c \cdot \negonenoO
		\leq r \cdot c \cdot \negonenoO + c \cdot \negonenoO 
		= (r+1) \cdot c \cdot \negonenoO,
	\end{align*}
as desired.
\end{proof}
We are now ready to prove the main theorem of this section.
\begin{proof}[Proof of \cref{thm:main-theorem-of-sqrt-t-approximation}]
	Let $p_S$ ($p'_S$, respectively) be the probability that $S$ is added to the set cover constructed by \cref{alg:generic-alg} (\cref{alg:sqrt-t}, respectively). \cref{lemma:p_r-is-small} implies that the $T_{\log(s) \cdot \log (t)}$-hop neighborhood of $S$ is in a different state in \cref{alg:sqrt-t} compared to \cref{alg:generic-alg} with a probability at most $p_{\log s \log t} \leq \frac{1}{s \cdot t}$. In particular, this implies that $|p_S - p'_S| \leq \frac{1}{s \cdot t}$. Together with \cref{theorem:gen-set-cover-size}, this implies that the expected size of $\cS_{\cvr}$ is at most $O(\log s) \cdot \opt + \frac{n \cdot t}{t \cdot s}$. This implies $\ee{|\cS_{\cvr}|} = O(\log s) \cdot \opt$ as desired.
\end{proof}

\newcommand{\numsetsthree}{\frac{\log^2 s \cdot \log^3 t \cdot 2^{R}}{R}}
\newcommand{\numsetsthreehalf}{\frac{\log^2 s \cdot \log^3 t \cdot 2^{R/2}}{R/2}}

\newcommand{\threshhold}{\log^5 s \log^5 t \cdot \frac{2^{k-k'}}{2^{R/2}}}

\newcommand{\sss}{\hcS_{[k',k'+R-1]}}
\newcommand{\sssplushalf}{\hcS_{[k'+R/2,k'+R-1]}}
\newcommand{\silp}{\cS'_{i,l}}

\section{Repeated Sparsification}\label{sec:log-log-t}
We now discuss how to reduce the query complexity from $(st)^{O(\log{s} \sqrt{\log{t}})}$ to  $s^{O(\log{s})} t^{O(\log{s} \cdot (\log \log{s} + \log \log{t}))}$, and hence prove \cref{thm:main-LCA}. We will first present the high-level ideas of our approach and convey the main intuition behind our algorithm, that we present as \cref{alg:main-alg,alg:rec-split,alg:base_case}.

Estimating set sizes will be still done in the same way as in \cref{alg:generic-alg}. Thus, our main focus lies on improving the sparsification of the element side. For the sake of cleaner presentation, we first review how sparsification works in \cref{alg:sqrt-t}. Let $i$ be a stage. Consider the moment when \cref{alg:sqrt-t} tests whether an element $e$ is covered or not in the iteration $\log{t}$ of stage $i$. First, note that the family $\cS_{i,\log t}$ contains all the sets in the set cover instance. Hence, $e$ might be contained in up to $t$ sets in $\cS_{i,\log t}$. However, recall that we sparsify $\cS_{i,\log t}$ at the beginning of the last phase by removing all the sets which have a small estimated number of free elements. The resulting family of sets is called $\hcS_{i,\log t}$ and $e$ only needs to check for all sets in $\hcS_{i,\log t}$ containing $e$ whether they get added to cover in iteration $\log t$ of stage $i$.

However, to obtain the relevant sets in $\hcS_{i,\log t}$, we need to estimate the degree of $t$ sets at the beginning of the last phase. This can be query-wise wasteful, in a sense that invoking oracles that answer queries about the last iteration/phase is significantly more expensive than invoking oracles which answer queries about iterations that happen early in the computation. Motivated by this observation, our goal now is to sparsify $\cS_{i,\log t}$ rather ``earlier'' than ``later''. We achieve that in multiple steps. More concretely, we first estimate the degree of all the sets that contain $e$ in $\cS_{i,\log t}$ after $\log (t)/2$ iterations. Given that $e$ is still uncovered after iteration $\log(t)/2$, we expect no more than $\sqrt{t}$ sets to have a large degree. For all of the at most $\sqrt{t}$ sets, we now estimate the degree after $\frac{3}{4}\log t$ iterations. If $e$ is still uncovered after iteration $\frac{3}{4}\log t$, we expect no more than $t^{1/4}$ of those sets to have a large degree. This process is repeated until we expect no more than $\poly(\log t)$ of the sets to have a large degree. At that point, $e$ is contained in a small number of sets that have to be tested, and we simply test for each of those sets whether they get added to the set cover during iteration $\log(t)$ of stage $i$ or not.

Now we describe the structure of \cref{alg:main-alg}. In order to execute stage $i$, \cref{alg:main-alg} calls the recursive procedure \cref{alg:rec-split}. The top-level call of \cref{alg:rec-split} executes $\log(t)$ iterations. As the input, \cref{alg:rec-split} receives the families of sets $\hcS_{1}, \ldots, \hcS_{\log t}$, which contain all the sets that will potentially be added to the set cover in iterations 1 to $\log t$. Given that, \cref{alg:rec-split} calls itself recursively to execute the first $\log(t)/2$ iterations. It passes down the families of sets  $\hcS_{1}, \ldots, \hcS_{\log(t)/2}$ that contain all the sets that potentially get added to the set cover between iterations 1 and $\log (t)/2$. Note that we expect $e$ to be contained in no more than $\sqrt{t}$ sets of $\hcS_{\log(t)/2}$. After executing the first $\log(t)/2$ iterations, we recursively call \cref{alg:rec-split} to execute the remaining $\log(t)/2$ iterations. However, we do not simply pass down the recursion families of sets $\hcS_{\log(t)/2+1}, \ldots, \hcS_{\log(t)}$. Instead, we first sparsify the families of sets by removing all the sets with an estimated degree less than $s/2^i$. Then, the resulting families of sets $\cS'_{\log(t)/2+1}, \ldots, \cS'_{\log(t)}$ get passed down to the second recursive call. Furthermore, for a free element $e$, we do not expect $e$ to be contained in more than $\sqrt{t}$ sets in any of the families of sets that get passed to the second recursive call. If we detect any such element $e'$, as before, we mark $e'$ as pretending to be covered.

\begin{algorithm}
	\caption{\label{alg:main-alg}A variant of \cref{alg:generic-alg} that can be simulated by an LCA with a query complexity of $s^{O(\log{s})} t^{O(\log{s} \cdot (\log \log{s} + \log \log{t}))}$.}
	%Let $(\cS,\cE)$ be some set cover instance.
	\For{each pair of $(i,k)$ where $i \in [\log s]$ and $k \in [\log t]$}{		
		$B_i(S)$ and $\cS_{i,k}$ are generated in the exact same way as in \cref{alg:generic-alg} 		
	}
	\For{stage i = 1 to $\log s$}{	
			\For{each set $S \in \cS$}{
				%\tcc{It is sufficient to know $B_{i}(S)$ in order to calculate $\degree(S)$ within iteration $j$ of stage $i$} 
				$\hat{B}_{i}(S) \gets B_i(S) \cap \cE_{i}$\label{line:hat-B} 
				\tcp{$\cE_{i}$ denotes the set of free elements in the beginning of the stage}
				\If(\tcp*[h]{test whether $S$ is a bad set}){$|\hat{B}_{i}(S)| \geq \logten$}{
					{\bf add} $S$ to $\cS_{\cvr}$.  \label{line:alg3_large_set}
				}
			}

	\For{each $k \in [\log (t)]$}{		 
		
		$\hcS_{i,k} \gets \{S|S \in S_{i,k}, \degree(S) \geq s/2^i\}$ 
		
				\For{each $e\in \cE_i$ in parallel\label{line:alg3-discard-high-degree-els}}{
					\If(\tcp*[h]{test whether $e$ is a bad element})
					{$d_{\hsil}(e) \geq \logten \cdot 2^T$\label{line:alg3-if-high-degree-element}}{				   			
						{\it pretend} that $e$ is covered. 
				   		\tcp{$e$ will be covered in~\cref{line:alg3-clean-up}.}\label{line:alg3-pretend}
				   	}
				}
}

		$\hcE_i \gets $ all elements in $\cE_i$ which are uncovered (and don't pretend to be covered)
		$\RecSplit(1, \log t, i, (\hcS_{i,1},\hcS_{i,2},...,{\hcS_{i,\log t}}), \hcE_i)$
	}
	\For(\tcp*[h]{Handling bad elements}){each free element $e$}{
		{\bf add} the set with smallest ID which contains $e$ to $\cS_{\cvr}$.\label{line:alg3-clean-up}
	}
\end{algorithm}

\begin{algorithm}
	\caption{$\RecSplit(k', R, i, \hcS_{[k',k'+R-1]} = (\hcS_{k'}, \hcS_{k'+1}, \ldots,\hcS_{k'+R - 1}), \hcE_i)$\label{alg:rec-split}}
  \KwIn{\\
		\quad Simulates iterations $k'$ to $k'+R-1$ of the $i$-th iteration  \\
		\quad For $k \in \{k',...,k'+R-1\}$, $\hcS_k$ consists of sets that will be added to the set cover as long as the estimated number of free elements is at least $s/2^i$ in the beginning of the $k$-th iteration of stage $i$. \\
		\quad $\hat{\cE}_i$ consists of all elements which are uncovered (and don't pretend to be covered) at the beginning of iteration $k'$
  } 
	
	\If{$R \le \log \log{t}$}{
		\textsc{BaseCase}($k', R, i, (\hcS_{k'},\hcS_{k'+1}, \ldots,\hcS_{k'+R-1}), \hcE_i$) \\
		\Return
	}

	$\RecSplit(k', R/2, i, (\hcS_{k'},\hcS_{k'+1}, \ldots,\hcS_{k'+R/2 - 1}), \hcE_i)$ \label{line:1st-rec-call}

	\For{each $k \in \{k'+R/2, \ldots,k'+R-1\}$}{		 
		
		$\cS'_{k} \gets \{S|S \in \hcS_{k}, \degree(S) \geq s/2^i\}$ 
		
		\For{each uncovered element $e$ in parallel}{
		\If(\tcp*[h]{only happens if $e$ is a bad element}){$d_{\cS'_k}(e) \geq \logten \cdot 2^{k-(k' + R/2)}$ \label{line:loglog-if-high-degree-element2}}{

			{\it pretend} that $e$ is covered. 			\label{line:alg-3-preten-2}
		}
	}
}	
	
	$\cE'_i \gets $ all uncovered elements in $\hat{\cE}_i$ 
	
	$\RecSplit(k'+R/2, R/2, i, (\cS'_{k'+R/2},\cS'_{k'+R/2+1}, \ldots,\cS'_{k'+R-1}), \cE'_i)$ \label{line:2nd-rec-call}
\end{algorithm}

\begin{algorithm}
	\caption{\textsc{BaseCase}($k', R, i, \hcS_{[k',k'+R-1]} = (\hcS_{k'},\hcS_{k'+1}, \ldots,\hcS_{k'+ R - 1}), \hcE_i)$\label{alg:base_case}}

		\For{iteration $k = k'$ to $k' + R - 1$}{
			\For{all sets S in parallel}{
				\If(\tcp*[h]{$B_i$ is computed in \cref{line:hat-B}}){$S \in \hcS_k$ is not in the cover and $\degree(S) \geq s/2^i$}{ 
					Add $S$ to the set cover
				}
			}
		}		
\end{algorithm}

\subsection{Establishing an Invariant}

\begin{invariant}\label{invariant:RecSplit}
	We define the following invariant for $\RecSplit(k', R, i, \sss, \hcE_i)$:
	\begin{enumerate}[(1)]
		\item\label{item:invariant-element-contained-in} For $l \in \{0,...,R-1\}$, each element $e$ in $\hcE_i$ is contained in at most $\logten  \cdot 2^{l+1}$ many sets of $\hcS_{k'+ l}$.
	\end{enumerate}
\end{invariant}
This invariant directly follows from \cref{line:alg3-pretend} of \cref{alg:main-alg} and \cref{line:alg-3-preten-2} of \cref{alg:rec-split} together with a simple induction argument. In particular, \cref{invariant:RecSplit} implies that each uncovered element is contained in at most $\poly(\log s) \cdot 2^{O(R)}$ many sets in $\hcS_k$ for any $k \in \{k',...,k'+R-1\}$.

\subsection{LCA Simulation of \cref{alg:rec-split} and \cref{alg:base_case}}
In this section we describe the LCA implementation of \cref{alg:rec-split} and \cref{alg:base_case}. To simplify notation, we sometimes treat $\sss$ as a set consisting of all sets $S$ which are contained in $\hcS_k$ for $k \in \{k',...,k'+R-1\}$.

For the sake of analysis, we define two oracles $\Oracle_{k',R,i}(S)$ and $\OracleEl_{k',R,i}(e)$. $\Oracle_{k',R,i}$ takes as an input a set $S$ and returns whether $S$ gets added to the set cover during the execution of $\RecSplit(k',R,i,\sss,\hcE_i)$. Similarly, $\OracleEl_{k',R,i}$ takes as an input an arbitrary element $e$ and returns whether $e$ gets covered for the first time (or pretends to be covered) during the execution of $\RecSplit(k',R,i,\sss,\hcE_i)$. In the following two paragraphs, we show how to implement the two oracles by having query access to the following type of queries for some arbitrary element $e$ or some arbitrary set $S$:
\begin{enumerate}[(1)]
	\item For some $k \in \{k', \ldots,k'+R-1\}$, is $S$ contained in $\hcS_k$?
	\item For some $k \in \{k', \ldots,k'+R-1\}$ and some $e \in \hcE_i$, give me all sets in $\hcS_k$ which contain $e$.
	\item Is $e$ contained in $\hcE_i$?
	\item Give me all elements contained in $\hat{B}_i(S)$.
\end{enumerate}

We then derive a bound for the number of query accesses the algorithm needs.

\paragraph{Base Case: Description of  $\Oracle_{k',R,i}(S)$ and $\OracleEl_{k',R,i}(e)$ ($R \leq \log \log t$)}

As $R \leq \log \log t$, we only need to focus on \cref{alg:base_case}. In the following, we assume that $S \in \sss$ and $e \in \hcE_i$, as these two cases are easy to check and $S \notin \sss$ implies that $S$ won't be added to the set cover and $e \notin \hcE_i$ implies  that $e$ was already previously covered (or pretended to be covered). Now, consider the directed bipartite graph $G = (\sss \cup \hcE_i,E_1 \cup E_2)$ with $(S,e) \in E_1$ iff $e \in \hat{B}_i(S)$ and $(e,S) \in E_2$ iff $e \in \sss$. Note that in order to decide for some set $S \in \sss$ if it gets added to the set cover, or some element $e \in \hcE_i$ whether $e$ gets covered, it is sufficient to gather all the information within the $O(\log\log t)$-hop neighborhood of $S$ or $e$, respectively. Note that the maximal out-degree of some set in $G$ is $\logten$ and \cref{invariant:RecSplit} ensures that the maximal out-degree of an element is $\poly(\log s) \cdot 2^{O(R)}$. Thus, the maximal out-degree of $G$ is $O(\poly(\log s \cdot \log t))$. We can therefore gather all the relevant information with $O(\log\log t) \cdot (\log s \cdot \log t)^{O(\log\log t)}$ many queries. This implies that we can answer $\Oracle_{k',R,i}(S)$ and $\OracleEl_{k',R,i}(e)$ with $(\log s \cdot \log t)^{O(\log\log t)}$ many queries.
\paragraph{Recursive Case: Description of  $\Oracle_{k',R,i}(S)$ and $\OracleEl_{k',R,i}(e)$ ($R > \log \log t$)}

Let $Q(R)$ denote the query complexity to answer $\Oracle_{k',R,i}(S)$ and $\OracleEl_{k',R,i}(e)$, respectively. For $R > \log\log t$, we will establish that 

 $$Q(R) \leq \poly(\log S) \cdot 2^{O(R)} \cdot Q(R/2)^2$$

For some arbitrary set $S$, we first check if $S$ gets added to the set cover during the first recursive call by invoking $\Oracle_{k',R/2,i}(S)$. Note that $\Oracle_{k',R/2,i}(S)$ assumes query access to 4 different types of queries. However, those queries are basically the same queries (slightly more restricted) as the ones we assume to have to implement oracles $\Oracle_{k',R,i}$ and $\OracleEl_{k',R,i}$. Hence, we can answer $\Oracle_{k',R/2,i}(S)$ with $Q(R/2)$ many queries. Similarly, we can decide with $Q(R/2)$ many queries if $e$ gets covered during the first recursive call.

Note that the oracles corresponding to the second recursive call assume to have access to the following queries:
\begin{enumerate}[(1)]
	\item For some $k \in \{k'+R/2,...,k'+R-1\}$, is $S$ contained in $\cS'_k$?
	\item For some $k \in \{k'+R/2,...,k'+R-1\}$ and some $e \in \cE'_i$, give me all sets in $\cS'_k$ which contain $e$.
	\item Is $e$ contained in $\cE'_i$?
	\item Give me all elements contained in $\hat{B}_i(S)$.
\end{enumerate}
To decide if $S$ is contained in $S'_k$, we first check if $S$ is contained in $\hcS_k$. If yes, then we check for each element $e$ in $\hat{B}_i(S)$ if $e$ is still uncovered after the first recursive call. Given this information, we can decide if $S$ is contained in $S'_k$. Hence, it takes $O(\poly(\log s)\poly(\log t)) \cdot Q(R/2)$ many queries to decide whether $S$ is contained in $S'_k$. \\
In order to return all the sets in $S'_k$ which contain $e$ (We only assume $e \in \hcE_i$ instead of $e \in \cE'_i$), we first get all of the at most $\poly(\log s)\cdot 2^{O(R)}$ many sets in $\hcS_k$ which contain $e$. This takes one query. Then, we check for each of them whether it is also contained in $S'_k$. Hence, we can answer this query with $\poly(\log s) \cdot 2^{O(R)}\cdot Q(R/2)$ many queries. \\
To decide if $e \in \cE'_i$, we first check if $e \in \hcE_i$. If yes, then we check if $e$ gets covered during the first recursive call. If not, then we check if $e$ pretends to be covered by checking for each of the at most $\poly(\log s) \cdot 2^{O(R)}$ many sets in $\sss$ which contain $e$ whether they get added to one or more of the $S'_k$. Thus, answering if $e \in \cE'_i$ takes $\poly(\log s) \cdot 2^{O(R)} \cdot O(\poly(\log s)\poly (\log t)) \cdot Q(R/2)$ many queries. \\
The last query does not change across different recursive calls. Hence, we can answer each of those 4 queries with at most $\poly(\log s) \cdot 2^{O(R)}$ many queries. Hence, we need $\poly(\log s) \cdot 2^{O(R)} \cdot Q(R/2)^2$ many queries to decide if a set is added to the set cover during the second recursive call or some element gets covered during the second recursive call. Thus, we can answer the oracles $\Oracle_{k',R,i}$ and $\OracleEl_{k',R,i}$ with $\poly(\log s) \cdot 2^{O(R)} \cdot Q(R/2)^2$ many queries. 

\subsubsection{Bounding the Recursion}
By the previous two subsections, there exist two constants $c_1$ and $c_2$ such that we can upper bound the query complexity as follows

\begin{equation}
Q(R) \leq\begin{cases}
(\log s \log t)^{c_1 \cdot \log\log t}, &   R \leq \log\log t\\
 (\log s \cdot 2^R)^{c_2} Q(R/2)^2, & R > \log \log t
\end{cases}
\end{equation}

We will now show by induction that there exists some constant $c$ such that for  $R \geq \frac{1}{2} \cdot \log \log t$, we have: 

$$ Q(R) \leq \rb{(\log s) \cdot 2^R}^{c \cdot (R - 1)}$$

Note that this implies $Q(\log t) \leq (\log s)^{O(\log t)} \cdot t^{\cO(\log\log t)} = t^{O(\log\log t + \log \log s)}$. 
For $c$ large enough, it is easy to see that it holds for $\frac{1}{2}\log \log t \leq R \leq \log \log t$. Now, consider some $R > \log\log(t)$. We get

\begin{align*}
	Q(R) &\leq  (\log s \cdot 2^R)^{c_2} \cdot \rb{Q(R/2)}^2 \\
	&\leq (\log s \cdot 2^R)^{c_2} \cdot \rb{\rb{\log s \cdot 2^{R/2}}^{c \cdot (R/2-1)}}^2 \\
	&\leq (\log s \cdot 2^R)^{c_2} \cdot \rb{\log s \cdot 2^{R/2}}^{c \cdot R - 2c}\\
	&\leq \rb{\log s \cdot 2^R}^{c \cdot (R-1)}
\end{align*}

as desired.

\subsection{LCA implementation and query complexity of \cref{alg:main-alg}}
For some stage $i$, we need to answer $Q(\log(t))$ many of the following queries: 

\begin{enumerate}[(1)]
	\item For some $k \in \{1,..., \log t\}$, is $S$ contained in $\hcS_{i,k}$?
	\item For some $k \in \{1,..., \log t\}$ and some $e \in \cE_i$, give me all sets in $\hcS_{i,k}$ which contain $e$.
	\item Is $e$ contained in $\hcE_i$?
	\item Give me all elements contained in $\hat{B}_i(S)$.
\end{enumerate}

Each one of those can be answered with $O(st)$ many queries to $(\cS_i,\cE_i)$ and the original set cover instance. Given that, we can ``glue'' different stages together in the same way as in \cref{alg:sqrt-t}. Thus, the query complexity of \cref{alg:main-alg} is $(\poly(st)\cdot Q(\log t))^{\log s}$. Plugging in $t^{O(\log t\log t + \log \log s)}$ for $Q(\log t)$ leads to a query complexity of $t^{O(\log s (\log\log s + \log\log t))} \cdot s^{O(\log s)} = s^{O(\log t (\log\log s + \log\log t) + \log s)}$.

\subsection{Approximation Proof}
\begin{theorem}
	Let $\cS_{\cvr}$ be the solution returned by $\cref{alg:main-alg}$. Then, $\ee{|\cS_{\cvr}|} = O(\log s) \cdot \OPT$.
\end{theorem}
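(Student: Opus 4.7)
The strategy is to mirror the proof of \cref{thm:main-theorem-of-sqrt-t-approximation}: couple \cref{alg:main-alg} with \cref{alg:generic-alg} by reusing the same random samples $B_i(S)$ and $\cS_{i,k}$, and show that for each set $S$ the probability that the two algorithms disagree on whether $S$ is added to the cover is at most $1/(st)$. Combined with \cref{theorem:gen-set-cover-size} and the fact that \cref{line:alg3-clean-up} of \cref{alg:main-alg} adds at most one set per remaining uncovered element (at most $n$ sets total), summing $|p_S - p'_S| \leq 1/(st)$ over the at most $n t$ candidate sets then yields $\ee{|\cS_{\cvr}|} \leq O(\log s) \cdot \OPT + n + nt/(st) = O(\log s) \cdot \OPT$.

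The first step is to catalog the places where \cref{alg:main-alg} can diverge from \cref{alg:generic-alg}: (i) \cref{line:alg3_large_set} of \cref{alg:main-alg}, which greedily adds $S$ when $|\hat{B}_i(S)| \geq \logten$; (ii) \cref{line:alg3-pretend} of \cref{alg:main-alg}, which marks $e$ as pretending to be covered when $d_{\hsil}(e)$ is too large; and (iii) \cref{line:alg-3-preten-2} of \cref{alg:rec-split}, which marks $e$ as pretending to be covered when $d_{\cS'_k}(e) \geq \logten \cdot 2^{k-(k'+R/2)}$. Cases (i) and (ii) are essentially identical to the divergence analysis of \cref{alg:sqrt-t} and directly correspond to bad sets/elements in the sense of \cref{sec:generic-algorithm}. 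For the genuinely new case (iii), when the check triggers at the start of iteration $k'+R/2$ of stage $i$, the element $e$ is uncovered at that moment and is contained in more than $\logten \cdot 2^{k-(k'+R/2)}$ sets of $\cS_{i,k}$ whose estimated size is at least $s/2^i$ at the start of iteration $k'+R/2$; setting $k_1 = k'+R/2$ and $k_2 = k$, this matches the definition of a bad element exactly, so \cref{lemma:gen-bad-element,lemma:gen-bad-set,lemma:gen-bad-neighborhood} apply verbatim.

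With this cataloging in place, I would prove an analog of \cref{lemma:sqrt-t-ten-hop}: conditioned on the $10$-hop neighborhood of $S$ (respectively $e$) being in the same state in both algorithms after $r$ rounds, the probability that $S$ (respectively $e$) ends round $r+1$ in a different state is at most $\negthree$, because any such divergence event forces some set or element in the $10$-hop neighborhood to be bad. Here I interpret ``round'' by flattening the recursive structure of \RecSplit, so that iteration $k'+\ell$ inside a \RecSplit call corresponds to iteration $k'+\ell$ of stage $i$ in \cref{alg:generic-alg}. An induction over rounds identical to \cref{lemma:p_r-is-small}, together with a union bound over the $(st)^{O(\log s \cdot \log t)}$ vertices in the expanding neighborhood, then bounds the final divergence probability by $1/(st)$, completing the argument. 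The main obstacle I anticipate is the bookkeeping needed to verify that the recursive structure in \cref{alg:rec-split} — in particular the nested definitions of the filtered families $\cS'_k$ and $\cE'_i$ passed down to the second recursive call — does not introduce any hidden source of divergence beyond cases (i)--(iii), and that the witnesses $(k_1, k_2)$ for each pretend event exactly fit the bad-element template defined in \cref{sec:generic-algorithm}.
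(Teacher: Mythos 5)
Your approach is essentially the same as the paper's: couple \cref{alg:main-alg} to \cref{alg:generic-alg} on the shared randomness, show that every divergence event corresponds to a bad set or bad element in the sense of \cref{sec:generic-algorithm}, and then port the inductive bound of \cref{lemma:p_r-is-small} to conclude $|p_S - p'_S| \le 1/(st)$. In fact your catalog is \emph{more} explicit than the paper's: the paper's proof just cites ``some element pretends to be covered'' as a single case, whereas you correctly distinguish \cref{line:alg3-pretend} from \cref{line:alg-3-preten-2} of \cref{alg:rec-split} and verify that the latter matches the bad-element template with witnesses $k_1 = k'+R/2$ and $k_2 = k$. That verification is the only genuinely new content relative to \cref{thm:main-theorem-of-sqrt-t-approximation}, and you get it right.

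One slip in the final accounting: you write $\ee{|\cS_{\cvr}|} \leq O(\log s)\cdot\OPT + n + nt/(st)$, introducing a standalone $+n$ term for \cref{line:alg3-clean-up}. As written this does not close, since $n$ can be as large as $s\cdot\OPT$ and is therefore not $O(\log s)\cdot\OPT$. The $+n$ should not appear. If you define $p'_S$ as the probability that $S$ is in the \emph{final} output of \cref{alg:main-alg} (cleanup included), then whenever the $10$-hop neighborhood of $S$ agrees with \cref{alg:generic-alg} after all rounds, no element of $S$ is free (the generic algorithm always covers everything), so $S$ is not a cleanup set; thus $|p_S - p'_S| \le 1/(st)$ already absorbs the cleanup and the bound is simply $O(\log s)\cdot\OPT + nt/(st) = O(\log s)\cdot\OPT$. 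Alternatively, if you prefer to bound the cleanup separately, the expected number of cleanup sets is at most the expected number of elements that remain free, which is at most $n\cdot p_{\log s \log t} \le n/(st)$, not $n$. Either fix makes the conclusion go through; the rest of your argument is sound.
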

\begin{proof}
	The proof is completely analogous to the proof of \cref{thm:main-theorem-of-sqrt-t-approximation}. The only thing to show is the following: For some set $S$ (element $e$), given that the $10$-hop neighborhood of $S$ looks the same after $r$ rounds in both algorithms, $S$ can only be in a different state after round $r+1$ in \cref{alg:generic-alg} compared to \cref{alg:main-alg} if $S$ has a bad neighborhood in \cref{alg:generic-alg}. This can be seen as follows: $S$ can only be in a different state after $r+1$ rounds if either some element $e'$ in the $5$-hop neighborhood of $S$ pretends to be covered during round $r+1$ or some set $S'$ in the $5$-hop neighborhood of $S$ gets added to the set cover in \cref{line:alg3_large_set} of \cref{alg:main-alg}. This would however imply that $S$ has a bad neighborhood in \cref{alg:generic-alg} as desired.
\end{proof}

\iffalse
\input{alg-sparsify-stages}
\fi

\section*{Acknowledgment}
The authors would like to thank Mohsen Ghaffari for his helpful comments in the various stages of this project. 	
S.~Mitrovi{\' c} was supported by the Swiss NSF grant P2ELP2\_181772, MIT-IBM Watson AI Lab and Research Collaboration Agreement No.~W1771646, and FinTech@CSAIL. A.~Vakilian was supported by the NSF grant CCF-1535851. R.~Rubinfeld was supported by the MIT-IBM Watson AI Lab and Research Collaboration Agreement No. W1771646, the NSF awards CCF-1740751, CCF-1733808, and IIS-1741137, and FinTech@CSAIL.

\bibliographystyle{abbrv}
\bibliography{ref-lca-sc}

\end{document}